\documentclass[11pt]{article}

\usepackage[utf8]{inputenc}
\usepackage[T1]{fontenc}
\usepackage{lmodern}
\usepackage[letterpaper,margin=1in]{geometry}
\usepackage{amsmath,amssymb,amsthm,amsfonts,mathtools}
\usepackage{bm}
\usepackage[table]{xcolor}
\usepackage{array,longtable}
\usepackage{hyperref}
\usepackage{aliascnt}
\usepackage[capitalise]{cleveref}

\newtheorem{theorem}{Theorem}[section]
\newtheorem{theoremA}{Theorem}

\newaliascnt{lemma}{theorem}
\newtheorem{lemma}[lemma]{Lemma}
\aliascntresetthe{lemma}

\newaliascnt{corollary}{theorem}
\newtheorem{corollary}[corollary]{Corollary}
\aliascntresetthe{corollary}

\newaliascnt{proposition}{theorem}
\newtheorem{proposition}[proposition]{Proposition}
\aliascntresetthe{proposition}

\newtheorem{remark}[theorem]{Remark}

\crefname{theorem}{theorem}{theorems}
\Crefname{theorem}{Theorem}{Theorems}
\crefname{lemma}{lemma}{lemmas}
\Crefname{lemma}{Lemma}{Lemmas}
\crefname{proposition}{proposition}{propositions}
\Crefname{proposition}{Proposition}{Propositions}
\crefname{corollary}{corollary}{corollaries}
\Crefname{corollary}{Corollary}{Corollaries}
\crefname{definition}{definition}{definitions}
\Crefname{definition}{Definition}{Definitions}
\crefname{remark}{remark}{remarks}
\Crefname{remark}{Remark}{Remarks}

\DeclareMathOperator{\wt}{wt}
\DeclareMathOperator{\vars}{vars}

\newcommand{\abs}[1]{\lvert #1 \rvert}
\newcommand{\Ball}[2]{B(#1,#2)}

\newcommand{\indA}{{\!\scriptscriptstyle A}}
\newcommand{\mDelta}{\Gamma}
\newcommand{\indB}{{\!\scriptscriptstyle B}}
\newcommand{\av}{\hspace{0.05em}\mathrm{light}}
\newcommand{\LightBall}[3]{\mathcal{B}_{#1}(#2,#3)}

\newcolumntype{L}[1]{>{\raggedright\arraybackslash}p{#1}}

\begin{document}

\title{Dequantizing Short-Path Quantum Algorithms}
\date{}
    \author{
    Fran{\c c}ois Le Gall\thanks{Graduate School of Mathematics, Nagoya University.}
    \and
    Suguru Tamaki\thanks{Graduate School of Information Science, University of Hyogo.}
 }     

\pagenumbering{roman}

\maketitle

% ==================================================
\begin{abstract}
The short-path quantum algorithm introduced by Hastings (Quantum 2018, 2019) is a variant of adiabatic quantum algorithms that enables an easier worst-case analysis by avoiding the need to control the spectral gap along a long adiabatic path. Dalzell, Pancotti, Campbell, and Brand\~{a}o (STOC 2023) recently revisited this framework and obtained a clear analysis of the complexity of the short-path algorithm for several classes of constraint satisfaction problems (MAX-$k$-CSPs), leading to quantum algorithms with complexity $2^{(1-c)n/2}$ for some constant $c>0$. This suggested a super-quadratic quantum advantage over classical algorithms.

In this work, we identify an explicit classical mechanism underlying a substantial part of this line of work, and show that it leads to clean dequantizations. As a consequence, we obtain classical algorithms that run in time $2^{(1-c')n}$, for some constant $c'>c$, for the same classes of constraint satisfaction problems. This shows that current short-path quantum algorithms for these problems do not achieve a super-quadratic advantage. On the positive side, our results provide a new ``quantum-inspired'' approach to designing classical algorithms for important classes of constraint satisfaction problems.
\end{abstract}

\newpage
\tableofcontents
\newpage

\pagenumbering{arabic}

% ==================================================
% Section 1: Introduction
% ==================================================
\section{Introduction}\label{sec:introduction}
\subsection{Background}
Grover search \cite{Grover96} and amplitude amplification \cite{BrassardHMT02} provide the standard quantum baseline for unstructured search and exact optimization. Relative to exhaustive search over $2^n$ candidates, this baseline yields a quantum running time of $2^{n/2}$, up to polynomial factors. A central question in quantum optimization is therefore to identify structural mechanisms that achieve a running time of the form
\[ 
2^{(1-c)n/2}
\]
for some explicit constant $c>0$.

\paragraph{Short-path quantum algorithms.}
One influential line of work in this direction is the short-path quantum algorithm introduced by Hastings~\cite{Hastings18video,Hastings18short,Hastings18weaker,Hastings19duality,Hastings2019short}. 

Hastings’ work is inspired by adiabatic quantum algorithms (see, e.g., \cite{AlbashLidar2018,Altshuler2010,Farhi2000,HenYoung2011,Wecker2016,YoungKnyshSmelyanskiy2010}).
Adiabatic quantum algorithms prepare a quantum system in the ground state of a simple initial Hamiltonian $H_0$, and then slowly evolve it into a problem Hamiltonian $H_1$ whose ground state encodes the solution to the optimization problem. By the adiabatic theorem \cite{AlbashLidar2018,Farhi2000}, if the evolution is sufficiently slow (relative to the inverse square of the minimum spectral gap), the system remains in the ground state throughout the evolution, so that measuring at the end yields the desired solution. When analyzing the performance of an adiabatic quantum algorithm, the main challenge is to establish a lower bound on the spectral gap that holds throughout the entire evolution. 

Hastings’ short-path algorithm avoids the need to control the spectral gap along a long adiabatic path by instead evolving only for a short time and then ``jumping to the end’’ via measurement. The evolution is designed so as to amplify the amplitude on low-energy states of the problem Hamiltonian, so that a measurement (possibly combined with repetition or amplitude amplification) yields a good solution with non-negligible probability.

\paragraph{Short-path quantum algorithms for MAX-E$\boldsymbol{k}$-LIN2.}
Hastings’ main work \cite{Hastings18short,Hastings18weaker,Hastings19duality} focuses on the optimization problem MAX-E\(k\)-LIN2 (Maximum Exact-$k$ Linear Equations mod~2) for $k\ge 2$. This problem asks to minimize the value of a degree-$k$ polynomial 
\[ 
    H(x) = p(x_1,\ldots,x_n) 
\]
over $x\in\{-1,1\}^n$, where $p$ consists only of monomials of degree $k$ involving exactly~$k$ variables (the coefficients can be arbitrary real numbers). Ref.~\cite{Hastings18short} showed that, under certain conditions, the short-path quantum algorithm achieves a running time of the form $2^{(1-c)n/2}$ for some constant $c>0$. The main conditions are (i) a condition on the density of low-energy states and (ii) a condition on the degeneracy of the ground state. Hastings later showed that the latter condition can be removed \cite{Hastings18weaker}, leaving only the low-energy-density condition.

Dalzell, Pancotti, Campbell, and Brandão~\cite{DalzellPCB23} recently revisited this framework and obtained a clean characterization of the complexity of the short-path algorithm for MAX-E$k$-LIN2. In particular, they reformulated the low-energy-density condition as follows. For some parameter $\gamma \in (0,1)$, consider the threshold set
\[ 
    T_\eta = \{x : H(x) \le (1-\eta) H_{\mathrm{min}}\},
\]
where $H_{\mathrm{min}} \le 0$ denotes the optimum value. The low-energy-density condition posits that there exist constants $\gamma,\eta \in (0,1)$ such that
\begin{equation}\label{eq:cond}
|T_\eta| \le 2^{(1-\gamma)n}\tag{$\star$}\,.
\end{equation}
Ref.~\cite{DalzellPCB23} shows that, under this condition, the short-path framework\footnote{While the algorithm from \cite{DalzellPCB23} and its analysis are inspired by \cite{Hastings18short}, the algorithms are not identical. Informally, the algorithm from \cite{DalzellPCB23} makes a small jump and then a large jump, while the algorithms from \cite{Hastings18short,Hastings18weaker} make a large jump and then a small jump.} yields a quantum algorithm with running time of the form $2^{(1-c)n/2}$ for some constant $c>0$ (the formal statement appears as \Cref{th6} below).

\paragraph{Short-path quantum algorithms for MAX-\(\boldsymbol{k}\)-CSP.}
The second main result in \cite{DalzellPCB23} shows how to remove Condition~(\ref{eq:cond}). This comes at the cost of introducing new quantities in the complexity (details are given later). This result applies not only to MAX-E$k$-LIN2, but also to the more general class of problems called MAX-$k$-CSP, which we now define in the framework of Hamiltonian minimization.\footnote{
Although all the results in the present paper are established for the more general weighted MAX-$k$-CSP, in this introduction we consider, for simplicity, only the unweighted version, as in \cite{DalzellPCB23}.}

An instance of (unweighted) MAX-$k$-CSP is defined by the Hamiltonian
\[ 
H(x) = \sum_{j=1}^m \mathcal{C}_j(x),
\]
where each $\mathcal{C}_j$ is a constraint satisfying the following properties: 
\begin{itemize}
\item[(i)]
it depends on $k_j$ of the $n$ variables, for some $k_j\in\{1,\ldots,k\}$;
%(we denote by $k_j \le k$ the number of bits on which it depends);
\item[(ii)]
it takes the value $-1$ on $s_j$ assignments (the ``satisfying assignments'') of these $k_j$ variables, for some $s_j\in \{1,\ldots,2^{k_j} - 1\}$;
\item[(iii)] it takes the value $s_j/(2^{k_j} - s_j)$ on the remaining $2^{k_j} - s_j$ assignments (the ``unsatisfying assignments''), so that the average value of the constraint over all $2^{k_j}$ assignments is zero.
\end{itemize}
% (i) it depends on at most $k$ of the $n$ bits of the input $x$ (we denote by $k_j \le k$ the number of bits on which it depends), and (ii) it takes the value $-1$ on $s_j \in [1,2^{k_j} - 1]$ (“satisfying”) assignments to those bits and the value $s_j/(2^{k_j} - s_j)$ on the remaining $2^{k_j} - s_j$ (“unsatisfying”) assignments, so that the average of each term over all $2^{k_j}$ assignments of these bits is zero. 
We again denote by $H_{\mathrm{min}}$ the minimum value of $H(x)$ over all assignments $x$. 

For example, for MAX-E3-SAT (the version of MAX-3-SAT in which each clause contains exactly 3 variables), each constraint takes value $-1$ on the 7 satisfying assignments of the three variables, and value $7$ on the unsatisfying assignment of the three variables. Thus, we have $H_{\mathrm{min}}=7m-8m^\ast$, where $m^\ast$ denotes the maximum number of satisfiable clauses in the formula. Since $m^\ast\ge 7m/8$ (as can be seen by considering a random assignment), we have $H_{\mathrm{min}}\in[-m,0]$.

Ref.~\cite{DalzellPCB23} shows that the complexity of the short-path quantum algorithm depends on two quantities: one representing the regularity of the instance, and another representing how small $H_{\mathrm{min}}$ is (i.e., in the case of MAX-E3-SAT, how large $m^\ast$ is). For sufficiently regular instances with sufficiently small 
$H_{\mathrm{min}}$—which is arguably the most interesting case—MAX-$k$-CSP can be solved by a quantum algorithm in time $2^{(1-c)n/2}$ for some constant $c>0$ (the formal statement appears as \Cref{th7} below).

\paragraph{Main question.}
These recent results demonstrate the potential of ``super-quadratic’’ speedups by the short-path quantum algorithm for fundamental classes of constraint satisfaction problems. To substantiate this claim, however, the following question needs to be answered:

\begin{center}
\emph{What is the performance of classical algorithms under the same conditions?}
\end{center}

\subsection{Statement of our results}\label{sub:results}
In this work, we identify a classical mechanism underlying the analysis of short-path quantum algorithms in \cite{DalzellPCB23,Hastings18short,Hastings18weaker,Hastings19duality}, and show that these quantum algorithms admit clean dequantizations.

As a consequence, we obtain classical algorithms that run in time $2^{(1-c)n}$ for some constant $c > 0$ under the same conditions as the quantum algorithms in \cite{DalzellPCB23}.\footnote{The same conclusion also applies under the conditions used in the analysis of the short-path quantum algorithm for MAX-E\(k\)-LIN2 carried out in Hastings’ work~\cite{Hastings18short,Hastings18weaker}.
%, which preceded the algorithm of~\cite{DalzellPCB23}. 
It is nevertheless difficult to compare our constant~\(c\) with the corresponding constants in~\cite{Hastings18short,Hastings18weaker}, since these are not given in fully explicit form.} Furthermore, our constant $c$ improves upon the corresponding constant in \cite{DalzellPCB23}. This shows that current short-path quantum algorithms for constraint satisfaction problems do not achieve a super-quadratic advantage. On the positive side, our results provide a new ``quantum-inspired’’ approach to designing classical algorithms for important classes of constraint satisfaction problems.

%======
\paragraph{Our first main result: classical algorithm for MAX-E$\boldsymbol{k}$-LIN2.}
%======
The formal statement of the first main result from \cite{DalzellPCB23} is as follows.\footnote{In this paper the notation $O^\ast(\cdot)$ removes $2^{o(n)}$ factors.} 
\begin{theoremA}[Theorem 6 in \cite{DalzellPCB23}]\label{th6}
Assume that $k\le n/6$.
For any parameters $\gamma\in[(1+4\log_2 n)/n,1]$ and $\eta\in[0,1]$,
there is a quantum algorithm which, for any instance of MAX-E\(k\)-LIN2 such that Condition \eqref{eq:cond} holds, produces an optimal solution with high probability in time
\[
O^\ast\left(2^{(1-c^{\mathrm{q}}_\indA)n/2}\right)\,,\,\,\textrm{where }\,\,
c^{\mathrm{q}}_\indA=\frac{1}{2(2+\ln 2)}\frac{\gamma\eta}{k}\approx 0.1856\frac{\gamma\eta}{k}.
\]
\end{theoremA}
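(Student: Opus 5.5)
The plan is to follow the short-path template of Hastings, in the two-jump form of \cite{DalzellPCB23}. We interpolate from the transverse field $B=\frac1n\sum_i X_i$ toward the normalized problem Hamiltonian $\hat H=H/\abs{H_{\mathrm{min}}}$ using the family
\[
H_b=-B+b\,\hat H ,\qquad b\ge 0,
\]
which stops at a \emph{small} value of $b$ rather than following a full adiabatic path. The algorithm has two stages. First it jumps from $\ket{+}^{\otimes n}$, the ground state of $H_0$, to the ground state $\ket{\psi_b}$ of $H_b$. It does this with amplitude amplification wrapped around phase estimation of $H_b$, so the cost is roughly the inverse overlap $1/\abs{\langle +^n|\psi_b\rangle}$ times the inverse spectral gap of $H_b$. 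Second, it measures $\ket{\psi_b}$ in the computational basis and amplifies the probability of seeing an optimal $x^\ast$. The total cost is the product of the two inverse overlaps, and the goal is to show that this product is at most $2^{(1-c^{\mathrm q}_\indA)n/2}$.

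Step 1 bounds the ground energy of $H_b$. Because $H$ is a homogeneous degree-$k$ multilinear polynomial, the energy of $H$ averaged over a product state tilted toward $x^\ast$ by angle $\theta$ is exactly $(\cos\theta)^k H_{\mathrm{min}}$. Using this product state as a trial state gives
\[
E_0(b)\le -\cos\theta-b(\cos\theta)^k
\]
for every $\theta$. For $b$ of order $1/k$ this lies strictly below $-1$ by an amount $\Omega(b^2k)$. Step 2 uses a Chernoff/Markov-type estimate to show that $\ket{\psi_b}$ puts at least inverse-polynomial weight on $T_\eta$. Outside $T_\eta$ the potential $b\hat H$ is at least $-b(1-\eta)$, and $\norm{B}\le1$, so any state supported mostly outside $T_\eta$ has energy at least $-1-b(1-\eta)$. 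Choosing $b$ so that the Step 1 bound beats this value forces the needed weight onto $T_\eta$. Step 3 then controls the overlap with $\ket{+}^{\otimes n}$. Condition \eqref{eq:cond} says the weight on $T_\eta$ is spread over at most $2^{(1-\gamma)n}$ strings. Combining this with the fact that $\ket{\psi_b}$ has nonnegative amplitudes (Perron--Frobenius, since $-B$ is stoquastic) gives a lower bound of order $2^{-(1-\gamma)n/2}$ on $\langle +^n|\psi_b\rangle$, up to a correction controlled by $b$. Step 4 bounds the probability of $x^\ast$ in $\ket{\psi_b}$. I would compare $\ket{\psi_b}$ with the ground state of the single-site-tilted problem, which has cost $2^{n h(\cdot)}$ with $h$ the binary entropy, so the probability of $x^\ast$ decays only like $2^{-(1-\Omega(b))n}$.

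The optimization over $b$ and $\theta$ then yields the exponent. Balancing the entropy gain in Step 4 against the energy margin $b\eta$ from Steps 1--2 produces a saving linear in $\gamma\eta/k$. The constant $\frac{1}{2(2+\ln2)}$ should come out of this balance, with the $\ln 2$ arising from converting the entropy estimate between natural and binary logarithms. The assumptions $k\le n/6$ and $\gamma\ge(1+4\log_2 n)/n$ are there only to absorb polynomial factors and the $\cos^k$ approximation error into $O^\ast(\cdot)$.

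I expect the main obstacle to be the spectral gap of $H_b$, which phase estimation and the jump in Step 1 both require. Gap control is exactly what the short-path approach is meant to avoid, so it must be handled by working only at small $b$. There I would argue perturbatively from the gap $2/n$ of $-B$: a perturbation of norm $b$ cannot close a gap of $\Omega(1/n)$ in the symmetric low-energy sector once $b=O(1/k)$ is small enough. This costs only polynomial factors and is therefore hidden in $O^\ast$. If the perturbative gap argument fails, my fallback is to replace phase estimation with a direct Grover search over the projected low-energy subspace of $H_b$.
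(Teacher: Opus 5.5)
This theorem is quoted from \cite{DalzellPCB23}. The paper does not reprove it; it only records the shape of that argument: Condition~\eqref{eq:cond} is used to bound the norm of the short-path Hamiltonian, the gain comes from a subdepolarizing (overlap-propagation) property, and the algorithm makes a small jump followed by a large one. Your two-jump architecture matches this, but the proposal has a genuine gap.

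\emph{The bookkeeping does not beat Grover.} You take the cost to be the product of the two inverse overlaps. Step~3 gives $\lvert\langle +^n|\psi_b\rangle\rvert^{-1}\approx 2^{(1-\gamma)n/2}$ and Step~4 gives $\lvert\langle x^\ast|\psi_b\rangle\rvert^{-1}\approx 2^{(1-\Omega(b))n/2}$, so the cost is about $2^{(2-\gamma-\Omega(b))n/2}$, far above Grover's $2^{n/2}$. The product of the two overlaps must exceed $\lvert\langle +^n|x^\ast\rangle\rvert=2^{-n/2}$ by a factor $2^{cn/2}$. In \cite{DalzellPCB23} this is arranged by keeping the first overlap only mildly small and putting the saving into $\lvert\langle x^\ast|\psi_b\rangle\rvert$.

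\emph{Step~3 uses \eqref{eq:cond} in the wrong direction.} With nonnegative amplitudes and $\lVert\psi_b\rVert=1$, Cauchy--Schwarz gives only the \emph{upper} bound $2^{-n/2}\sum_{x\in T_\eta}\psi_b(x)\le 2^{-n/2}\lvert T_\eta\rvert^{1/2}\le 2^{-\gamma n/2}$ on the part of the overlap coming from $T_\eta$. A lower bound would need $\psi_b$ to be spread over many strings, which is the opposite of what \eqref{eq:cond} says. What thinness does give is Hastings' norm estimate. The eigenvectors of $B$ with at most $\ell$ excitations are the flat states $2^{-n/2}\sum_x\chi_S(x)|x\rangle$, with $\chi_S(x)=\prod_{i\in S}x_i$ and $\lvert S\rvert\le\ell$. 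Writing $P_{\le\ell}$ for the projector onto their span and $\Pi_{T_\eta}$ for the projector onto $T_\eta$, one gets $\lVert P_{\le\ell}\Pi_{T_\eta}\rVert^2\le\lvert T_\eta\rvert\,2^{-n}\sum_{j\le\ell}\binom{n}{j}$. This limits how far a potential that (essentially) vanishes off $T_\eta$ can pull the ground state away from $|+\rangle^{\otimes n}$, which is what keeps the first jump small. It requires truncating $\hat H$ (or, as in Hastings, sharpening it), since \eqref{eq:cond} says nothing about your untruncated $\hat H$ off $T_\eta$.

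\emph{Steps~1 and~2.} A product state with $x_i^\ast\langle Z_i\rangle=\cos\theta$ has $\lvert\langle X_i\rangle\rvert\le\sin\theta$. So the best trial energy is $-\sin\theta-b\cos^k\theta$, not $-\cos\theta-b\cos^k\theta$; at $\theta=0$ the latter would need a state equal to both $|+\rangle^{\otimes n}$ and $|x^\ast\rangle$. Its derivative $\cos\theta\,(bk\cos^{k-2}\theta\sin\theta-1)$ is negative on $[0,\pi/2)$ whenever $bk<1$. Hence the minimum is $-1$, attained at $|+\rangle^{\otimes n}$, and there is no $\Omega(b^2k)$ gain. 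More generally, $E_0'(0)=\langle +^n|\hat H|+^n\rangle=0$, so for all sufficiently small $b$ the drop below $-1$ is smaller than $b(1-\eta)$ (when $\eta<1$). Step~2's energy comparison therefore cannot fire in the small-$b$ regime that your gap argument requires.

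\emph{Step~4 and the constant.} Step~4, which must carry the entire speedup, is only asserted. Energy comparisons with a tilted product state do not lower-bound the single amplitude $\psi_b(x^\ast)$. This is exactly where \cite{DalzellPCB23} use the subdepolarizing property; for homogeneous degree-$k$ objectives it is the identity $\mathbb E[H(X)]=\rho^kH_{\mathrm{min}}$ of \Cref{prop:ns-contraction}. The constant $\frac{1}{2(2+\ln 2)}$ is never actually derived.

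\emph{The gap argument.} Weyl's inequality protects the gap $2/n$ of $-B$ against a perturbation of norm $\lVert b\hat H\rVert\ge b$ only when $b\lVert\hat H\rVert<1/n$, hence only for $b<1/n$, not for $b=O(1/k)$. Also, $H_b$ has no permutation-symmetric sector to restrict to, since $\hat H$ is not symmetric.
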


We obtain the following result.

\begin{theorem}[Simplified version of \Cref{th:concrete-eklin2}]\label{th6c}
%For any parameters $\gamma\in[(1+4\log_2 n)/n,1]$ and $\eta\in[0,1]$,
%there is a classical algorithm which, for any instance of MAX-E\(k\)-LIN2 with $n\ge 6k$ such that \cref{eq:cond} holds, 
Under the same conditions as in \Cref{th6}, there is a classical algorithm which, for any instance of MAX-E\(k\)-LIN2 such that Condition \eqref{eq:cond} holds, 
produces an optimal solution with high probability in time
\[
O^\ast\left(2^{(1-c^{\mathrm{cl}}_\indA)n}\right)\,,\,\,\textrm{where }\,\,
c^{\mathrm{cl}}_\indA\ge \min\!\left\{\gamma,h\!\left(\frac{\eta}{2k}\right)\right\}\,,
\]
where $h(\cdot)$ denotes the binary entropy.
\end{theorem}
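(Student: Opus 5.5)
The plan is to turn the short-path mechanism into a classical ``sample, filter, then search locally'' procedure. The algorithm samples uniformly random assignments $y$ and keeps only those lying in the low-energy set $T_\eta$. From each kept $y$ it exhaustively searches the Hamming ball $\Ball{y}{r}$ of radius $r\approx \eta n/(2k)$. Two cost terms then give the running time: about $2^{n}/\abs{\Ball{\cdot}{r}}\approx 2^{(1-h(\eta/2k))n}$ samples are needed, and since few samples survive the filter, the total ball-search work is about $2^{(1-\gamma)n}$. Their maximum is $O^\ast(2^{(1-\min\{\gamma,h(\eta/2k)\})n})$.

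\emph{Step 1 (averaging lemma).} Fix an optimal $x^\ast$ and write $y=x^\ast\oplus S$ with $\abs{S}=r$ uniformly random. Each degree-$k$ monomial changes sign exactly when $\abs{S\cap\mathrm{supp}}$ is odd. Hence $\mathbb{E}[H(y)]=\lambda_r H_{\min}$, where $\lambda_r=K_k(r)/\binom{n}{k}$ is a normalized Krawtchouk value with $\lambda_r\approx(1-2r/n)^k\ge 1-2kr/n$. This is where the assumptions that every monomial has exact degree $k$ and that $k\le n/6$ are used, the latter to keep the Krawtchouk estimate clean. Because $H(y)-H_{\min}\ge 0$ and its mean is at most $(1-\lambda_r)\abs{H_{\min}}$, Markov's inequality shows that a constant (or at worst $1/\mathrm{poly}(n)$) fraction of the sphere $S(x^\ast,r)$ lies in $T_\eta$. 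Thus the set $A=\{y\in T_\eta:\ x^\ast\in \Ball{y}{r}\}$ satisfies $\abs{A}\ge 2^{h(r/n)n-o(n)}$.

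\emph{Step 2 (algorithm and analysis).} Repeat $N=2^{(1-h(r/n))n+o(n)}$ times: draw $y$ uniformly and compute $H(y)$ in polynomial time. If $H(y)\le(1-\eta)H_{\min}$, enumerate $\Ball{y}{r}$ and record the best assignment found. By Step 1, some sample lands in $A$ with high probability, and that sample's ball contains $x^\ast$. The expected number of samples passing the filter is $N\abs{T_\eta}/2^n\le 2^{(1-h-\gamma)n+o(n)}$, by Condition~\eqref{eq:cond}, and each costs $2^{hn+o(n)}$. The total is therefore $N+2^{(1-\gamma)n+o(n)}$. Markov's inequality on the running time, together with repetition, converts this expected bound into a high-probability bound.

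\emph{Step 3 (unknown $H_{\min}$).} The filter needs $H_{\min}$. I would handle this by trying a polynomial grid of guesses $\tilde H$ for the threshold $(1-\eta)\tilde H$. Alternatively, I would run the search with a threshold that decreases as better solutions are found, noting that a guess that is too high only enlarges the filtered set by an amount controllable via a slightly smaller $\eta$. The slack can be absorbed into $O^\ast$, or into a tiny loss in $\eta$ that still gives $c^{\mathrm{cl}}_\indA\ge\min\{\gamma,h(\eta/2k)\}$ as stated.

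\emph{Main obstacle.} The delicate point is Step 1 at the radius $r=\eta n/(2k)$ itself. The plain Markov argument with $1-\lambda_r\le\eta$ guarantees a nonnegligible fraction in $T_\eta$ only at a somewhat smaller radius, or for $T_{\eta'}$ with $\eta'<\eta$. I would first try to use the exact Krawtchouk value, which is strictly better than $1-2kr/n$, and to accept a subexponential fraction, which suffices since only $2^{-o(n)}$ is needed. Failing that, I would choose $r$ slightly below $\eta n/(2k)$ and use continuity of $h$ to recover the bound $h(\eta/2k)$ up to an arbitrarily small loss in the constant.
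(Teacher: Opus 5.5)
Your proposal is correct and has the same architecture as the paper: uniform sampling, filtering by $T_\eta$, and exhaustive search of a Hamming ball, with the successful set obtained from Markov's inequality applied to a random perturbation of $x^\ast$. The difference is in how you build the successful-set lemma.

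The paper flips the bits of $x^\ast$ independently, uses the exact identity $\mathbb E[H(X)]=\rho^k H_{\mathrm{min}}$, and tunes $\rho^k=1-\eta+\eta/n$. This reaches the larger flip rate $q_\eta=(1-(1-\eta)^{1/k})/2\ge \eta/(2k)$, and hence the sharper exponent $\min\{\gamma,h(q_\eta)\}$ of \Cref{th:concrete-eklin2}. The price is a Hoeffding/Stirling typical-shell step (\Cref{lem:ns-shell-estimates}) to turn a probability into a count.

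You instead average over the uniform sphere, which gives a count directly. With $M$ the support of a monomial, you use $1-\lambda_r=2\Pr[\abs{S\cap M}\text{ odd}]\le 2\,\mathbb E\abs{S\cap M}=2kr/n$. Cite this union bound rather than $\lambda_r\approx(1-2r/n)^k$, which is not a lower bound: for $k=2$ the sphere average is $((n-2r)^2-n)/(n(n-1))<(1-2r/n)^2$. Being linear in $r$, your bound only reaches radius $\eta n/(2k)$, i.e., exactly the weaker exponent $h(\eta/(2k))$ that is claimed.

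Your ``main obstacle'' disappears with the same $\eta/n$ slack the paper uses. Take $r=\lfloor \eta(n-1)/(2k)\rfloor$; then $1-\lambda_r\le \eta(1-1/n)$, so at least a $1/n$ fraction of the sphere lies in $T_\eta$. Meanwhile $n\,h(r/n)\ge n\,h(\eta/(2k))-O_{k,\eta}(1)$, so no constant is lost. Your fallback of an ``arbitrarily small loss in the constant'' would not prove the statement as written, but it is not needed. For $k\ge 2$, even $r=\lfloor \eta n/(2k)\rfloor$ gives a constant fraction, since the union bound is loose by at least $4\Pr[\abs{S\cap M}\ge 2]\ge 4r(r-1)/(n(n-1))$.

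Finally, for unknown $H_{\mathrm{min}}$ with real coefficients, the paper avoids guessing altogether by keeping the $K$ lowest-energy samples (\Cref{thm:generic-top-k}). Your grid of guesses can be made to work, but your sketch leaves two requirements implicit. You need a multiplicatively fine grid, so that rounding costs only $O(1/n)$ in $\eta$. You also need a time cap on each run, since a guess that is too high can make the filtered set far larger than $T_\eta$.
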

Note that since $\min\{a,b\}\ge ab$ for any $a,b\in[0,1]$ and 
since $h(x)\ge 2x$ for \(x\in[0,1/2]\), we have
\[
\min\!\left\{\gamma,h\!\left(\frac{\eta}{2k}\right)\right\}
\ge
\gamma\,h\!\left(\frac{\eta}{2k}\right)
\ge
\frac{\gamma\eta}{k}.
\]
In addition to having a larger leading constant, $c^{\mathrm{cl}}_\indA$ can be asymptotically larger than $c^{\mathrm{q}}_\indA$
since its dependence on $\gamma\eta/k$
is entropic instead of linear. Indeed, 
\[
h(x)\sim x\log_2\!\frac1x
\qquad (x\to 0).
\]

\paragraph{Our second main result: classical algorithm for  MAX-$\boldsymbol{k}$-CSP.}
We first give the formal statement obtained for MAX-$k$-CSP in \cite{DalzellPCB23}. 
%We first give the formal statement obtained for MAX-$k$-CSPs in \cite{DalzellPCB23}.
The complexity is expressed using two quantities that represent properties of the input instance. The first parameter, denoted $D$, represents the \emph{irregularity} of the instance:
\[
    D = \frac{n \sum_{j=1}^n d_j^2}{k^2 m^2}\,, 
\]
where $d_j$ is the number of constraints that involve variable $j$. Note that $D=1$ for a perfectly balanced instance (since in this case $d_j=km/n$ for each $j$). The second parameter, which we call in this paper the \emph{normalized optimum scale} of the instance, is
\[
\Delta = \frac{|H_{\mathrm{min}}|}{m}\,.
\]
For instance, for MAX-E3-SAT we have $\Delta=(8m^\ast-7m)/m$.
Note that we have $\Delta=1$ if all constraints can be satisfied, and $\Delta=0$ if the optimal value is exactly the same as the expected value of a random assignment. 
The parameter $\Delta$ thus represents the (normalized) gap between the minimum value and the expected value of a random assignment.
%For instance, for MAX-E3-SAT we have $\Delta=(7m-8m^\ast)/m$, which represents the (normalized) gap between the maximal number of satisfiable clauses and the value $7m/8$ corresponding to a random assignment.

\begin{theoremA}[Theorem 7 in \cite{DalzellPCB23}]\label{th7}
There is a quantum algorithm which, for any instance of unweighted MAX-\(k\)-CSP, produces an optimal solution with high probability in time
\[
O^\ast\left(2^{(1-c^{\mathrm{q}}_\indB)n/2}\right)\,,\,\,\textrm{where }\,\,
c^{\mathrm{q}}_\indB
=
% 0.0578\,
% \frac{1}{2^{3k}k^3D}
% \left(\frac{|H_{\mathrm{min}}|}{m}\right)^3\,.
0.0578\cdot\frac{\Delta^3}{2^{3k}k^3D}\,.
\]
\end{theoremA}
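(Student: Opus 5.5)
The plan is to obtain \Cref{th7} from a ``low-energy density'' analysis in the spirit of \Cref{th6}, with Condition~\eqref{eq:cond} replaced by two facts that hold for every MAX-$k$-CSP instance. The two facts are a global concentration bound showing that few assignments have very low energy, and a local bound showing that many assignments near an optimum have low energy. Throughout, $x^\ast$ denotes an optimal assignment and $\eta\in(0,1)$ is a parameter to be fixed at the end.

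\emph{Step 1 (global sparsity of low-energy states).} For uniformly random $x$ we have $\mathbb{E}[H(x)]=0$, since every constraint has zero average. Flipping variable $j$ changes $H$ by at most $O(2^k d_j)$, because each constraint value lies in $[-1,2^{k}]$. McDiarmid's bounded-differences inequality then gives
\[
\Pr_x\bigl[H(x)\le (1-\eta)H_{\mathrm{min}}\bigr]\le \exp\!\Bigl(-\frac{\Omega\bigl((1-\eta)^2\Delta^2 m^2\bigr)}{2^{2k}\sum_j d_j^2}\Bigr) = 2^{-\gamma n},
\]
with $\gamma=\Omega\bigl((1-\eta)^2\Delta^2/(2^{2k}k^2D)\bigr)$. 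This is an instance-independent version of Condition~\eqref{eq:cond}, in which $\gamma$ is expressed through $\Delta$ and $D$.

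\emph{Step 2 (local density near the optimum).} Take $y$ uniformly at Hamming distance $\ell$ from $x^\ast$. A constraint containing a flipped variable changes by at most $2^k$, and the expected number of constraints touched is at most $\ell\sum_j d_j/n=\ell k m/n$. Hence $\mathbb{E}[H(y)-H_{\mathrm{min}}]\le 2^k k m\ell/n$. Choosing
\[
\ell\approx \eta\Delta n/(2^{k+1}k)
\]
and applying Markov's inequality, a constant (or at worst $1/\mathrm{poly}(n)$) fraction of the Hamming sphere of radius $\ell$ around $x^\ast$ lies in $T_\eta$.

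\emph{Step 3 (short-path / amplitude-amplification assembly).} Next I would feed these two facts into the short-path framework of \cite{DalzellPCB23}, which is the same engine behind \Cref{th6}. The structure is a first amplitude amplification that finds a state in $T_\eta$, followed by a second search within the Hamming ball of radius $\ell$ around that state, which reaches $x^\ast$ because Step~2 places $x^\ast$ near many points of $T_\eta$. The short-path lemma bounds the cost by $2^{n/2}$ times a factor of the form $2^{-\Omega(\gamma\cdot \ell/(kn))n/2}$. This is the same combination that produces the product $\gamma\eta/k$ in \Cref{th6}, which there comes from the spectral-gap/overlap bound of the short-path Hamiltonian.

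\emph{Step 4 (optimizing constants).} Substituting $\gamma\sim\Delta^2/(2^{2k}k^2D)$ and an effective $\eta$-dependence proportional to $\Delta/(2^kk)$ from Step~2 yields an exponent of order $\Delta^3/(2^{3k}k^3D)$. Optimizing the free parameter $\eta$ (balancing the $(1-\eta)^2$ from Step~1 against the $\eta$ from Step~2) should give the numerical constant $0.0578$. The main obstacle is Step~3. Re-deriving the short-path overlap and gap bound requires only the one-sided ``average energy in a ball'' estimate of Step~2 rather than a full density condition. In \cite{DalzellPCB23} this rests on a careful hypercontractivity or Chernoff argument about the energy distribution along the path, so I would first try to invoke their general lemma as a black box, with $(\gamma,\eta)$ instantiated as above.
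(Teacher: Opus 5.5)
The paper does not prove this statement. It is quoted from \cite{DalzellPCB23}, where it comes out of a spectral analysis of the short-path Hamiltonian, and the paper recovers the running time only a fortiori, with a better exponent, through Theorem~\ref{th7q2}. Your Steps~1 and~2 are sound. Step~1 is essentially Proposition~\ref{prop:tech-mcdiarmid}. Step~2 is a valid averaging variant of Proposition~\ref{prop:lip-local}: each variable is flipped with probability $\ell/n$, a touched constraint moves by at most $2^k$, and Markov's inequality puts at least half of the radius-$\ell$ sphere around $x^\ast$ inside $T_\eta$.

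The gap is in Steps~3--4, which is where all of the quantum content lives. The two-stage procedure you describe (amplitude amplification onto $T_\eta$, then a search of a Hamming ball) is not the algorithm of \cite{DalzellPCB23}. That algorithm makes a small jump to the ground state of a perturbed short-path Hamiltonian and then a large jump by measuring in the computational basis. Its analysis consumes a thin threshold set together with an overlap-propagation (``subdepolarizing'') property of the cost function, not a count of points of $T_\eta$ near $x^\ast$. You do not identify any precise result that takes Steps~1--2 as hypotheses and returns a factor $2^{-\Omega(\gamma\ell/(kn))n/2}$. That form is extrapolated from the $\gamma\eta/k$ of Theorem~\ref{th6}, which concerns homogeneous degree-$k$ objectives, where noise acts by the exact contraction $\mathbb{E}[H(X)]=\rho^kH_{\mathrm{min}}$. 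That identity fails for CSP objectives, whose Fourier expansions contain all degrees $1,\dots,k$, so the formula cannot be transplanted with an ``effective'' $\eta$. Even taken at face value the bookkeeping does not close. With $\ell\approx\eta\Delta n/(2^{k+1}k)$, your factor gives $\gamma\,\ell/(kn)\sim\eta(1-\eta)^2\Delta^3/(2^{3k}k^4D)$, i.e.\ $k^4$ rather than $k^3$. The constant $0.0578$ is asserted rather than computed.

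You can, however, drop the short-path machinery altogether: the two-stage structure you wrote down already proves the statement once it is analyzed directly. Fix $\eta=1/2$. Let $S$ be the set of points of $T_\eta$ at distance exactly $\ell$ from $x^\ast$, so $|S|\ge\frac12\binom{n}{\ell}$ by Step~2. Attach to each $x\in T_\eta$ the Hamming ball of radius $\ell$, whose size is at most $(\ell+1)\binom{n}{\ell}$ because $\ell<n/2$. Use nested amplitude amplification \cite{BrassardHMT02} with quantum minimum finding \cite{Durr1999} inside the ball, as sketched in Section~\ref{subsec:brief-quantum}, with $H_{\mathrm{min}}$ handled as in Remark~\ref{rem:2}. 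Up to polynomial factors this costs $\sqrt{2^n/|S|}+\sqrt{|T_\eta|(\ell+1)\binom{n}{\ell}/|S|}$, i.e.\ $O^\ast\bigl(2^{(1-\min\{\gamma,h(\ell/n)\})n/2}\bigr)$. Since $h(\ell/n)\ge 2\ell/n\approx\Delta/(2^{k+1}k)\ge\gamma=\frac{1}{2\ln 2}\cdot\frac{\Delta^2}{2^{2k}k^2D}$, the exponent is $\gamma$. This is at least $0.0578\,\Delta^3/(2^{3k}k^3D)$, so the claimed bound follows a fortiori. This is the paper's route to Theorems~\ref{th7c} and~\ref{th7q2}. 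The only difference is that the paper's successful set is a restricted ball over light variables, controlled by a deterministic Lipschitz bound, where you average over the full sphere and apply Markov.
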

%\Cref{th6} and \Cref{th7} showed the potential of super-Grover quantum speedup.

One of the main results of our paper implies the following result:
\begin{theorem}[Simplified version of \Cref{th:concrete-maxkcsp} --- see \Cref{cor:concrete-maxkcsp-explicit}]\label{th7c}
There is a classical algorithm which, for any instance of unweighted MAX-\(k\)-CSP, produces an optimal solution with high probability in time
\[
O^\ast\left(2^{(1-c^{\mathrm{cl}}_\indB)n}\right)\,,\,\,\textrm{where }\,\,
c^{\mathrm{cl}}_\indB
\ge
0.7213
%0.72134(2^k-s)^2\,
%\frac{1}{2^{2k}k^2D}
%\left(\frac{|H_{\mathrm{min}}|}{m}\right)^2\,.
\cdot\frac{\Delta^2}{2^{2k}k^2D}\,.
\]
\end{theorem}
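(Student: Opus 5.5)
The plan is to dequantize the short-path mechanism directly. Quantumly, one starts from the uniform superposition and applies a short evolution that boosts the weight of low-energy (``light'') states. From a light state one then jumps to the optimum through a region of controlled Hamming distance. Classically, I would replace the two jumps by a \emph{light-conditioned Hamming-ball search}:
\begin{itemize}
\item[(a)] sample a uniformly random $y\in\{-1,1\}^n$, rejecting it unless it is light, i.e.\ $H(y)\le-\theta m$ for a threshold $\theta$ to be chosen;
\item[(b)] exhaustively search the Hamming ball $\Ball{y}{d}$ of radius $d=\rho n$ and output the best assignment found;
\item[(c)] repeat (a)--(b) until the optimum $x^\ast$ has been hit with high probability.
\end{itemize}
Conditioned on $y$ being light, one trial succeeds with probability
\[
\Pr[x^\ast\in\Ball{y}{d}\mid y \text{ light}]=\frac{\abs{\Ball{0}{d}}}{2^n}\cdot\frac{\Pr[y\text{ light}\mid d(y,x^\ast)\le d]}{\Pr[y\text{ light}]}.
\]
Each trial costs about $\abs{\Ball{0}{d}}$, and rejection sampling adds $1/\Pr[\text{light}]$, which is smaller. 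The total time is therefore roughly
\[
2^n\cdot\frac{\Pr_y[H(y)\le-\theta m]}{\Pr_{y\sim x^\ast}[H(y)\le-\theta m]},
\]
where $y\sim x^\ast$ means $y$ is drawn uniformly from $\Ball{x^\ast}{d}$.

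The key steps, in order, are as follows.

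\emph{Step 1 (upper tail for uniform $y$).} Bound $\Pr_y[H(y)\le-\theta m]$ by a concentration inequality for the sum of mean-zero constraints. I would use a martingale/bounded-differences argument in which variable $j$ has influence at most about $2^k d_j$; each constraint changes by at most $O(2^{k})$ when one of its variables flips. This gives a variance proxy of order $2^{2k}\sum_j d_j^2=2^{2k}k^2m^2D/n$, and hence
\[
\Pr\!\left[H(y)\le-\theta m\right]\le\exp\!\left(-\frac{\theta^2 n}{2\cdot 2^{2k}k^2D}\right)=2^{-\frac{1}{2\ln 2}\cdot\frac{\theta^2}{2^{2k}k^2D}\,n}.
\]
The constant $\tfrac1{2\ln 2}\approx0.7213$ appearing here is exactly the one in the statement, which strongly suggests this is the intended route with $\theta\approx\Delta$.

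\emph{Step 2 (light points near the optimum are not rare).} Show that $\Pr_{y\sim x^\ast}[H(y)\le-\theta m]\ge 2^{-o(n)}$ for suitable $\theta$ close to $\Delta$ and $\rho$ small, or at least that the loss here is a lower-order correction. The argument: since $H(x^\ast)=-\Delta m$ and each bit flip changes $H$ by at most $O(2^k d_j)$, a random point at distance $d$ has expected energy at most $-\Delta m+O(\rho\,2^k k m)$. A reverse-Markov or anticoncentration bound then shows it is light with non-negligible probability.

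\emph{Step 3 (optimize parameters).} Choose $\theta=\Delta(1-o(1))$ and let $\rho\to0$ slowly. The ratio in the total running time then becomes $2^{-c n}$ with $c\ge \frac{1}{2\ln 2}\frac{\Delta^2}{2^{2k}k^2D}-o(1)$, which gives \Cref{th7c}. For the weighted version (\Cref{th:concrete-maxkcsp}), I would carry the same computation with weights inside the influence bounds.

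The main obstacle is the trade-off between Step 1 and Step 2, rather than either bound alone. Taking $\rho$ too small makes $\abs{\Ball{0}{d}}$ tiny. That is harmless for the ratio formula, but it requires the rejection-sampling cost $1/\Pr[\text{light}]$ to stay below $2^{(1-c)n}$. Taking $\rho$ larger degrades the energy of neighbours of $x^\ast$. I would first try the cleanest regime, $\rho=o(1)$, and check that the sampling overhead, $2^{cn}$ samples each costing $\mathrm{poly}(n)$, is dominated. If it is, the running time is governed purely by the tail exponent in Step 1.

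A second technical point is getting the sharp sub-Gaussian constant in Step 1, with $\sum_j d_j^2$ rather than a cruder $\max_j d_j$ bound. This is what produces the irregularity parameter $D$ rather than a worse dependence. I would use Azuma--McDiarmid with per-coordinate Lipschitz constants $c_j\le 2^{k}d_j$, so that $\sum_j c_j^2\le 2^{2k}k^2m^2D/n$ by the definition of $D$.
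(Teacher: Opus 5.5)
There is a genuine gap, located in Step~3 and in the cost accounting that leads to it. Your algorithm (rejection-sample a low-energy point, then exhaustively search a Hamming ball around it) is essentially the paper's conditioning-and-search scheme, but your running-time formula drops a term. Write $p_{\mathrm{near}}=\Pr_{y\sim x^\ast}[H(y)\le-\theta m]$. The expected number of trials is then about $2^n\Pr[\text{light}]/(\abs{\Ball{0}{d}}\,p_{\mathrm{near}})$, and each trial draws about $1/\Pr[\text{light}]$ raw samples. The total cost is therefore
\[
\frac{2^n\,\Pr[\text{light}]}{p_{\mathrm{near}}}+\frac{2^n}{\abs{\Ball{0}{d}}\,p_{\mathrm{near}}}.
\]
Your claim that the rejection cost is ``smaller'' assumes $\abs{\Ball{0}{d}}\ge 1/\Pr[\text{light}]$. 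Your obstacle paragraph compares the per-trial overhead $2^{cn}$ with $2^{(1-c)n}$ but forgets to multiply by the number of trials. With $\rho\to 0$ you have $\abs{\Ball{0}{d}}=2^{o(n)}$, so the second term is $2^{n-o(n)}$ and there is no exponential speedup at all.

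That second term is exactly the paper's $2^n/\abs{S_\eta}$ term, and it is why the exponent is $\min\{\gamma_\eta,\kappa_\eta\}$. The searched ball must have size $2^{\kappa n}$ with $\kappa\ge\gamma$, so $\rho$ must be a positive constant. Once $\rho$ is constant, however, $\theta=\Delta(1-o(1))$ is no longer available in general. A typical point at distance $\rho n$ from $x^\ast$ can lie $\Omega(\rho m)$ above $H_{\mathrm{min}}$ (e.g.\ when most flips from $x^\ast$ violate constraints), which makes $p_{\mathrm{near}}$ exponentially small.

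The missing idea is to leave a constant slack in the threshold and guarantee that a whole exponentially large ball stays below it. The paper takes $\eta=1/2$, i.e.\ the threshold $H\le H_{\mathrm{min}}/2$ ($\theta=\Delta/2$ in your notation), with radius $r=\lfloor\theta_{1/2}n/2\rfloor=\Theta(\Delta n/(\Lambda_{\max}k))$. It flips only coordinates with $d_i\le 2d_{\mathrm{avg}}$, of which there are at least $n/2$. Each such flip raises $H$ by at most $2\Lambda_{\max}d_{\mathrm{avg}}$, so the entire restricted ball lies in the threshold set. This ball has size $2^{\frac12 h(\theta_{1/2})n-o(n)}$, and one checks (using $h(x)\ge 2x$) that this exponent dominates the tail exponent. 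For your unrestricted ball, Markov's inequality applied to $\sum_{i\in F}d_i$ over a random flipped set $F$ gives the same radius for at least half of each shell, which would also suffice.

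Finally, once $\theta=\Delta/2$, your Step~1 bound is too weak. Its exponent is a factor $4$ worse than McDiarmid's, so it would yield only $\frac{1}{8\ln 2}$. You need the sharp form $\exp(-2t^2/\sum_i c_i^2)$ with $t=\Delta m/2$, which gives exactly $\frac{1}{2\ln 2}\cdot\frac{\Delta^2}{\Lambda_{\max}^2k^2D}\ge\frac{1}{2\ln 2}\cdot\frac{\Delta^2}{2^{2k}k^2D}$. Your match of the constant, obtained via $\theta\approx\Delta$ together with a weakened tail bound, is a coincidence rather than the actual mechanism.
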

Since 
$\Delta < 1$, we immediately obtain $c^{\mathrm{cl}}_\indB>c^{\mathrm{q}}_\indB$. In addition to having a larger leading constant, $c^{\mathrm{cl}}_\indB$ can be significantly larger than the value of $c^{\mathrm{q}}_\indB$ since its dependence on $\frac{\Delta}{2^{k}k}$ is only quadratic. 

Besides refuting a super-quadratic quantum advantage, \Cref{th7c} gives novel (quantum-inspired) classical algorithms for constraint satisfaction problems, which we find interesting and even surprising. For concreteness, let us interpret this result in the case of MAX-E3-SAT. \Cref{th7c} shows that for mildly regular instances such that $m^\ast \ge(7/8+\delta)m$ for any constant $\delta\in(0,1/8]$, we can obtain a classical algorithm with running time $2^{(1-c)n}$ for some constant $c>0$. In contrast, achieving such a complexity for arbitrary instances of MAX-E3-SAT is a long-standing open problem; 
besides the case $\delta=1/8$ (corresponding to satisfiable 3-SAT, for which faster algorithms are known, e.g., \cite{Scheder24,Schoning02}) we are not aware of any prior work achieving this for constant $\delta> 0$. Thus, our result also suggests that the case $\delta\approx 0$ might be the hardest case.

\paragraph{Faster quantum algorithms.} 
A further consequence of our technique is that the resulting classical algorithms also admit the standard quadratic quantum speedup based on amplitude amplification. In this sense, the same viewpoint yields algorithmic consequences on both the classical and quantum sides
%: the main text develops the classical exact algorithms, while Appendix~\ref{subsec:brief-quantum} records the corresponding routine quantum speedups via amplitude amplification and Grover search.

More precisely, in \Cref{subsec:brief-quantum} we explain how to use amplitude amplification~\cite{BrassardHMT02} and quantum minimum finding~\cite{Durr1999} to obtain a quadratic speedup over the classical algorithms of \Cref{th6c,th7c}, yielding the following results, which improve upon the quantum algorithms of \cite{DalzellPCB23} since $c^{\mathrm{cl}}_\indA>c^{\mathrm{q}}_\indA$ and $c^{\mathrm{cl}}_\indB>c^{\mathrm{q}}_\indB$.
% a quantum algorithm for MAX-E$k$-LIN2 with running time
% \[    
%     O^\ast\!\left(2^{(1-c^{\mathrm{cl}}_\indA)n/2}\right)
% \]
% when Assumption~(\ref{eq:cond}) holds, and a quantum algorithm for MAX-k-CSPs with running time
% \[
%     O^\ast\!\left(2^{(1-c^{\mathrm{cl}}_\indB)n/2}\right)
% \]
% unconditionally. This improves the quantum algorithms of \cite{DalzellPCB23}.

\begin{theorem}\label{th6q2}
%For any parameters $\gamma\in[(1+4\log_2 n)/n,1]$ and $\eta\in[0,1]$,
%there is a classical algorithm which, for any instance of MAX-E\(k\)-LIN2 with $n\ge 6k$ such that \cref{eq:cond} holds, 
Under the same conditions as in \Cref{th6}, there is a quantum algorithm which, for any instance of MAX-E\(k\)-LIN2 such that Condition \eqref{eq:cond} holds, 
produces an optimal solution with high probability in time
\[
O^\ast\left(2^{(1-c^{\mathrm{cl}}_\indA)n/2}\right)\,.
\]
\end{theorem}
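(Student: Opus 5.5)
We obtain \Cref{th6q2} by quadratically speeding up the classical algorithm of \Cref{th6c}. We do not re-derive that algorithm here; we only use the shape it must have. Its running time $O^\ast(2^{(1-c^{\mathrm{cl}}_\indA)n})$ is the minimum of two routines, matching the two terms in $\min\{\gamma, h(\eta/(2k))\}$.

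The first routine uses Condition \eqref{eq:cond}. Since $|T_\eta|\le 2^{(1-\gamma)n}$, a uniformly random $x$ lands in $T_\eta$ with probability at least roughly $2^{-(1-\gamma)n}\cdot|T_\eta|/2^{(1-\gamma)n}$, so repeated sampling finds such points. The second routine handles the entropic term. Starting from a point in $T_\eta$, some optimal solution lies within Hamming distance about $\eta n/(2k)$ of it; this is the classical mechanism behind the short jump. The local search therefore enumerates a Hamming ball of size $2^{h(\eta/(2k))n}$. Combining the two, the classical algorithm is a search over a structured space of size $N=2^{(1-c^{\mathrm{cl}}_\indA)n}$: either all $2^n$ assignments sieved against the threshold, or the product of a sampled seed and a ball enumeration. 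Each candidate's energy $H(x)$ is computable in polynomial time.

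The plan has three steps.
\begin{enumerate}
\item Recast the classical algorithm as a procedure $\mathcal{A}$ that draws a random index $r\in[N']$ with $N'=O^\ast(2^{(1-c^{\mathrm{cl}}_\indA)n})$ and deterministically maps it in $\mathrm{poly}(n)$ time to a candidate $x(r)$. The requirement is that the set of $r$ with $x(r)$ optimal is nonempty with high probability over the algorithm's internal coins, or alternatively that $\min_r H(x(r))=H_{\mathrm{min}}$. For the ball part, $r$ must encode a seed together with an element of the ball; we index ball elements by a polynomial-time unranking of weight-bounded strings.
\item Apply the quantum minimum-finding algorithm of \cite{Durr1999} to the function $r\mapsto H(x(r))$ over $[N']$. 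This uses $O(\sqrt{N'})$ evaluations, each implemented reversibly in polynomial time, and returns $\arg\min$, hence an optimal solution, with high probability.
\item If $\mathcal{A}$ instead succeeds only with probability $p$ per trial, replace step 2 with amplitude amplification \cite{BrassardHMT02}, which costs $O(1/\sqrt p)$ trials. Since $H_{\mathrm{min}}$ is unknown, we cannot mark good outputs directly. We handle this by nesting inside minimum finding, or by searching for the threshold value among polynomially many candidates after rounding the weights.
\end{enumerate}

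The main obstacle is step 1: making sure the classical algorithm really is a flat search with a polynomial-time, coherently implementable candidate map and no adaptive sequential structure. If the seed-finding stage and the ball-search stage are composed, with cost roughly $2^{(1-\gamma)n}$ seeds followed by a ball search, the fix is to nest the searches. Quantum search is applied over seeds, with an inner minimum finding over the ball run as a subroutine with bounded error. Costs then multiply as square roots, $\sqrt{A\cdot B}=\sqrt{A}\sqrt{B}$, at the price of only $\mathrm{poly}(n)$ overhead for error reduction in the nested amplitude amplification. This yields $O^\ast(2^{(1-c^{\mathrm{cl}}_\indA)n/2})$.
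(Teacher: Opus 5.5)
\textbf{There is a genuine gap.} Your overall route is the paper's: quadratically speed up the classical algorithm using amplitude amplification and D\"urr--H\o yer minimum finding. But the one step that carries the content is missing, namely where the thinness of $T_\eta$ enters on the quantum side. The reformulations you propose do not give the claimed bound.

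The root cause is a misreading of the classical algorithm. It is not the minimum of two routines. It is a single procedure: rejection-sample a uniform $x\in T_\eta$, then exhaustively search $\Ball{x}{r_\eta^{\mathrm{ns}}}$. Its cost is the \emph{sum} $\frac{2^n}{|S_\eta^{\mathrm{ns}}|}\mathrm{poly}(n)+\frac{|T_\eta|}{|S_\eta^{\mathrm{ns}}|}\,|\Ball{x}{r_\eta^{\mathrm{ns}}}|$, and the exponent $\min\{\gamma,h(q_\eta)\}$ reflects whichever term is larger. Two further points follow:
\begin{itemize}
\item Condition~\eqref{eq:cond} is an \emph{upper} bound on $|T_\eta|$. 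It does not help random points land in $T_\eta$; it limits how many samples trigger the expensive ball search.
\item An optimum is guaranteed to lie near the sample only for points of $S_\eta^{\mathrm{ns}}$, not for all of $T_\eta$. The per-sample success probability is $|S_\eta^{\mathrm{ns}}|/|T_\eta|$, controlled by $|S_\eta^{\mathrm{ns}}|\ge 2^{h(q_\eta)n-o(n)}$.
\end{itemize}

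Consequently your Step~1 cannot be realized. A flat index set would have to range over pairs (seed in $T_\eta$, ball element), but $T_\eta$ can only be membership-tested, not enumerated or unranked. Indexing seeds by all of $\{-1,1\}^n$ instead gives roughly $2^{(1+h(q_\eta))n}$ indices. For a unique optimum at most $|\Ball{x^\ast}{r_\eta^{\mathrm{ns}}}|$ of them are good, so minimum finding costs at least about $2^{n/2}$.

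The two-level nesting in your last paragraph has the same defect. In standard amplitude amplification every iteration pays the worst-case cost over all branches. A search over seeds from the whole cube, with inner minimum finding on each ball, therefore costs $\sqrt{2^n/M}\cdot\sqrt{|\Ball{x}{r_\eta^{\mathrm{ns}}}|}\ge 2^{n/2}$. Here $M\le|\Ball{x^\ast}{r_\eta^{\mathrm{ns}}}|$ is the number of marked seeds when the optimum is unique. The cheap rejection of seeds outside $T_\eta$ is lost.

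The missing ingredient is a middle layer. This is what underlies the paper's quantum analogue of \Cref{cor:abstract-conditioning-search-rejection}, whose bound is $\sqrt{|\Omega|/|S|}+\sqrt{|T|/|S|}\,\tau^{\mathrm{q}}_{\mathrm{search}}$. Concretely:
\begin{itemize}
\item Use the polynomial-time membership test for $T_\eta$ (with $H_{\mathrm{min}}$ known) inside amplitude amplification to prepare the uniform superposition over $T_\eta$, at cost $\sqrt{2^n/|T_\eta|}$.
\item Run minimum finding on the ball, at cost $\sqrt{|\Ball{x}{r_\eta^{\mathrm{ns}}}|}$, and mark success when the value found equals $H_{\mathrm{min}}$.
\item Boost the success probability $|S_\eta^{\mathrm{ns}}|/|T_\eta|$ with $\sqrt{|T_\eta|/|S_\eta^{\mathrm{ns}}|}$ outer rounds.
\end{itemize}
This gives $\sqrt{2^n/|S_\eta^{\mathrm{ns}}|}+\sqrt{|T_\eta|/|S_\eta^{\mathrm{ns}}|}\cdot\sqrt{|\Ball{x}{r_\eta^{\mathrm{ns}}}|}\le 2^{(1-h(q_\eta))n/2+o(n)}+2^{(1-\gamma)n/2+o(n)}$, which is the claimed bound. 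Variable-time amplitude amplification would be another way to keep the rejections cheap.

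Separately, rounding real weights does not preserve exact optima, so it is not a valid way to drop the known-$H_{\mathrm{min}}$ assumption. The paper works with $H_{\mathrm{min}}$ given; in the unweighted case that assumption is removable by trying the polynomially many possible values.
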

\begin{theorem}\label{th7q2}
There is a quantum algorithm which, for any instance of unweighted MAX-\(k\)-CSP, produces an optimal solution with high probability in time
\[
O^\ast\left(2^{(1-c^{\mathrm{cl}}_\indB)n/2}\right)\,.
\]
\end{theorem}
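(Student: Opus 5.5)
The plan is to quantize the classical algorithm behind \Cref{th7c} rather than to modify the short-path algorithm. I will treat that classical algorithm, as developed in the main text (\Cref{th:concrete-maxkcsp}), as having the following form. A light, polynomially checkable structure is built, and then a set of $N = 2^{(1-c^{\mathrm{cl}}_\indB)n + o(n)}$ candidate assignments is enumerated. Each candidate can be produced and evaluated in $2^{o(n)}$ time from an index $i\in[N]$. By the analysis, the candidate set is guaranteed to contain an optimal assignment; for a randomized algorithm, it contains one with probability at least $2^{-o(n)}$ over the random seed.

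Concretely, the candidates are assignments within Hamming distance $\approx \beta n$ of a few anchor points. These are the ``small jump'' balls or light balls $\LightBall{\cdot}{\cdot}{\cdot}$. Their total size is $2^{(1-c)n}$, and the optimum lies in one of them. I will first restate the classical algorithm as an explicit map $i \mapsto x_i$ together with a reversible, polynomial-time (up to $2^{o(n)}$) circuit computing $H(x_i)$. The nontrivial requirement is efficient unranking, that is, indexing a Hamming ball or a union of balls. This is standard: combinatorial number system indexing of weight-$w$ subsets, summed over $w$ and over anchors, runs in polynomial time.

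With this in place, I will apply quantum minimum finding~\cite{Durr1999} over the index set $[N]$ with the oracle $i\mapsto H(x_i)$. This uses $O(\sqrt N)$ oracle queries and returns $\arg\min_i H(x_i)$ with high probability. By correctness of the classical algorithm, that minimum equals $H_{\mathrm{min}}$, so the returned assignment is optimal. Any randomness in the classical procedure (random anchors or random restrictions) can be handled in one of two ways. Either I repeat the whole quantum routine $2^{o(n)}$ times and keep the best assignment found, or I fold the seed into the search index and use amplitude amplification~\cite{BrassardHMT02}. In both cases the total running time is $O^\ast(\sqrt N) = O^\ast\bigl(2^{(1-c^{\mathrm{cl}}_\indB)n/2}\bigr)$. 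The quantity $H_{\mathrm{min}}$ need not be known in advance, because minimum finding does not require a threshold.

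The main obstacle is verifying that the classical algorithm truly decomposes in this way. In particular, any preprocessing, such as computing anchors or thresholds derived from $\Delta$ and $D$, must cost only $2^{(1-c)n/2}$ or less. The parts that are genuinely exponential must be a pure enumeration-plus-evaluation stage with no data-dependent pruning that depends on previously seen candidates. If the classical search uses adaptive pruning, I would replace it by a predicate-based Grover search over the unpruned candidate family, which has the same size bound, and check that its size still obeys the $2^{(1-c^{\mathrm{cl}}_\indB)n}$ count from \Cref{cor:concrete-maxkcsp-explicit}.
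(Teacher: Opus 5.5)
There is a genuine gap. The classical algorithm behind \Cref{th:concrete-maxkcsp} does not have the shape you assume, so there is no efficiently computable family $i\mapsto x_i$ of size $2^{(1-c)n}$ to hand to quantum minimum finding. That algorithm is the conditioning-and-search loop of \Cref{cor:abstract-conditioning-search-rejection}. It draws about $2^n/|S_\eta^{\mathrm{lip}}|$ uniform points and keeps those passing the test $H(x)\le(1-\eta)H_{\mathrm{min}}$. About $|T_\eta|/|S_\eta^{\mathrm{lip}}|$ points survive, which is exponentially many rather than ``a few''. It then exhaustively searches $\LightBall{\av}{x}{r_\eta^{\mathrm{lip}}}$ around each survivor. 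The anchors are random points of $T_\eta$ and cannot be unranked. Finding them classically costs about $2^n/|S_\eta^{\mathrm{lip}}|$ membership tests, which is the entire sampling term of the classical bound, so it cannot be treated as cheap preprocessing.

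Folding the randomness into a flat search index does not rescue this. A Grover search over pairs $(x,z)$, with $x$ a sample and $z$ an offset in the restricted ball, has $2^n|S_\eta^{\mathrm{lip}}|$ pairs, of which exactly $|S_\eta^{\mathrm{lip}}|$ lead to a given optimum. It is therefore no faster than Grover search over $\{-1,1\}^n$ itself. Preparing $x$ as a uniform superposition over $T_\eta$ (cost $\sqrt{2^n/|T_\eta|}$ per preparation) does not help either: the good fraction becomes $1/|T_\eta|$ per optimum, and the total cost is again $2^{n/2}$.

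The missing idea is \emph{nested} amplitude amplification, which is the quantum analogue of \Cref{cor:abstract-conditioning-search-rejection} and is what the paper uses. It has three steps.
\begin{enumerate}
\item Amplify to a uniform superposition over $T_\eta$, at cost $\sqrt{2^n/|T_\eta|}$.
\item Run quantum search or minimum finding \emph{to completion} inside $\LightBall{\av}{x}{r_\eta^{\mathrm{lip}}}$, at cost $\sqrt{|S_\eta^{\mathrm{lip}}|}$ by Item 4 of \Cref{th:lip-favorable}. This makes every $x\in S_\eta^{\mathrm{lip}}$ yield an optimum with constant probability.
\item Amplify this whole subroutine over its success event, whose probability is at least a constant times $|S_\eta^{\mathrm{lip}}|/|T_\eta|$.
\end{enumerate}
The total cost is
\[
\sqrt{\frac{|T_\eta|}{|S_\eta^{\mathrm{lip}}|}}\left(\sqrt{\frac{2^n}{|T_\eta|}}+\sqrt{|S_\eta^{\mathrm{lip}}|}\right)
=\sqrt{\frac{2^n}{|S_\eta^{\mathrm{lip}}|}}+\sqrt{|T_\eta|}
\le 2^{(1-\kappa_\eta^{\mathrm{lip}})n/2+o(n)}+2^{(1-\gamma_\eta)n/2},
\]
and \Cref{cor:concrete-maxkcsp-explicit} at $\eta=1/2$ then gives the stated exponent.

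This construction needs both the membership test for $T_\eta$ and a checkable success flag for the outer amplification (``the inner search found $y$ with $H(y)=H_{\mathrm{min}}$''). So your claim that $H_{\mathrm{min}}$ need not be known does not hold here. In the unweighted case you instead run the algorithm for each of the polynomially many candidate values of $H_{\mathrm{min}}$, as in \Cref{rem:2}.
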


%wdo get a significant improvement. 
%The classical comparison scale is quadratic in the normalized optimum parameter, whereas the quantum comparison scale in Dalzell et al.'s Theorem~7 is cubic in their unweighted normalization.

\subsection{Overview of our techniques}
We isolate an explicit classical conditioning-and-search framework for exact optimization that captures a substantial part of the mechanism underlying short-path quantum algorithms. We then instantiate this framework in two settings: a correlated-pair analysis for MAX-E\(k\)-LIN2, and a local-Lipschitz analysis for (weighted) MAX-\(k\)-CSP.

\paragraph{Conditioning-and-search.}
The common structure underlying our classical algorithm is the conditioning-and-search framework of Section~\ref{sub:abstract}. The framework is simple: sample from a \emph{conditioning set}~$T$ and then perform exhaustive search in a region associated with the sampled point. Algorithmically, the task therefore splits into two parts:
\begin{itemize}
    \item identify $T$ and upper bound its size, and
    \item identify a sufficiently large subset of sampled points $S\subseteq T$ from which a local exhaustive search recovers an optimum (this subset $S$ is called the \emph{successful set}).
\end{itemize}
In our applications of this abstract framework (from Section \ref{sub:2.2} onward), the conditioning set is the \emph{near-optimal threshold set} defined as follows (for some parameter \(\eta\in(0,1)\)):
\[
T=\{x:H(x)\le (1-\eta)H_{\mathrm{min}}\}\,.
\]
%The construction of the successful sets $S$ used in our applications will be described later.

This leads to a two-parameter description of the running time. 
The first parameter is a \emph{threshold exponent} \(\gamma\), defined via an upper bound of the form
\begin{equation}\label{eq1}
|T|\le 2^{(1-\gamma)n}\,.
\end{equation}
The second parameter is a \emph{successful-set exponent} \(\kappa\), defined via a lower bound of the form
\begin{equation}\label{eq2}
|S|\ge 2^{\kappa n}\,.
\end{equation}
%for an explicit set \(S_\eta\subseteq T_\eta\) such that every \(x\in S_\eta\) comes with a searchable region containing an optimum assignment. 

We package these two conditions into a black-box theorem (\Cref{th:black-box}). Informally, this theorem shows that if both (\ref{eq1}) and~(\ref{eq2}) hold,
and if from every point of \(S\) a local exhaustive search recovers an optimum (i.e., if $S$ is a successful set), then the resulting classical algorithm has complexity $2^{(1-c^{\mathrm{cl}})n}$ with 
\[
c^{\mathrm{cl}}=\min\{\gamma,\kappa\}.
\]

Structurally, this can be summarized as
\[
\boxed{
\text{thin near-optimal region}
\;+\;
\text{large successful subset}
}
\quad\Longrightarrow\quad
\boxed{
c^{\mathrm{cl}}=\min\{\gamma,\kappa\},
}
\]
where
\begin{itemize}
    \item \(\gamma\) measures the thinness of the near-optimal region;
    \item \(\kappa\) measures the size of the successful set.
\end{itemize}

\paragraph{Comparison with the short-path quantum algorithm.}
The quantum framework of Dalzell, Pancotti, Campbell, and Brand\~{a}o~\cite{DalzellPCB23} uses the same near-optimal threshold set, with the same threshold parameter \(\eta\). The upper bound (\ref{eq1}) is used to bound the norm of the Hamiltonian constructed in the short-path quantum algorithm. Instead of introducing an explicit successful set and using the lower bound (\ref{eq2}), the quantum proof uses an overlap-propagation mechanism, encoded by a ``subdepolarizing'' property using another parameter $\alpha$. Combined with Grover search, this gives a quantum algorithm with complexity of the form $2^{(1-c^{\mathrm{q}})n/2}$ for $c^{\mathrm q}=\Phi(\gamma,\alpha)$, where $\Phi$ is an explicit function supplied by the analysis of \cite{DalzellPCB23}.

Structurally, the quantum side may thus be summarized as
\[
\boxed{
\text{thin near-optimal region}
\;+\;
\text{large overlap propagation}
}
\quad\Longrightarrow\quad
\boxed{
c^{\mathrm q}=\Phi(\gamma,\alpha),
}
\]
where
\begin{itemize}
    \item \(\gamma\) again measures the thinness of the near-optimal region;
    \item \(\alpha\) is the overlap-propagation parameter.
\end{itemize}

Our analysis (outlined below) shows that for MAX-E\(k\)-LIN2 and MAX-\(k\)-CSP we have $c^{\mathrm{cl}}>c^{\mathrm q}$, which implies that the short-path quantum algorithm does not achieve a super-quadratic advantage. 

\paragraph{Constructing the successful sets.}
\Cref{sec:favorable-sets} (and \Cref{sec:technical-details} which contains some technical details omitted in \Cref{sec:favorable-sets}) are devoted to constructing the successful sets \(S\) in two different settings.

\begin{itemize}
\item
In the first setting, relevant to MAX-E\(k\)-LIN2 and the comparison with \Cref{th6}, the set $T$ is thin (i.e., the upper bound (\ref{eq1}) holds) by assumption.
%the thin-sublevel-set condition is assumed as part of the input hypothesis. 
The task is therefore only to identify a large successful subset \(S\subseteq T\). Section~\ref{subsec:noise-stability-analysis} does this via a correlated-pair analysis on the Boolean cube. Starting from an optimum assignment \(x^\ast\), we generate a random point~\(X\) by independently flipping each bit of \(x^\ast\) with a suitable probability; equivalently, \(x^\ast\) and \(X\) form a \(\rho\)-correlated pair on \(\{-1,1\}^n\), for some parameter $\rho$. We then use the exact identity
\[
\mathbb E[H(X)]=\rho^k H_{\mathrm{min}}
\]
for homogeneous degree-\(k\) objectives. This gives an inverse-polynomial lower bound on the probability that~\(X\) falls in the target threshold set. Restricting further to a narrow Hamming layer around the expected distance from \(x^\ast\) converts that probability bound into a counting statement for an explicit successful set \(S\). The resulting search radius (which controls the locality, and thus the cost, of searching in the region associated with a point of $S$) is then governed by the entropy of that Hamming layer.
\item
In the second setting, relevant to MAX-\(k\)-CSP and the comparison with \Cref{th7}, the thinness of \(T\) is not assumed in advance but instead derived from the instance. More precisely, for a chosen \(\eta\), a concentration argument based on McDiarmid's inequality gives an explicit threshold exponent in terms of the intrinsic parameters of the instance, namely the irregularity parameter \(D\)
and the normalized optimum scale $\Delta$.
%the quantity \(\Sigma\)\francois{replace by "a quantity $\Sigma$ that represents..."? Or simply not introduce $\Sigma$ in the intro and use $\Delta$ instead of \(|H_{\mathrm{min}}|/\Sigma\)?}, and the ratio \(|H_{\mathrm{min}}|/\Sigma\). 
Section~\ref{subsec:local-lipschitz-analysis} then supplies the corresponding successful set by a local-Lipschitz argument: flipping a light variable changes the objective \(H\) by only a controlled amount, so a restricted Hamming ball around an optimum assignment remains inside the target threshold set. This yields an explicit successful set \(S\subseteq T\), and the resulting exponent is controlled by the entropy of that restricted ball.
% In the second setting, relevant to weighted MAX-\(k\)-CSPs and the comparison with \Cref{th7}, the thinness of \(T\) is not assumed in advance but instead derived from the instance. More precisely, for a chosen \(\eta\), a concentration argument based on McDiarmid's inequality gives an exponent
% \[
% \gamma=\Gamma\!\left(k,D,\Delta,\eta\right).
% \]
% Section~\ref{subsec:local-lipschitz-analysis} then supplies the corresponding successful set by a local-Lipschitz argument: flipping a light variable changes the objective \(H\) by only a controlled amount, so a restricted Hamming ball around an optimum assignment remains inside the target threshold set. This yields an explicit successful set \(S\subseteq T\), and the resulting exponent is controlled by the entropy of that restricted ball.
\end{itemize}

\paragraph{Combining all elements.}
Section~\ref{sec:concrete-corollaries} then combines these successful-set constructions with the corresponding threshold-set bounds. In this way, the paper separates the argument into three layers: an abstract conditioning-and-search theorem, two constructions of explicit successful sets, and concrete corollaries obtained by combining those constructions with upper bounds on the threshold set. From this perspective, the contrast with the quantum side is also clearer: the classical analysis proceeds by exhibiting and counting explicit successful sets, whereas the quantum analysis is naturally phrased in terms of amplitudes and overlaps.

\subsection{Related work and organization}

\paragraph{Other related works.}
The short-path framework was developed further by Chakrabarti et al.~\cite{Chakrabarti-etal24}. We briefly discuss this work in \Cref{sec:concluding}.

This paper also fits into the literature on dequantization: Tang's breakthrough result on recommendation systems~\cite{Tang19,Tang23}, subsequent low-rank and QSVT dequantization works \cite{BakshiT24, GharibianLG23,ChiaGLLTW20,ChiaGLLTW22,GilyenST22,KothariOW26,LeGall25}, the recent classical quadratic-speedup result of Gupta, He, O'Donnell and Singer~\cite{GuptaHOS26} for planted $k$XOR, and the result of Kothari, O'Donnell and Wu on \(\mathrm{SIS}^{\infty}\)~\cite{KothariOW26}, all illustrate that some quantum speedups can admit classical counterparts once the underlying mechanism is made explicit in classical terms. In contrast to the mainly linear-algebraic dequantization frameworks in that literature, our focus here is on extracting the structure behind the analysis of the short-path optimization in \cite{DalzellPCB23,Hastings18short,Hastings18weaker,Hastings19duality} and reformulating it as a purely classical conditioning-and-search mechanism.

Finally, we mention Decoded Quantum Interferometry (DQI), a recent framework for quantum optimization introduced by Jordan et al.~\cite{JordanSWZSKIKB25} that combines phase encoding and interference with a classical decoding step. The central idea is to map structured optimization problems to decoding problems via the quantum Fourier transform, thereby enabling the use of efficient classical decoding algorithms to extract solutions. In particular, the original work identifies sparse MAX-$k$-XORSAT and MAX-LINSAT as natural applications of this approach, where the constraint structure can be interpreted as a code and exploited by decoding procedures. Subsequent work has clarified both the scope and the limitations of this framework. On the one hand, DQI can yield strong performance guarantees on instances with suitable algebraic or coding-theoretic structure. %On the other hand, several results indicate that such advantages are inherently structure-dependent: for example, for problems such as MaxCut or random MAX-$k$-XOR-SAT, nontrivial guarantees appear to be restricted to instances that are already classically tractable or that avoid overlap-gap phenomena.
Overall, the current understanding suggests that DQI is best viewed not as a general-purpose algorithm for arbitrary MAX-$k$-CSPs, but rather as a promising approach for specially structured instances where efficient decoding is possible \cite{JordanSWZSKIKB25,AnschuetzGamarnikLuDQI,KramerSchubertEisertDQI}. 

\paragraph{Organization of the paper.}
Section~\ref{sec:framework} introduces the abstract two-parameter conditioning-and-search framework and the associated black-box theorem. Section~\ref{sec:favorable-sets} constructs successful sets in the two regimes considered in the paper: the correlated-pair regime for homogeneous degree-\(k\) objectives and the local-Lipschitz regime for MAX-\(k\)-CSP. Section~\ref{sec:concrete-corollaries} combines these successful-set constructions with the corresponding threshold exponents to derive the main concrete exact-optimization corollaries. Section~\ref{sec:technical-details} collects the quantitative estimates and standard calculations deferred from the main text. Section~\ref{sec:concluding} contains concluding remarks.

The appendices collect supplementary material. %Appendix~\ref{sec:structural-comparison} compare the classical and the quantum approaches. 
Appendices \ref{sec:standard-variants}, \ref{sec:top-k-variant} and \ref{sec:optimistic-first-local-ball}
discuss the case where the optimal value is unknown. %Appendix~A.2 records the corresponding routine quadratic quantum speedups via amplitude amplification and Grover search.
Appendix \ref{sec:notation-guide} collects the main notation used in Sections 2, 3, and 4.

% ==================================================
% Section 2: A two-parameter framework for exact optimization
% ==================================================
\section{A Two-Parameter Framework for Exact Optimization}\label{sec:framework}
The role of this section is to isolate our framework for exact optimization abstractly before turning to two concrete instantiations in later sections.
Throughout this section, \(n\) denotes the size parameter, and we assume that the input instance has encoding length \(\mathrm{poly}(n)\). Accordingly, all polynomial-time statements are with respect to $n$. Here and throughout this section, a set \(T\subseteq \Omega_n\) is called polynomial-time testable if, given \(x\in\Omega_n\), one can decide whether \(x\in T\) in deterministic polynomial time.

% ==================================================
\subsection{Abstract conditioning-and-search theorem}\label{sub:abstract}

The first ingredient is an abstract theorem showing that, once one can sample from a conditioning set $T$ and search a region associated with each sample, the expected running time is governed by the probability of landing in a successful part of that set.

\begin{theorem}[Abstract conditioning-and-search principle]\label{th:abstract-conditioning-search}
Let \(\Omega_n\) be a finite search space, let \(T\subseteq\Omega_n\) be a polynomial-time testable set, let \(\mu\) be a probability distribution supported on \(T\), and let \(f\colon\Omega_n\to\{0,1\}\) be a polynomial-time computable target predicate. Suppose that:
\begin{enumerate}
    \item there is a randomized sampler such that each invocation, using fresh independent randomness, outputs a sample \(x\sim \mu\) and has expected running time \(\tau_{\mathrm{samp}}(n)\);
    \item for each \(x\in T\), there is a region \(\mathcal N(x)\subseteq\Omega_n\) that can be exhaustively searched in time at most \(\tau_{\mathrm{search}}(n)\);
    \item there exist a set \(S\subseteq T\) and a target point \(x^\ast\in\Omega_n\) with \(f(x^\ast)=1\) such that
    \[
    x^\ast\in \mathcal N(x)
    \qquad\text{for every }x\in S;
    \]
    \item %the successful mass $\mu(S)$ satisfies
    $
    \mu(S)\ge \beta_n
    $
    for some parameter \(\beta_n>0\).
\end{enumerate}
Then the algorithm that repeatedly and independently samples \(x\sim \mu\) and searches \(\mathcal N(x)\) until it finds some \(y\) with \(f(y)=1\) has expected running time at most
\[
O\!\left(\frac{\tau_{\mathrm{samp}}(n)+\tau_{\mathrm{search}}(n)}{\beta_n}\right).
\]
\end{theorem}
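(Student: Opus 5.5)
The plan is to view the algorithm as a sequence of independent rounds and to bound the expected number of rounds with a geometric random variable. Round $i$ does three things: it invokes the sampler once with fresh randomness to obtain $x_i\sim\mu$, it exhaustively searches $\mathcal N(x_i)$, and it evaluates $f$ on each point visited. The algorithm stops at the first round in which some $y\in\mathcal N(x_i)$ with $f(y)=1$ is found.

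\textbf{Success probability per round.} The round succeeds whenever $x_i\in S$. By hypothesis~3, $x_i\in S$ implies $x^\ast\in\mathcal N(x_i)$, so the exhaustive search encounters $x^\ast$ and sees $f(x^\ast)=1$. By hypothesis~4, each round therefore succeeds with probability at least $\mu(S)\ge\beta_n$. This holds independently of earlier rounds, because every round uses fresh randomness. Hence the number $R$ of rounds executed is stochastically dominated by a geometric random variable with parameter $\beta_n$, and $\mathbb E[R]\le 1/\beta_n$.

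\textbf{Cost per round.} Let $C_i$ denote the cost of round $i$. The sampling step has conditional expected cost $\tau_{\mathrm{samp}}(n)$ given the past. The search step costs at most $\tau_{\mathrm{search}}(n)$; I interpret "exhaustively searched in time at most $\tau_{\mathrm{search}}(n)$" as including the evaluation of the polynomial-time predicate $f$ on each visited point, absorbing polynomial factors if necessary. Thus $\mathbb E[C_i\mid \text{round } i \text{ is executed}]\le \tau_{\mathrm{samp}}(n)+\tau_{\mathrm{search}}(n)$.

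\textbf{Main step: a Wald-type argument.} The total running time is $\sum_{i\ge1}C_i\mathbf 1[R\ge i]$. The event $\{R\ge i\}$ depends only on rounds $1,\dots,i-1$, while the expected cost of round $i$ given that history is at most $\tau_{\mathrm{samp}}+\tau_{\mathrm{search}}$. Linearity of expectation (Tonelli, since all terms are nonnegative) then gives
\[
\mathbb E\Big[\sum_i C_i\mathbf 1[R\ge i]\Big]\le(\tau_{\mathrm{samp}}+\tau_{\mathrm{search}})\sum_i\Pr[R\ge i]=(\tau_{\mathrm{samp}}+\tau_{\mathrm{search}})\,\mathbb E[R]\le\frac{\tau_{\mathrm{samp}}+\tau_{\mathrm{search}}}{\beta_n}.
\]
This step is the only one needing care. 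The sampler's running time is random and may be correlated with the sample it outputs, so one cannot simply multiply expectations. Conditioning on the past, as above, handles this cleanly. A further point is that the testability of $T$ is not needed for this bound; it is used only to implement the sampler in later applications.
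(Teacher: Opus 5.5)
Your proposal is correct and follows the paper's proof essentially step for step. It uses the geometric bound $\Pr[R\ge i]\le(1-\beta_n)^{i-1}$ to get $\mathbb E[R]\le 1/\beta_n$, then the same Wald-type conditioning on the first $i-1$ iterations to handle the random, possibly output-correlated sampler cost. The only cosmetic difference is that you index by the actual stopping round, whereas the paper uses the first round whose sample lands in $S$, which is an upper bound on it.
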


%%%%%%%%%%%%%%%%%%%%%%%%%%%%%%%%%%%%%%%%%%%%%%%%%%
%%%%%%%%%%%%%%%%%%%%%%%%%%%%%%%%%%%%%%%%%%%%%%%%%%
%%%%%%%%%%%%%%%%%%%%%%%%%%%%%%%%%%%%%%%%%%%%%%%%%%
\begin{proof}
Let \(N\) be the index of the first successful iteration, where an iteration is called successful if the sampled point lies in \(S\). By Assumption 3, every successful iteration finds a point \(y\) with \(f(y)=1\), since then \(x^\ast\in\mathcal N(x)\).

For each \(i\ge 1\), condition on the random choices made in the first \(i-1\) iterations. On the event \(\{N\ge i\}\), the \(i\)-th iteration is executed. Because that iteration uses fresh independent randomness and outputs a sample distributed as \(\mu\), we have
\[
\Pr[N=i\mid \text{first \(i-1\) iterations}]
\ge \beta_n
\qquad\text{on the event }\{N\ge i\}.
\]
Equivalently,
\[
\Pr[N\ge i+1\mid \text{first \(i-1\) iterations}]
\le 1-\beta_n
\qquad\text{on the event }\{N\ge i\}.
\]
Taking expectations and iterating gives
\[
\Pr[N\ge i]\le (1-\beta_n)^{i-1}
\qquad (i\ge 1).
\]
Hence
\[
\mathbb E[N]
=
\sum_{i\ge 1}\Pr[N\ge i]
\le
\sum_{i\ge 1}(1-\beta_n)^{i-1}
=
\frac{1}{\beta_n}.
\]

Now let \(C_i\) be the running time of the \(i\)-th iteration. Since the \(i\)-th invocation of the sampler has expected cost at most \(\tau_{\mathrm{samp}}(n)\), and the subsequent exhaustive search costs at most \(\tau_{\mathrm{search}}(n)\), we have
\[
\mathbb E[C_i\mid \text{first \(i-1\) iterations}]
\le
\tau_{\mathrm{samp}}(n)+\tau_{\mathrm{search}}(n)
\qquad (i\ge 1).
\]
Therefore
\[
\mathbb E\!\left[\sum_{i=1}^{N} C_i\right]
=
\sum_{i\ge 1}\mathbb E\!\left[\mathbf 1[N\ge i]\,C_i\right].
\]
Since the event \(\{N\ge i\}\) depends only on the outcomes of the first \(i-1\) iterations,
\begin{align*}
\mathbb E\!\left[\mathbf 1[N\ge i]\,C_i\right]
&=
\mathbb E\!\left[\mathbf 1[N\ge i]\,
\mathbb E[C_i\mid \text{first \(i-1\) iterations}]\right]\\
&
\le
\bigl(\tau_{\mathrm{samp}}(n)+\tau_{\mathrm{search}}(n)\bigr)\Pr[N\ge i].
\end{align*}
Summing over \(i\ge 1\), we obtain
\begin{align*}
\mathbb E\!\left[\sum_{i=1}^{N} C_i\right]
&\le
\bigl(\tau_{\mathrm{samp}}(n)+\tau_{\mathrm{search}}(n)\bigr)
\sum_{i\ge 1}\Pr[N\ge i]\\
&=
\bigl(\tau_{\mathrm{samp}}(n)+\tau_{\mathrm{search}}(n)\bigr)\mathbb E[N]\\
&\le
\frac{\tau_{\mathrm{samp}}(n)+\tau_{\mathrm{search}}(n)}{\beta_n}\,,
\end{align*}
as desired.
\end{proof}
%%%%%%%%%%%%%%%%%%%%%%%%%%%%%%%%%%%%%%%%%%%%%%%%%%
%%%%%%%%%%%%%%%%%%%%%%%%%%%%%%%%%%%%%%%%%%%%%%%%%%
%%%%%%%%%%%%%%%%%%%%%%%%%%%%%%%%%%%%%%%%%%%%%%%%%%

The uniform case is the main specialization used in this paper.

\begin{corollary}[Uniform specialization]\label{cor:abstract-conditioning-search-uniform}
In the setting of \Cref{th:abstract-conditioning-search}, if \(\mu\) is the uniform distribution on \(T\), then the expected running time is at most
\[
O\!\left(\frac{|T|}{|S|}\bigl(\tau_{\mathrm{samp}}(n)+\tau_{\mathrm{search}}(n)\bigr)\right).
\]
\end{corollary}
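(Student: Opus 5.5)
This follows by instantiating \Cref{th:abstract-conditioning-search} with \(\mu\) the uniform distribution on \(T\). All hypotheses of that theorem are assumed to hold in this setting, so the only thing left to identify is the successful mass \(\beta_n=\mu(S)\).

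Since \(S\subseteq T\) and \(\mu\) puts mass \(1/|T|\) on each point of \(T\), we have
\[
\mu(S)=\frac{|S|}{|T|}.
\]
We therefore set \(\beta_n:=|S|/|T|\). This is strictly positive as long as \(S\neq\emptyset\). That is the nondegenerate case: if \(S\) were empty, the claimed bound would be infinite and hence vacuous, so we may assume \(S\neq\emptyset\).

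Plugging \(\beta_n=|S|/|T|\) into the conclusion of \Cref{th:abstract-conditioning-search} gives expected running time
\[
O\!\left(\frac{\tau_{\mathrm{samp}}(n)+\tau_{\mathrm{search}}(n)}{\beta_n}\right)
=O\!\left(\frac{|T|}{|S|}\bigl(\tau_{\mathrm{samp}}(n)+\tau_{\mathrm{search}}(n)\bigr)\right),
\]
which is exactly the claim.

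There is no real obstacle here. The only point to check is that the uniform sampler on \(T\) is one of the samplers allowed by hypothesis 1. It is, because the corollary is stated \emph{in the setting of} the theorem, so the existence of such a sampler with expected cost \(\tau_{\mathrm{samp}}(n)\) is assumed rather than something we need to prove.
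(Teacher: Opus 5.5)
Your proof is correct and is the same as the paper's: compute $\mu(S)=|S|/|T|$ for uniform $\mu$, then apply \Cref{th:abstract-conditioning-search} with $\beta_n=|S|/|T|$. Your side remark about $S\neq\emptyset$ is harmless, and it is already implied by hypothesis~4 of the theorem, which requires $\mu(S)\ge\beta_n>0$.
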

\begin{proof}
For uniform \(\mu\), one has \(\mu(S)=|S|/|T|\). The statement then follows by using \Cref{th:abstract-conditioning-search} with \(\beta_n=|S|/|T|\).
\end{proof}

We will typically obtain uniform samples from \(T\) by rejection sampling from an ambient space~\(\Omega_n\): repeatedly draw a uniformly random point of \(\Omega_n\) until the sampled point lies in \(T\), and then output that accepted point. Because all accepted points arise from the same uniform distribution on \(\Omega_n\), the output is then uniform on \(T\). This leads to the following corollary.

\begin{corollary}[Rejection-sampling form]\label{cor:abstract-conditioning-search-rejection}
In the setting of \Cref{th:abstract-conditioning-search}, suppose in addition that \(\mu\) is the uniform distribution on \(T\) and that it is sampled by rejection sampling from a finite ambient space \(\Omega_n\), with each raw sample and membership test costing \(\mathrm{poly}(n)\) time. Then the expected running time is
\[
O\!\left(\frac{|\Omega_n|}{|S|}\,\mathrm{poly}(n)+\frac{|T|}{|S|}\cdot\tau_{\mathrm{search}}(n)\right).
\]
\end{corollary}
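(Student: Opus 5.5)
The plan is to apply \Cref{th:abstract-conditioning-search} with $\mu$ uniform on $T$, exactly as in \Cref{cor:abstract-conditioning-search-uniform}. The only new work is to bound the expected sampling cost $\tau_{\mathrm{samp}}(n)$ of the rejection sampler. The subtlety is that the uniform corollary charges the sampling cost $|T|/|S|$ times. The claimed bound charges the $\Omega_n$-dependent part only $|\Omega_n|/|S|$ times, not $(|T|/|S|)\cdot(|\Omega_n|/|T|)$ times. These two products agree, so in fact no subtlety arises: $\frac{|T|}{|S|}\cdot\frac{|\Omega_n|}{|T|}=\frac{|\Omega_n|}{|S|}$.

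\textbf{Step 1: the sampler is correct and has geometric cost.} Each invocation draws independent uniform points of $\Omega_n$ and tests membership in $T$, which costs $\mathrm{poly}(n)$ per draw since $T$ is polynomial-time testable. It outputs the first accepted point. Conditioned on acceptance, a uniform point of $\Omega_n$ is uniform on $T$, so the output has law $\mu$. Each invocation uses fresh randomness, which is Assumption 1. The number of draws is geometric with success probability $|T|/|\Omega_n|$. Hence
\[
\tau_{\mathrm{samp}}(n)\le \frac{|\Omega_n|}{|T|}\,\mathrm{poly}(n).
\]
We need $T\neq\emptyset$ here; this holds because $S\subseteq T$ and $\mu(S)>0$.

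\textbf{Step 2: plug into the main theorem.} Take $\beta_n=|S|/|T|$. \Cref{th:abstract-conditioning-search} then gives expected time
\[
O\!\left(\frac{|T|}{|S|}\Bigl(\frac{|\Omega_n|}{|T|}\mathrm{poly}(n)+\tau_{\mathrm{search}}(n)\Bigr)\right)
=O\!\left(\frac{|\Omega_n|}{|S|}\mathrm{poly}(n)+\frac{|T|}{|S|}\tau_{\mathrm{search}}(n)\right).
\]

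\textbf{Main obstacle.} The only point needing care is that $\tau_{\mathrm{samp}}$ in Assumption 1 is an \emph{expected} cost, and the proof of \Cref{th:abstract-conditioning-search} already handles this through conditional expectations on the event $\{N\ge i\}$. I would therefore just verify that the geometric expectation bound holds conditionally on the past iterations. It does, since the rejection loop uses fresh randomness independent of earlier iterations.
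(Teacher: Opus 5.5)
Your proposal is correct and matches the paper's proof. Both bound the rejection sampler's expected cost by $O(|\Omega_n|/|T|\cdot\mathrm{poly}(n))$ using the acceptance probability $|T|/|\Omega_n|$, then substitute into the uniform specialization with $\beta_n=|S|/|T|$, so that the product $\frac{|T|}{|S|}\cdot\frac{|\Omega_n|}{|T|}$ simplifies to $\frac{|\Omega_n|}{|S|}$. Your remarks on $T\neq\emptyset$ and on fresh randomness are harmless details that the paper leaves implicit.
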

\begin{proof}
The rejection sampler accepts with probability \(|T|/|\Omega_n|\). Hence its expected cost is
\[
O\!\left(\frac{|\Omega_n|}{|T|}\,\mathrm{poly}(n)\right).
\]
Substituting this bound into \Cref{cor:abstract-conditioning-search-uniform} yields the claim.
\end{proof}

The main point of these statements is that exact optimization reduces to two quantitative inputs: an upper bound on the conditioning set $T$ and a lower bound on the size of the successful subset $S$ from which exhaustive search provably reaches an optimum.

% ==================================================
\subsection[A black-box theorem from $(\gamma_\eta,\kappa_\eta)$]{A black-box theorem from \texorpdfstring{$\boldsymbol{(\gamma_\eta,\kappa_\eta)}$}{(gamma, kappa)}}\label{sub:2.2}

We now package the abstract theorem in the form used throughout the rest of the paper. In this subsection we specialize to the ambient space
\[
\Omega_n=\{-1,1\}^n.
\]

Fix \(\eta\in(0,1)\), and define the threshold set
\[
T_\eta=\{x:H(x)\le (1-\eta)H_{\mathrm{min}}\}.
\]
Suppose that:
\begin{enumerate}
    \item \(T_\eta\) is polynomial-time testable;
    \item there exists an explicit set \(S_\eta\subseteq T_\eta\);
    \item for each \(x\in T_\eta\), there is a search region \(\mathcal N_\eta(x)\subseteq\Omega_n\);
    \item for every \(x\in S_\eta\), the region \(\mathcal N_\eta(x)\) contains an optimum assignment;
    \item the threshold set is exponentially thin:
    \[
    |T_\eta|\le 2^{(1-\gamma_\eta)n};
    \]
    \item the successful set is exponentially large:
    \[
    |S_\eta|\ge 2^{\kappa_\eta n-o(n)};
    \]
    \item the region \(\mathcal N_\eta(x)\) can be exhaustively searched in time
    \[
    2^{\kappa_\eta n+o(n)}
    \]
    for every \(x\in T_\eta\).
\end{enumerate}

Then the abstract conditioning-and-search theorem immediately yields the following statement.

\begin{theorem}[Black-box exact optimization from threshold and successful-set bounds]\label{th:black-box}
Under the seven assumptions above, an optimum assignment can be found in expected time
\[
2^{(1-c_\eta)n+o(n)},
\qquad
c_\eta=\min\{\gamma_\eta,\kappa_\eta\}.
\]
\end{theorem}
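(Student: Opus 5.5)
The plan is to apply \Cref{cor:abstract-conditioning-search-rejection} directly. Take ambient space \(\Omega_n=\{-1,1\}^n\), conditioning set \(T=T_\eta\) with \(\mu\) uniform on \(T_\eta\), successful set \(S=S_\eta\), and target predicate \(f(y)=\mathbf 1[H(y)=H_{\mathrm{min}}]\). This predicate is polynomial-time computable because \(H_{\mathrm{min}}\) is treated as known; that is also what makes \(T_\eta\) testable, consistent with Assumption~1. Each raw uniform sample from \(\{-1,1\}^n\) costs \(\mathrm{poly}(n)\). By Assumption~1, each membership test in \(T_\eta\) also costs \(\mathrm{poly}(n)\).

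Next I check the hypotheses of \Cref{th:abstract-conditioning-search}. Hypothesis~2 holds with \(\mathcal N(x)=\mathcal N_\eta(x)\) and \(\tau_{\mathrm{search}}(n)=2^{\kappa_\eta n+o(n)}\), by Assumptions~3 and~7. For Hypothesis~3 we need a single target point \(x^\ast\) that lies in \(\mathcal N_\eta(x)\) for every \(x\in S_\eta\). Assumption~4 only says that each region contains some optimum, possibly depending on \(x\). The search procedure, however, succeeds as soon as it finds any \(y\) with \(f(y)=1\), so the proof of \Cref{th:abstract-conditioning-search} goes through verbatim with the weaker condition ``\(\mathcal N(x)\cap f^{-1}(1)\neq\emptyset\) for all \(x\in S\)''. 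I will note this explicitly; in the concrete constructions \(S_\eta\) is anyway built around one fixed optimum \(x^\ast\), so the hypothesis holds as stated. Hypothesis~4 holds with \(\beta_n=|S_\eta|/|T_\eta|>0\), using Assumption~2 that \(S_\eta\subseteq T_\eta\). We may assume \(S_\eta\neq\emptyset\), which follows from Assumption~6 for large \(n\).

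Substituting into the corollary bounds the expected time by
\[
O\!\left(\frac{2^n}{|S_\eta|}\mathrm{poly}(n)+\frac{|T_\eta|}{|S_\eta|}\,2^{\kappa_\eta n+o(n)}\right).
\]
Assumption~6 makes the first term at most \(2^{(1-\kappa_\eta)n+o(n)}\). Assumptions~5 and~6 make the second term at most \(2^{(1-\gamma_\eta)n-\kappa_\eta n+\kappa_\eta n+o(n)}=2^{(1-\gamma_\eta)n+o(n)}\). The sum is therefore \(2^{(1-\min\{\gamma_\eta,\kappa_\eta\})n+o(n)}\), as claimed.

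The argument is routine bookkeeping. The only point needing care is the mismatch in Hypothesis~3 between a single fixed target and an \(x\)-dependent optimum, handled as above. A minor further point is that the \(o(n)\) terms from Assumptions~6 and~7 must be uniform in \(n\) for fixed \(\eta\), so that they combine into a single \(o(n)\).
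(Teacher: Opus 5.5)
Your proposal is correct and matches the paper's proof. The paper also applies \Cref{cor:abstract-conditioning-search-rejection} with $T=T_\eta$, $S=S_\eta$, $\mathcal N=\mathcal N_\eta$, and then substitutes the bounds from Assumptions 5--7 exactly as you do. Your extra observation is a legitimate detail that the paper passes over: Assumption~4 only gives an $x$-dependent optimum, whereas Hypothesis~3 of the abstract theorem asks for a single fixed target. Your fix, noting that the abstract proof only uses $\mathcal N(x)\cap f^{-1}(1)\neq\emptyset$ for $x\in S$, is sound.
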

\begin{proof}
Apply \Cref{cor:abstract-conditioning-search-rejection} with ambient space \(\Omega_n=\{-1,1\}^n\), conditioning set \(T=T_\eta\), successful set \(S=S_\eta\), and neighborhood family \(\mathcal N(x)=\mathcal N_\eta(x)\). By Assumption 7, for every \(x\in T_\eta\), the region \(\mathcal N_\eta(x)\) can be exhaustively searched in time \(2^{\kappa_\eta n+o(n)}\), and by Assumption 4, for every \(x\in S_\eta\) it contains an optimum assignment. Hence the expected running time is
\[
O\!\left(\frac{2^n}{|S_\eta|}\,\mathrm{poly}(n)+\frac{|T_\eta|}{|S_\eta|}\,2^{\kappa_\eta n+o(n)}\right).
\]
Using
\[
|S_\eta|\ge 2^{\kappa_\eta n-o(n)}
\qquad\text{and}\qquad
|T_\eta|\le 2^{(1-\gamma_\eta)n},
\]
we obtain
\[
\mathbb E[\mathrm{time}]
\le
2^{(1-\kappa_\eta)n+o(n)}+2^{(1-\gamma_\eta)n+o(n)}
=
2^{(1-\min\{\gamma_\eta,\kappa_\eta\})n+o(n)}\,,
\]
as claimed.
\end{proof}

The point of \Cref{th:black-box} is that, once \(\gamma_\eta\) and \(\kappa_\eta\) are available, the running-time exponent becomes completely explicit.

% We will use the following notation for the two exponents constructed later:
% \[
% \kappa_\eta=
% \begin{cases}
% \kappa_\eta^{\mathrm{ns}} & \text{in the correlated-pair regime,}\\[1mm]
% \kappa_\eta^{\mathrm{lip}} & \text{in the local-Lipschitz regime.}
% \end{cases}
% \]

% ==================================================
\subsection{Two ways threshold exponents arise}\label{subsec:threshold-exponent-cases}\label{sub:2.3}

The black-box theorem above takes as input a threshold parameter \(\eta\in(0,1)\) and a threshold exponent \(\gamma_\eta\) such that
\[
|T_\eta|\le 2^{(1-\gamma_\eta)n},
\qquad
T_\eta=\{x:H(x)\le (1-\eta)H_{\mathrm{min}}\}.
\]
There are two conceptually different ways in which such a pair \((\gamma_\eta,\eta)\) may arise.

\paragraph{Case 1: \(\boldsymbol{(\gamma,\eta)}\) given beforehand.}
In some settings (e.g., the proof of \Cref{th6c}), one assumes from the outset that, for a fixed \(\eta\in(0,1)\), the set \(T_\eta\) satisfies
\[
|T_\eta|\le 2^{(1-\gamma)n}
\]
for some given exponent \(\gamma>0\).
In this regime, the role of the analysis is only to construct a successful set \(S_\eta\subseteq T_\eta\) and to compute the corresponding exponent \(\kappa_\eta\). Once this is done, \Cref{th:black-box} yields a classical exponent of the form
\[
\min\{\gamma,\kappa_\eta\}.
\]

\paragraph{Case 2: instance parameters determine \(\boldsymbol{(\gamma_\eta,\eta)}\).}
In other settings (e.g., the proof of \Cref{th7c}), one does not assume a threshold exponent as part of the input. Instead, one derives a bound of the form
\[
|T_\eta|\le 2^{(1-\gamma_\eta)n}
\]
from structural parameters of the instance. In this case, the pair \((\gamma_\eta,\eta)\) is produced internally by the analysis.

In the MAX-\(k\)-CSP setting considered later, the threshold exponent is derived from the intrinsic parameters of the instance, namely the irregularity parameter \(D\) and the normalized optimum scale~$\Delta$ (in the case of weighted MAX-\(k\)-CSP, $\Delta$ is replaced by the ratio \(|H_{\mathrm{min}}|/\Sigma\), where $\Sigma$ is defined in \cref{subsec:local-lipschitz-analysis}).
%the quantity \(\Sigma\)\francois{same here}, and the ratio \(|H_{\mathrm{min}}|/\Sigma\). 
A concentration argument based on McDiarmid's inequality then yields an explicit threshold exponent \(\gamma_\eta\). Once such a threshold exponent has been derived, \Cref{th:black-box} again yields a classical algorithm with exponent of the form
\[
\min\{\gamma_\eta,\kappa_\eta\}.
\]
% In the weighted MAX-\(k\)-CSP setting considered later, the relevant parameters include the irregularity parameter \(D\), the quantity \(\Sigma\), and the normalized optimum scale \(|H_{\mathrm{min}}|/\Sigma\). A concentration argument, based on McDiarmid's inequality, then yields a threshold exponent
% \[
% \gamma_\eta=\Gamma\!\left(D,\frac{|H_{\mathrm{min}}|}{\Sigma},\eta\right)
% \]
% for an explicit function \(\Gamma\).
% In the weighted MAX-\(k\)-CSP setting considered later, the relevant parameters include the arity~\(k\), the irregularity parameter \(D\), and the normalized optimum scale \(|H_{\mathrm{min}}|/W\). A concentration argument, typically based on McDiarmid's inequality, then yields a threshold exponent
% \[
% \gamma_\eta=\Gamma\!\left(k,D,\frac{|H_{\mathrm{min}}|}{W},\eta\right)
% \]
% for an explicit function \(\Gamma\). 
% Once such a threshold exponent has been derived, \Cref{th:black-box} again yields a classical algorithm with exponent of the form
% \[
% \min\{\gamma_\eta,\kappa_\eta\}.
% \]

\paragraph{Outlook.}
Thus the rest of the paper fits into the following common pattern:
\[
\boxed{
\text{either receive }(\gamma,\eta) \text{ beforehand or derive }
(\gamma_\eta,\eta)\text{ from instance parameters}
}
\]
followed by
\[
\boxed{
\text{construct a successful set }S_\eta
\text{ and compute }\kappa_\eta.
}
\]
The resulting classical exponent is then obtained from \Cref{th:black-box} as
\[
\min\{\gamma,\kappa_\eta\}
\qquad\text{or}\qquad
\min\{\gamma_\eta,\kappa_\eta\},
\]
depending on which of the two cases applies.

The next section constructs the sets \(S_\eta\) needed in these two regimes. 

% ==================================================
% Section 3: Analysis of successful sets
% ==================================================
\section{Analysis of Successful Sets}\label{sec:favorable-sets}

This section provides the exponent \(\kappa_\eta\) that appears in \Cref{th:black-box}. Standard quantitative estimates used in the derivation of the final concrete corollaries are deferred to Section~\ref{sec:technical-details}.

We do this in two different ways:

\begin{itemize}
\item
In the first construction, corresponding to Case~1 of Section~\ref{subsec:threshold-exponent-cases}, a threshold exponent \(\gamma\) is given as part of the hypothesis, and the task is only to identify a large subset \(S_\eta\subseteq T_\eta\) from which local exhaustive search recovers an optimum. This leads to a correlated-pair analysis for MAX-E$k$-LIN2, in which the successful set is a narrow Hamming layer around an optimum assignment. The resulting exponent is denoted by \(\kappa_\eta^{\mathrm{ns}}\) and will be expressed later in terms of a suitable flip-rate parameter.

\item
In the second construction, corresponding to Case~2 of Section~\ref{subsec:threshold-exponent-cases}, the threshold exponent \(\gamma_\eta\) is derived from intrinsic parameters of the instance.
%, so dependence on 
%the ratio \(|H_{\mathrm{min}}|/\Sigma\)\francois{mention that this ratio is equal to the quantity $\Delta$ used in the intro?} is already present on the threshold side. 
This leads to a local-Lipschitz analysis for weighted MAX-\(k\)-CSP, in which the successful set is a restricted Hamming ball around an optimum assignment. The resulting exponent is denoted by \(\kappa_\eta^{\mathrm{lip}}\) and will be expressed later in terms of the corresponding local-ball radius parameter.
% In the second construction, corresponding to Case~2 of Section~\ref{subsec:threshold-exponent-cases}, the threshold exponent \(\gamma_\eta\) is derived from instance parameters, so dependence on the normalized optimum scale (the parameter \(\Delta\) introduced in the introduction) is already present on the threshold side. This leads to a local-Lipschitz analysis for weighted MAX-\(k\)-CSPs, in which the successful set is a restricted Hamming ball around an optimum assignment. The resulting exponent is denoted by \(\kappa_\eta^{\mathrm{lip}}\) and will be expressed later in terms of the corresponding local-ball radius parameter.
\end{itemize}

% We do this in two different ways:

% \begin{itemize}
% \item
% In the first construction, corresponding to Case~1 of Section~\ref{subsec:threshold-exponent-cases}, a threshold exponent \(\gamma\) is given as part of the hypothesis, and the task is only to identify a large subset \(S_\eta\subseteq T_\eta\) from which local exhaustive search recovers an optimum. This leads to a correlated-pair analysis for homogeneous degree-\(k\) objectives, in which the successful set is a typical Hamming shell around an optimum assignment and
% \[
% \kappa_\eta^{\mathrm{ns}}=h(q_\eta).
% \]

% \item
% In the second construction, corresponding to Case~2 of Section~\ref{subsec:threshold-exponent-cases}, the threshold exponent \(\gamma_\eta\) is derived from instance parameters, so dependence on the normalized optimum scale (the parameter $\Delta$ introduced in the introduction) is already present on the threshold side. This leads to a local-Lipschitz analysis for weighted MAX-\(k\)-CSPs, in which the successful set is a restricted Hamming ball around an optimum assignment and
% \[
% \kappa_\eta^{\mathrm{lip}}=\frac12\,h(\theta_\eta).
% \]
% \end{itemize}

%The purpose of the present section is thus to produce the exponent \(\kappa_\eta\) in the two regimes used later. Standard shell-counting, binomial-entropy, and concentration estimates are deferred to Section~\ref{sec:technical-details}.

Remember that $h(\cdot)$ denotes the binary entropy function:
\[
h(t)=-t\log_2 t-(1-t)\log_2(1-t)
\qquad (0\le t\le 1)\,.
\]
%for the binary entropy function.

% ==================================================
\subsection{Correlated-pair analysis}\label{subsec:noise-stability-analysis}
We begin with the homogeneous degree-\(k\) setting, which is the formal framework underlying MAX-E\(k\)-LIN2 in this paper. Thus, in this subsection, we consider objective functions of the form
\[
H(x)=\sum_{S\in\mathcal F} c_S\prod_{i\in S}x_i,
\qquad x\in\{-1,1\}^n,
\]
where \(\mathcal F\) is a collection of \(k\)-element subsets of \([n]\), and \(c_S\in\mathbb R\) for each $S\in \mathcal F$. In particular, every monomial has degree exactly \(k\).

% We begin with the homogeneous degree-\(k\) setting. This is the setting relevant to Case~1 of Section~\ref{subsec:threshold-exponent-cases}, where a threshold exponent \(\gamma\) is assumed as input and the analysis only needs to produce a large successful subset of the near-optimal region.

% Throughout this subsection, \(k\ge 2\) is fixed, and
% \[
% H(x)=\sum_{S\in\mathcal F} c_S\prod_{i\in S}x_i,
% \qquad x\in\{-1,1\}^n,
% \]
% is a multilinear polynomial on the Boolean cube in which every monomial has degree exactly \(k\).

If \(H\equiv 0\), then every assignment is optimal and the discussion below is vacuous. Thus we assume from now on that \(H\not\equiv 0\). Since every monomial has positive degree, the average of \(H\) over the discrete cube is zero:
\[
\mathbb E_{x\in\{-1,1\}^n}[H(x)]=0.
\]
Because \(H\) is not identically zero, it takes both positive and negative values. In particular,
\[
H_{\mathrm{min}}=\min_{x\in\{-1,1\}^n}H(x)<0.
\]
Fix an optimum assignment
\[
x^\ast\in\{-1,1\}^n
\qquad\text{such that}\qquad
H(x^\ast)=H_{\mathrm{min}}.
\]

For \(\eta\in(0,1)\), define the threshold set
\[
T_\eta=\{x\in\{-1,1\}^n:H(x)\le (1-\eta)H_{\mathrm{min}}\}.
\]

We use the standard language of \(\rho\)-correlated pairs on the Boolean cube; see, e.g., Section~2.4 of \cite{ODonnell14}. In the present pure-degree setting, the only fact we need is that for every degree-\(k\) monomial, taking expectation under \(\rho\)-correlated noise multiplies that monomial by \(\rho^k\).

\begin{proposition}[Correlated-pair contraction]\label{prop:ns-contraction}
Let \(t=(t_1,\dots,t_n)\in\{-1,1\}^n\) be a random vector whose coordinates are independent and satisfy
\[
\Pr[t_i=-1]=q,
\qquad
\Pr[t_i=1]=1-q,
\]
and write
\[
\rho=\mathbb E[t_i]=1-2q.
\]
Let \(\odot\) denote coordinatewise multiplication, i.e.,
\[
(x\odot y)_i=x_i y_i \qquad (i\in[n]).
\]
Define the random variable
\[
X=x^\ast\odot t.
\]
Then each coordinate \(X_i\) satisfies
\[
\Pr[X_i=x_i^\ast]=1-q,
\qquad
\Pr[X_i=-x_i^\ast]=q,
\]
i.e., \(X\) is obtained from \(x^\ast\) by independently flipping each coordinate with probability \(q\) (equivalently, the pair \((x^\ast,X)\) is a \(\rho\)-correlated pair on \(\{-1,1\}^n\)). Then
\[
\mathbb E[H(X)]=\rho^k H_{\mathrm{min}}.
\]
\end{proposition}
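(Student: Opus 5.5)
The plan is to compute $\mathbb E[H(X)]$ monomial by monomial, using linearity of expectation and the independence of the coordinates of $t$.

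First, the coordinate statement. Since $X_i=x_i^\ast t_i$ and $t_i=1$ with probability $1-q$, we get $X_i=x_i^\ast$ with probability $1-q$. Likewise $t_i=-1$ with probability $q$ gives $X_i=-x_i^\ast$ with probability $q$. Independence of the $X_i$ is inherited from that of the $t_i$.

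Next, fix $S\in\mathcal F$. Then
\[
\prod_{i\in S}X_i=\Bigl(\prod_{i\in S}x_i^\ast\Bigr)\Bigl(\prod_{i\in S}t_i\Bigr).
\]
Because the $t_i$ are independent,
\[
\mathbb E\Bigl[\prod_{i\in S}t_i\Bigr]=\prod_{i\in S}\mathbb E[t_i]=\rho^{\abs{S}}=\rho^k,
\]
where the last equality uses that every $S\in\mathcal F$ has exactly $k$ elements. Therefore
\[
\mathbb E\Bigl[\prod_{i\in S}X_i\Bigr]=\rho^k\prod_{i\in S}x_i^\ast .
\]

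Finally, sum over $S\in\mathcal F$ with coefficients $c_S$. By linearity,
\[
\mathbb E[H(X)]=\sum_{S\in\mathcal F}c_S\,\rho^k\prod_{i\in S}x_i^\ast=\rho^k H(x^\ast)=\rho^k H_{\mathrm{min}}.
\]

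There is no real obstacle here. The only point needing care is that homogeneity is essential: it is what makes the factor $\rho^{\abs{S}}$ the same constant $\rho^k$ for every term, so that it can be pulled out of the sum.
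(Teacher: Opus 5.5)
Your proposal is correct and follows essentially the same route as the paper: factor each monomial of $H(x^\ast\odot t)$ as $c_S\prod_{i\in S}x_i^\ast$ times $\prod_{i\in S}t_i$, use independence and $\abs{S}=k$ to get the common factor $\rho^k$, and sum to obtain $\rho^k H(x^\ast)=\rho^k H_{\mathrm{min}}$. Your explicit check of the coordinatewise flip description is a small addition that the paper leaves implicit.
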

% \begin{proposition}[Correlated-pair contraction]\label{prop:ns-contraction}
% Let \(t=(t_1,\dots,t_n)\in\{-1,1\}^n\) be a random vector whose coordinates are independent and satisfy
% \[
% \Pr[t_i=-1]=q,
% \qquad
% \Pr[t_i=1]=1-q,
% \]
% and write
% \[
% \rho=\mathbb E[t_i]=1-2q.
% \]
% Define
% \[
% X=x^\ast\odot t.
% \]
% Equivalently, \((x^\ast,X)\) is a \(\rho\)-correlated pair \francois{The meaning of ``correlated pair'' has not been defined (I think many readers will not know the meaning).} on \(\{-1,1\}^n\). Then
% \[
% \mathbb E[H(X)]=\rho^k H_{\mathrm{min}}.
% \]
% \end{proposition}

\begin{proof}
Write
\[
H(x^\ast\odot t)=\sum_{S\in\mathcal F} b_S\prod_{i\in S} t_i,
\qquad
b_S=c_S\prod_{i\in S}x_i^\ast.
\]
Since every monomial has degree exactly \(k\), independence gives
\[
\mathbb E\!\left[\prod_{i\in S} t_i\right]=\rho^k
\qquad (S\in\mathcal F).
\]
Hence
\[
\mathbb E[H(X)]
=
\sum_{S\in\mathcal F} b_S\,\rho^k
=
\rho^k\sum_{S\in\mathcal F} b_S
=
\rho^k H(x^\ast)
=
\rho^k H_{\mathrm{min}},
\]
as claimed.
\end{proof}

This immediately gives a one-sided bound for the event that \(X\) lands in the threshold set \(T_\eta\).

\begin{proposition}[One-sided lower-tail bound]\label{prop:ns-tail}
In the setting of \Cref{prop:ns-contraction},
\[
\Pr\bigl[H(X)\le (1-\eta)H_{\mathrm{min}}\bigr]
\ge
1-\frac{1-\rho^k}{\eta}.
\]
\end{proposition}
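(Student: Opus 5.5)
The plan is a Markov-inequality argument applied to a nonnegative random variable built from \(H(X)\). Since \(H_{\mathrm{min}}<0\) is the minimum of \(H\), the random variable
\[
Y = H(X)-H_{\mathrm{min}}
\]
is nonnegative almost surely.

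First, compute its mean using \Cref{prop:ns-contraction}:
\[
\mathbb E[Y]=\rho^k H_{\mathrm{min}}-H_{\mathrm{min}}=(1-\rho^k)\,\abs{H_{\mathrm{min}}}.
\]

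Next, rewrite the complementary event in terms of \(Y\). The event \(H(X)>(1-\eta)H_{\mathrm{min}}\) is the same as
\[
Y>(1-\eta)H_{\mathrm{min}}-H_{\mathrm{min}}=\eta\abs{H_{\mathrm{min}}}.
\]
Because \(\eta>0\) and \(\abs{H_{\mathrm{min}}}>0\), this threshold is strictly positive, so Markov's inequality applies and gives
\[
\Pr\bigl[H(X)>(1-\eta)H_{\mathrm{min}}\bigr]\le \frac{\mathbb E[Y]}{\eta\abs{H_{\mathrm{min}}}}=\frac{1-\rho^k}{\eta}.
\]
Taking complements yields the claimed lower bound.

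The only step needing care is the sign bookkeeping. We need \(H_{\mathrm{min}}<0\), which was established earlier from \(H\not\equiv 0\) and the zero mean of \(H\). This is what makes \(\abs{H_{\mathrm{min}}}=-H_{\mathrm{min}}\) and the Markov threshold positive. We also need the nonnegativity of \(Y\), which holds by the definition of the minimum.

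There is no real obstacle. The bound is vacuous when \((1-\rho^k)/\eta\ge 1\), and it is informative only for \(q\) small enough that \(\rho^k>1-\eta\). That regime is what later sections will use.
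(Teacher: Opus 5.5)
Your proposal is correct and follows essentially the same route as the paper. The paper also sets \(Y=H(X)-H_{\mathrm{min}}\ge 0\) and computes \(\mathbb E[Y]=(1-\rho^k)|H_{\mathrm{min}}|\) via \Cref{prop:ns-contraction}. It then rewrites the bad event as \(Y>\eta|H_{\mathrm{min}}|\), applies Markov's inequality, and takes complements.
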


\begin{proof}
Define
\[
Y=H(X)-H_{\mathrm{min}}.
\]
Since \(x^\ast\) is optimal, \(Y\ge 0\). Moreover, by \Cref{prop:ns-contraction},
\[
\mathbb E[Y]
=
\mathbb E[H(X)]-H_{\mathrm{min}}
=
(\rho^k-1)H_{\mathrm{min}}
=
(1-\rho^k)|H_{\mathrm{min}}|.
\]
Finally,
\[
H(X)>(1-\eta)H_{\mathrm{min}}
\iff
Y>\eta |H_{\mathrm{min}}|.
\]
Markov's inequality therefore gives
\[
\Pr\bigl[H(X)>(1-\eta)H_{\mathrm{min}}\bigr]
\le
\frac{\mathbb E[Y]}{\eta |H_{\mathrm{min}}|}
=
\frac{1-\rho^k}{\eta}.
\]
Taking complements proves the claim.
\end{proof}

To convert this probability lower bound into a counting statement, we choose the correlation so that the low-threshold event has inverse-polynomial probability and then restrict to a narrow Hamming layer around the expected number of flipped coordinates.
%To convert this probability lower bound into a counting statement, we choose the correlation so that the low-threshold event has polynomial probability and then restrict to a typical Hamming shell. 

Define
\[
q_{\eta,n}=\frac{1-\left(1-\eta+\frac{\eta}{n}\right)^{1/k}}{2}.
\]
We also define the limiting flip rate
\[
q_\eta=\frac{1-(1-\eta)^{1/k}}{2},
\]
so that
\[
q_{\eta,n}=q_\eta+O_{k,\eta}(1/n).
\]
% Define \(\rho_{\eta,n}\in(0,1)\) and \(q_{\eta,n}\in(0,1/2)\) by
% \[
% \rho_{\eta,n}^k=1-\eta+\frac{\eta}{n},
% \qquad
% q_{\eta,n}=\frac{1-\rho_{\eta,n}}{2}.
% \]
% We also define the limiting flip rate
% \[
% q_\eta=\frac{1-(1-\eta)^{1/k}}{2},
% \]
% so that
% \[
% q_{\eta,n}=q_\eta+O_{k,\eta}(1/n).
% \]
Moreover,
\[
q_\eta\ge \frac{\eta}{2k}.
\]
Indeed, Bernoulli's inequality gives
\[
(1-\eta)^{1/k}\le 1-\frac{\eta}{k},
\]
and therefore
\[
q_\eta=\frac{1-(1-\eta)^{1/k}}{2}\ge \frac{\eta}{2k}.
\]

For \(t\in\{-1,1\}^n\), let
\[
\wt(t)=\bigl|\{i\in[n]:t_i=-1\}\bigr|.
\]
Let \(d_H\) denote the Hamming distance on \(\{-1,1\}^n\). Define the typical shell
\[
A_\eta=
\Bigl\{t\in\{-1,1\}^n:
\bigl|\wt(t)-q_{\eta,n}n\bigr|\le n^{2/3}
\Bigr\},
\]
and the corresponding successful set
\[
S_\eta^{\mathrm{ns}}
=
\Bigl\{
x^\ast\odot t:
t\in A_\eta,\,
H(x^\ast\odot t)\le (1-\eta)H_{\mathrm{min}}
\Bigr\}.
\]
By construction,
\[
S_\eta^{\mathrm{ns}}\subseteq T_\eta.
\]

We will recover an optimum from each successful point by searching a Hamming ball. Define
\[
r_\eta^{\mathrm{ns}}
=
\left\lceil q_\eta n+2n^{2/3}\right\rceil.
\]
Here is the main result of this subsection:
\begin{theorem}[Successful-set bound in the correlated-pair regime]\label{th:ns-favorable}
Fix \(\eta\in(0,1)\). Then, for all sufficiently large \(n\), the set \(S_\eta^{\mathrm{ns}}\) satisfies the following.
\begin{enumerate}
    \item \(S_\eta^{\mathrm{ns}}\subseteq T_\eta\).
    \item For every \(x\in S_\eta^{\mathrm{ns}}\), one has
    \[
    d_H(x,x^\ast)\le r_\eta^{\mathrm{ns}}.
    \]
    In particular, the Hamming ball \(\Ball{x}{r_\eta^{\mathrm{ns}}}\) contains an optimum assignment.
    \item One has
    \[
    |S_\eta^{\mathrm{ns}}|\ge 2^{\kappa_\eta^{\mathrm{ns}}n-o(n)},
    \qquad
    \kappa_\eta^{\mathrm{ns}}=h(q_\eta).
    \]
    \item For every \(x\in\{-1,1\}^n\),
    \[
    |\Ball{x}{r_\eta^{\mathrm{ns}}}|
    \le
    2^{\kappa_\eta^{\mathrm{ns}}n+o(n)}.
    \]
\end{enumerate}
\end{theorem}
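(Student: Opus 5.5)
Items 1 and 2 are immediate from the definitions, and items 3 and 4 are counting estimates; I would handle them in that order. For item 1, $S_\eta^{\mathrm{ns}}\subseteq T_\eta$ holds by construction. For item 2, any $x=x^\ast\odot t\in S_\eta^{\mathrm{ns}}$ satisfies
\[
d_H(x,x^\ast)=\wt(t)\le q_{\eta,n}n+n^{2/3}.
\]
Since $q_{\eta,n}=q_\eta+O_{k,\eta}(1/n)$, for large $n$ this is at most $q_\eta n+2n^{2/3}\le r_\eta^{\mathrm{ns}}$. Hence $x^\ast\in\Ball{x}{r_\eta^{\mathrm{ns}}}$.

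For item 3, I apply \Cref{prop:ns-tail} with $q=q_{\eta,n}$. Then $\rho^k=1-\eta+\eta/n$, so $1-\rho^k=\eta(1-1/n)$, and the probability that $X=x^\ast\odot t$ lands in $T_\eta$ is at least $1/n$. Next, Hoeffding/Chernoff for $\wt(t)\sim\mathrm{Bin}(n,q_{\eta,n})$ gives
\[
\Pr[t\notin A_\eta]\le 2e^{-2n^{1/3}},
\]
which is $o(1/n)$. A union bound then gives
\[
\Pr[t\in A_\eta,\ X\in T_\eta]\ge 1/(2n)
\]
for large $n$. The map $t\mapsto x^\ast\odot t$ is injective, so it suffices to bound the probability of each single $t\in A_\eta$. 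For $\wt(t)=w$ that probability is $q^w(1-q)^{n-w}$. Write $w=qn+\delta$ with $|\delta|\le n^{2/3}$. Then
\[
q^w(1-q)^{n-w}=2^{-nh(q)}\Bigl(\tfrac{q}{1-q}\Bigr)^{\delta}\le 2^{-nh(q)+O(n^{2/3})}.
\]
This uses that $q_{\eta,n}$ is bounded away from $0$ and $1/2$ for fixed $\eta,k$. Therefore
\[
|S_\eta^{\mathrm{ns}}|\ge \tfrac{1}{2n}\,2^{nh(q_{\eta,n})-O(n^{2/3})}.
\]
Finally, $h$ is continuous and Lipschitz on a compact subinterval of $(0,1/2)$, so $h(q_{\eta,n})=h(q_\eta)+O(1/n)$. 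This gives $2^{h(q_\eta)n-o(n)}$.

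For item 4, I use the standard bound $|\Ball{x}{r}|\le 2^{nh(r/n)}$, valid for $r\le n/2$. Here $r_\eta^{\mathrm{ns}}/n=q_\eta+O(n^{-1/3})<1/2$ for large $n$, because $q_\eta<1/2$. Continuity of $h$ then gives $2^{h(q_\eta)n+o(n)}$.

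The main obstacle is the uniform control of the $\delta$-correction in the point probability (and of $h(q_{\eta,n})$ versus $h(q_\eta)$). Both need $q_\eta$ to be strictly inside $(0,1/2)$, which holds since $\eta\in(0,1)$. The remaining steps are routine once $1/n$ is balanced against the Chernoff tail.
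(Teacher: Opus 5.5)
Your proof is correct and follows essentially the same route as the paper: \Cref{prop:ns-tail} with $q=q_{\eta,n}$ gives mass at least $1/n$, Hoeffding removes the complement of $A_\eta$, and a uniform bound $2^{-h(q_\eta)n+O(n^{2/3})}$ on point probabilities in $A_\eta$ turns that mass into a count. The only differences are cosmetic: you bound the point probabilities via the exact identity $q^w(1-q)^{n-w}=2^{-nh(q)}\bigl(q/(1-q)\bigr)^{\delta}$ and the ball volume via the standard estimate $\sum_{j\le r}\binom{n}{j}\le 2^{nh(r/n)}$ for $r\le n/2$, whereas the paper uses a smoothness argument and the bound $(r+1)\binom{n}{r}$ with Stirling.
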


\begin{proof}
Deferred to Section~\ref{sec:technical-details}.
\end{proof}

\Cref{th:ns-favorable} identifies the exponent in the correlated-pair regime:
\[
\boxed{
\kappa_\eta^{\mathrm{ns}}=h(q_\eta).
}
\]
This exponent depends only on \(\eta\) and the degree \(k\), and not on the normalized optimum scale. That is exactly what makes this construction natural in the given-threshold setting.

% ==================================================
\subsection{Local-Lipschitz analysis}\label{subsec:local-lipschitz-analysis}

We now turn to the second regime. This is the setting relevant to Case~2 of Section~\ref{subsec:threshold-exponent-cases}, where the threshold exponent is derived from instance parameters. Here dependence on the normalized optimum scale is already present on the threshold side, and we exploit this by working with weighted MAX-\(k\)-CSP instances.

A weighted MAX-\(k\)-CSP instance on variables
\[
x_1,\dots,x_n\in\{-1,1\}
\]
consists of constraints indexed by \(j\in[m]\). 
% For each constraint \(j\), let
% \[
% \vars(j)\subseteq [n]
% \qquad\text{with}\qquad
% k_j:=|\vars(j)|\le k.
% \]
The \(j\)-th constraint is specified by
\begin{itemize}
    \item a set of variables \(\vars(j)\subseteq[n]\) of size \(k_j\le k\),
%    \item a literal pattern on these variables,
    \item a positive rational weight \(w_j\in\mathbb Q_{>0}\),
    \item a nontrivial local predicate
    \[
    P_j:\{-1,1\}^{k_j}\to\{0,1\}.
    \]
\end{itemize}
We write
\[
s_j=|P_j^{-1}(1)|\in[1,2^{k_j}-1].
\]

Define the centered contribution of clause \(j\) by
\[
C_j(x)=
\begin{cases}
-w_j & \text{if the \(j\)-th constraint is satisfied by \(x\),}\\[2mm]
\dfrac{s_j}{2^{k_j}-s_j}\,w_j & \text{if the \(j\)-th constraint is violated by \(x\).}
\end{cases}
\]
The centered objective is
\[
H(x)=\sum_{j=1}^m C_j(x).
\]

Because each local input is uniform under the uniform distribution on \(\{-1,1\}^n\), each centered contribution \(C_j\) has mean zero. Therefore
\[
\mathbb E[H]=0.
\]

We write
\[
W=\sum_{j=1}^m w_j,
\qquad
H_{\mathrm{min}}=\min_{x\in\{-1,1\}^n} H(x)\le 0,
\]
and fix an optimum assignment \(x^\ast\in\{-1,1\}^n\).

Since each summand satisfies \(C_j(x)\ge -w_j\), one has
\[
H(x)\ge -W
\qquad\text{for all }x\in\{-1,1\}^n.
\]
Hence
\[
0\le |H_{\mathrm{min}}|\le W.
\]

For \(\eta\in(0,1)\), define the threshold set
\[
T_\eta=\{x\in\{-1,1\}^n:H(x)\le (1-\eta)H_{\mathrm{min}}\}.
\]

For each variable \(i\in[n]\), define its weighted degree by
\[
d_i=\sum_{j:\,i\in \vars(j)} w_j.
\]
We also define the total incident weight
\[
\Sigma=\sum_{i=1}^n d_i=\sum_{j=1}^m k_j w_j,
\]
the average weighted degree
\[
d_{\mathrm{avg}}=\frac{\Sigma}{n},
\]
and the irregularity parameter
\[
D=\frac{n\sum_{i=1}^n d_i^2}{\Sigma^2}.
\]
By the Cauchy--Schwarz inequality,
\[
\sum_{i=1}^n d_i^2\ge \frac{1}{n}\left(\sum_{i=1}^n d_i\right)^2
= \frac{\Sigma^2}{n},
\]
and therefore
\[
D\ge 1.
\]

Since \(k_j\le k\) for every \(j\), we also have
\[
\Sigma=\sum_{j=1}^m k_j w_j\le k\sum_{j=1}^m w_j = kW.
\]
This inequality will later enable us to derive several coarse bounds. 

%Thus, for the coarse bounds derived later, the worst case occurs when all constraints have arity exactly \(k\).

For each clause \(j\), define
\[
\Lambda_j=\frac{2^{k_j}}{2^{k_j}-s_j},
\qquad
\Lambda_{\max}=\max_{j\in[m]}\Lambda_j.
\]

Let \(d_H\) denote the Hamming distance on \(\{-1,1\}^n\). For \(L\subseteq[n]\), \(x\in\{-1,1\}^n\), and \(r\ge 0\), define the restricted Hamming ball
\[
\LightBall{L}{x}{r}
=
\{y\in\{-1,1\}^n:\ \{i\in[n]:x_i\neq y_i\}\subseteq L,\ d_H(x,y)\le r\}.
\]

Consider the set
\[
L_{\av}=\{i\in[n]:d_i\le 2d_{\mathrm{avg}}\}.
\]
For conciseness, we introduce the notation \(\LightBall{\av}{x}{r}\) for
\(\LightBall{L_{\av}}{x}{r}\).
The algorithmic point is that, on the coordinates in \(L_{\av}\), a bounded number of flips changes the objective by only a controlled amount.
Formalizing this argument leads to the following statement.

\begin{proposition}[Local-Lipschitz bound on light coordinates]\label{prop:lip-local}
For every integer \(r\ge 0\) and every
\[
x\in \LightBall{\av}{x^\ast}{r},
\]
we have
\[
H(x)\le H_{\mathrm{min}}+2\Lambda_{\max} d_{\mathrm{avg}}\,r.
\]
\end{proposition}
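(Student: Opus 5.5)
The plan is a direct telescoping argument along a path of single flips from \(x^\ast\) to \(x\). Let \(F=\{i\in[n]:x_i\neq x^\ast_i\}\); by definition of \(\LightBall{\av}{x^\ast}{r}\) we have \(F\subseteq L_{\av}\) and \(|F|\le r\). Enumerate \(F=\{i_1,\dots,i_\ell\}\) with \(\ell\le r\). Define intermediate assignments \(x^{(0)}=x^\ast\), and let \(x^{(s)}\) be obtained from \(x^{(s-1)}\) by flipping coordinate \(i_s\), so that \(x^{(\ell)}=x\). Then
\[
H(x)-H_{\mathrm{min}}=\sum_{s=1}^{\ell}\bigl(H(x^{(s)})-H(x^{(s-1)})\bigr),
\]
and it suffices to show that each single flip of a light coordinate changes \(H\) by at most \(2\Lambda_{\max}d_{\mathrm{avg}}\).

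For the single-flip bound, fix \(y\) and a coordinate \(i\), and let \(y'\) be \(y\) with coordinate \(i\) flipped. Only constraints \(j\) with \(i\in\vars(j)\) can change value. For such a constraint, \(C_j\) takes only the two values \(-w_j\) and \(\frac{s_j}{2^{k_j}-s_j}w_j\), so
\[
\abs{C_j(y')-C_j(y)}\le w_j\Bigl(1+\frac{s_j}{2^{k_j}-s_j}\Bigr)=\frac{2^{k_j}}{2^{k_j}-s_j}\,w_j=\Lambda_j w_j\le\Lambda_{\max}w_j .
\]
Summing over the constraints incident to \(i\) gives \(\abs{H(y')-H(y)}\le\Lambda_{\max}d_i\). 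For \(i\in L_{\av}\) we have \(d_i\le 2d_{\mathrm{avg}}\), so each step in the telescoping sum is at most \(2\Lambda_{\max}d_{\mathrm{avg}}\). Hence \(H(x)\le H_{\mathrm{min}}+2\Lambda_{\max}d_{\mathrm{avg}}\,\ell\le H_{\mathrm{min}}+2\Lambda_{\max}d_{\mathrm{avg}}\,r\).

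There is no real obstacle here. The only point needing care is the per-constraint jump, which is the difference between the satisfied value and the violated value, \(w_j(1+s_j/(2^{k_j}-s_j))=\Lambda_j w_j\). A second minor point is that the intermediate points \(x^{(s)}\) differ from \(x^\ast\) only on subsets of \(F\). This does not matter, because the single-flip bound holds at every \(y\), not just near \(x^\ast\).
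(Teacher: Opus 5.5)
Your proof is correct and matches the paper's argument: flip the differing light coordinates one at a time starting from \(x^\ast\), bound each single flip by \(\sum_{j:\,i\in\vars(j)}\Lambda_j w_j\le\Lambda_{\max}d_i\le 2\Lambda_{\max}d_{\mathrm{avg}}\), and sum over at most \(r\) flips. Your explicit telescoping sum, together with your remark that the single-flip bound holds at every intermediate point, makes precise a step the paper leaves implicit.
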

% \begin{proposition}[Local-Lipschitz bound on light coordinates]\label{prop:lip-local}
% For every integer \(r\ge 0\) and every
% \[
% x\in \LightBall{\av}{x^\ast}{r},
% \]
% we have
% \[
% H(x)\le H_{\mathrm{min}}+2\Lambda_{\max} d_{\mathrm{avg}}\,r.
% \]
% \end{proposition}

\begin{proof}
Starting from \(x^\ast\), flip the differing coordinates one by one. If one flips a coordinate \(i\in L_{\av}\), only constraints containing \(i\) can change. For any such constraint \(j\), the two possible values of \(C_j\) differ by exactly
\[
w_j+\frac{s_j}{2^{k_j}-s_j}w_j
=
\Lambda_j w_j
\le
\Lambda_{\max} w_j.
\]
Hence the total change caused by flipping \(i\) is at most
\[
\sum_{j:\,i\in\vars(j)} \Lambda_j w_j
\le
\Lambda_{\max}\sum_{j:\,i\in\vars(j)} w_j
=
\Lambda_{\max} d_i
\le
2\Lambda_{\max} d_{\mathrm{avg}}.
\]
Summing over at most \(r\) flips proves the claim.
\end{proof}
% \begin{proof}
% Starting from \(x^\ast\), flip the differing coordinates one by one. If one flips a coordinate \(i\in L_{\av}\), only constraints containing \(i\) can change. For any such constraint \(j\), the two possible values of \(C_j\) differ by exactly
% \[
% w_j+\frac{s_j}{2^k-s_j}w_j
% =
% \Lambda_j w_j
% \le
% \Lambda_{\max} w_j.
% \]
% Hence the total change caused by flipping \(i\) is at most
% \[
% \sum_{j:\,i\in\vars(j)} \Lambda_j w_j
% \le
% \Lambda_{\max}\sum_{j:\,i\in\vars(j)} w_j
% =
% \Lambda_{\max} d_i
% \le
% 2\Lambda_{\max} d_{\mathrm{avg}}.
% \]
% Summing over at most \(r\) flips proves the claim.
% \end{proof}

We now choose the radius so that this additive error stays within the threshold slack \(\eta |H_{\mathrm{min}}|\). Define
\[
\theta_\eta
=
\frac{\eta |H_{\mathrm{min}}|}{\Lambda_{\max}\Sigma},
\qquad
r_\eta^{\mathrm{lip}}
=
\left\lfloor
\frac{\eta |H_{\mathrm{min}}|}{2\Lambda_{\max} d_{\mathrm{avg}}}
\right\rfloor
=
\left\lfloor \frac{\theta_\eta n}{2}\right\rfloor,
\]
and
\[
S_\eta^{\mathrm{lip}}
=
\LightBall{\av}{x^\ast}{r_\eta^{\mathrm{lip}}}.
\]

Since \(\Sigma\le kW\), we also have the coarse lower bound
\[
\theta_\eta
\ge
\frac{\eta}{\Lambda_{\max}k}\cdot \frac{|H_{\mathrm{min}}|}{W}.
\]
Thus the smallest value of \(\theta_\eta\) compatible with the parameters \(k\), \(W\), \(\Lambda_{\max}\), and \(|H_{\mathrm{min}}|\) is attained when all constraints have arity exactly \(k\).

% We now choose the radius so that this additive error stays within the threshold slack \(\eta |H_{\mathrm{min}}|\). Define
% \[
% \theta_\eta
% =
% \frac{\eta}{\Lambda_{\max}k}\cdot \frac{|H_{\mathrm{min}}|}{W}
% =
% \frac{\eta(2^k-s_{\min})}{2^k k}\cdot \frac{|H_{\mathrm{min}}|}{W},
% \qquad
% r_\eta^{\mathrm{lip}}
% =
% \left\lfloor
% \frac{\eta |H_{\mathrm{min}}|}{2\Lambda_{\max} d_{\mathrm{avg}}}
% \right\rfloor
% =
% \left\lfloor \frac{\theta_\eta n}{2}\right\rfloor,
% \]
% and
% \[
% S_\eta^{\mathrm{lip}}
% =
% \LightBall{\av}{x^\ast}{r_\eta^{\mathrm{lip}}}.
% \]

\begin{theorem}[Successful-set bound in the local-Lipschitz regime]\label{th:lip-favorable}
Fix \(\eta\in(0,1)\). Then the set \(S_\eta^{\mathrm{lip}}\) satisfies the following.
\begin{enumerate}
    \item \(S_\eta^{\mathrm{lip}}\subseteq T_\eta\).
    \item For every \(x\in S_\eta^{\mathrm{lip}}\), the restricted Hamming ball
    \[
    \LightBall{\av}{x}{r_\eta^{\mathrm{lip}}}
    \]
    contains an optimum assignment.
    \item One has
    \[
    |S_\eta^{\mathrm{lip}}|
    \ge
    2^{\kappa_\eta^{\mathrm{lip}}n-o(n)},
    \qquad
    \kappa_\eta^{\mathrm{lip}}=\frac12\,h(\theta_\eta).
    \]
    \item For every \(x\in\{-1,1\}^n\),
    \[
    \left|
    \LightBall{\av}{x}{r_\eta^{\mathrm{lip}}}
    \right|
    =
    |S_\eta^{\mathrm{lip}}|.
    \]
    In particular, the corresponding region can be searched in time
    \[
    O^*(|S_\eta^{\mathrm{lip}}|).
    \]
\end{enumerate}
\end{theorem}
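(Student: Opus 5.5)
Items 1, 2 and 4 follow directly from the definitions and \Cref{prop:lip-local}; item 3 is a counting estimate whose only real content is that at least half of the coordinates are light.

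\textbf{Item 1.} Take \(x\in S_\eta^{\mathrm{lip}}=\LightBall{\av}{x^\ast}{r_\eta^{\mathrm{lip}}}\). By \Cref{prop:lip-local} with \(r=r_\eta^{\mathrm{lip}}\),
\[
H(x)\le H_{\mathrm{min}}+2\Lambda_{\max}d_{\mathrm{avg}}\,r_\eta^{\mathrm{lip}}.
\]
By the choice of the floor, \(2\Lambda_{\max}d_{\mathrm{avg}}\,r_\eta^{\mathrm{lip}}\le \eta|H_{\mathrm{min}}|\). Since \(H_{\mathrm{min}}\le 0\), this gives \(H(x)\le H_{\mathrm{min}}+\eta|H_{\mathrm{min}}|=(1-\eta)H_{\mathrm{min}}\). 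Hence \(x\in T_\eta\). The degenerate case \(H_{\mathrm{min}}=0\) gives radius \(0\) and \(S_\eta^{\mathrm{lip}}=\{x^\ast\}\), which is harmless.

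\textbf{Item 2.} The relation "\(y\) differs from \(x\) only on \(L_{\av}\) in at most \(r\) positions" is symmetric in \(x\) and \(y\). So \(x\in\LightBall{\av}{x^\ast}{r}\) iff \(x^\ast\in\LightBall{\av}{x}{r}\), and the optimum \(x^\ast\) lies in the search region of every \(x\in S_\eta^{\mathrm{lip}}\).

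\textbf{Item 4.} The map \(y\mapsto y\odot x\odot x^\ast\) is a bijection from \(\LightBall{\av}{x^\ast}{r}\) onto \(\LightBall{\av}{x}{r}\), since it preserves the set of positions where two points differ. Both sets therefore have size \(\sum_{i\le r}\binom{|L_{\av}|}{i}\). The region can be enumerated by listing all subsets of \(L_{\av}\) of size at most \(r\), each in polynomial time, which gives search time \(O^*(|S_\eta^{\mathrm{lip}}|)\).

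\textbf{Item 3 (main step).} I first lower-bound \(|L_{\av}|\). By Markov's inequality on the nonnegative \(d_i\), which have average \(d_{\mathrm{avg}}\), at most \(n/2\) coordinates satisfy \(d_i>2d_{\mathrm{avg}}\). Hence \(\ell:=|L_{\av}|\ge n/2\), and
\[
|S_\eta^{\mathrm{lip}}|\ge \binom{\ell}{r}\ge\binom{\lceil n/2\rceil}{r},\qquad r=\lfloor \theta_\eta n/2\rfloor .
\]
Note \(\theta_\eta\le 1\), because \(\eta<1\), \(\Lambda_{\max}\ge 1\) and \(|H_{\mathrm{min}}|\le W\le\Sigma\) (each \(k_j\ge1\)). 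So \(r\le n/2\) and the binomial is well defined.

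With \(N=\lceil n/2\rceil\), the standard estimate \(\binom{N}{\lfloor\alpha N\rfloor}\ge 2^{h(\alpha)N-O(\log N)}\), taken with \(\alpha\approx\theta_\eta\), gives
\[
|S_\eta^{\mathrm{lip}}|\ge 2^{h(\theta_\eta)n/2-o(n)}.
\]
The rounding of \(r\) shifts \(\alpha\) by only \(O(1/n)\). By continuity of \(h\) (it is uniformly continuous on \([0,1]\)), this costs only \(o(n)\) in the exponent, uniformly in \(\theta_\eta\), including when \(\theta_\eta\) depends on \(n\).

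The one delicate point is the \(o(n)\) bookkeeping when \(\theta_\eta\) is close to \(0\) or \(1\). Using \(\binom{N}{r}\ge\frac{1}{N+1}2^{Nh(r/N)}\) for every \(0\le r\le N\) makes this uniform.
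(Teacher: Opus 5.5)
Your proof is correct and follows the paper's argument. Items 1, 2 and 4 are handled the same way, and item 3 likewise rests on \(|L_{\av}|\ge n/2\) (the paper's \Cref{lem:lip-half-light}) plus a single-layer binomial entropy lower bound; the only difference is bookkeeping, since you take \(N=\lceil n/2\rceil\) and \(r=r_\eta^{\mathrm{lip}}\) directly and absorb the rounding via the all-\(r\) bound \(\binom{N}{r}\ge\frac{1}{N+1}2^{Nh(r/N)}\) and uniform continuity of \(h\), which removes the paper's split into \(\theta_\eta\le 1/2\) (handled by \Cref{lem:lip-rounded-binomial}) and \(\theta_\eta>1/2\).
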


% \begin{theorem}[Successful-set bound in the local-Lipschitz regime]\label{th:lip-favorable}
% Fix \(\eta\in(0,1)\). Then the set \(S_\eta^{\mathrm{lip}}\) satisfies the following.
% \begin{enumerate}
%     \item \(S_\eta^{\mathrm{lip}}\subseteq T_\eta\).
%     \item For every \(x\in S_\eta^{\mathrm{lip}}\), the restricted Hamming ball
%     \[
%     \LightBall{\av}{x}{r_\eta^{\mathrm{lip}}}
%     \]
%     contains an optimum assignment.
%     \item One has
%     \[
%     |S_\eta^{\mathrm{lip}}|
%     \ge
%     2^{\kappa_\eta^{\mathrm{lip}}n-o(n)},
%     \qquad
%     \kappa_\eta^{\mathrm{lip}}=\frac12\,h(\theta_\eta).
%     \]
%     \item For every \(x\in\{-1,1\}^n\),
%     \[
%     \left|
%     \LightBall{\av}{x}{r_\eta^{\mathrm{lip}}}
%     \right|
%     =
%     |S_\eta^{\mathrm{lip}}|.
%     \]
%     In particular, the corresponding region can be searched in time
%     \[
%     O^*(|S_\eta^{\mathrm{lip}}|).
%     \]
% \end{enumerate}
% \end{theorem}

\begin{proof}
Deferred to Section~\ref{sec:technical-details}.
\end{proof}

% \begin{remark}\label{rem:lip-worst-case}
% Since \(\Sigma\le kW\), the parameter \(\theta_\eta\) satisfies
% \[
% \theta_\eta
% \ge
% \frac{\eta}{\Lambda_{\max}k}\cdot \frac{|H_{\mathrm{min}}|}{W}.
% \]
% Hence the corresponding successful-set exponent \(\kappa_\eta^{\mathrm{lip}}=\frac12 h(\theta_\eta)\) is never smaller than the value obtained in the exact-arity-\(k\) case.
% \end{remark}

\Cref{th:lip-favorable} characterizes the exponent in the local-Lipschitz regime:
\[
\boxed{
\kappa_\eta^{\mathrm{lip}}=\frac12\,h(\theta_\eta).
}
\]
Unlike the correlated-pair exponent \(h(q_\eta)\), this quantity depends explicitly on the ratio \(|H_{\mathrm{min}}|/\Sigma\). This is natural in the present regime, because the threshold exponent derived from concentration bounds also depends on \(|H_{\mathrm{min}}|/\Sigma\).
%This is natural in the present regime, because the threshold exponent derived from concentration bounds depends on the same scale
% Unlike the correlated-pair exponent \(h(q_\eta)\), this quantity depends explicitly on the ratio \(|H_{\mathrm{min}}|/\Sigma\). In the present regime this is natural, because the threshold exponent itself will already depend on the same scale when derived from concentration bounds.
% Unlike the correlated-pair exponent \(h(q_\eta)\), this quantity depends explicitly on the normalized optimum scale \(|H_{\mathrm{min}}|/W\). In the present regime this is natural, because the threshold exponent itself will already depend on the same scale when derived from concentration bounds.

% ==================================================
\subsection*{Summary of Section~\ref{sec:favorable-sets}}

This section has provided the exponents needed in the black-box theorem of Section~\ref{sec:framework}:
\[
\kappa_\eta=
\begin{cases}
\kappa_\eta^{\mathrm{ns}}=h(q_\eta)
& \text{in the correlated-pair regime,}\\[1mm]
\kappa_\eta^{\mathrm{lip}}=\dfrac12 h(\theta_\eta)
& \text{in the local-Lipschitz regime, where }
\theta_\eta=\dfrac{\eta |H_{\mathrm{min}}|}{\Lambda_{\max}\Sigma}.
\end{cases}
\]
% \[
% \kappa_\eta=
% \begin{cases}
% \kappa_\eta^{\mathrm{ns}}=h(q_\eta)
% & \text{in the correlated-pair regime,}\\[1mm]
% \kappa_\eta^{\mathrm{lip}}=\dfrac12\,h(\theta_\eta)
% & \text{in the local-Lipschitz regime.}
% \end{cases}
% \]
The next section combines these successful-set constructions with threshold exponents \(\gamma_\eta\) to obtain concrete optimization algorithms.

% ==================================================
% Section 4: Concrete corollaries
% ==================================================
\section{Proof of the Main Results}\label{sec:concrete-corollaries}
In this section, we prove Theorems \ref{th6c}, \ref{th7c}, \ref{th6q2} and \ref{th7q2}. This is done by combining the black-box theorem of Section~\ref{sec:framework} with the successful-set constructions of Section~\ref{sec:favorable-sets}. This yields two results corresponding to the two cases described in Section~\ref{subsec:threshold-exponent-cases}.

The first case, discussed in \Cref{subsec:concrete-eklin2}, is the one in which a threshold-set exponent \((\gamma,\eta)\) is assumed as part of the input hypothesis. This leads naturally to MAX-E\(k\)-LIN2, where the successful-set exponent comes from the correlated-pair analysis of Section~\ref{subsec:noise-stability-analysis} and does not depend on the normalized optimum scale. This yields \Cref{th6c}. 

The second case, discussed in \Cref{subsec:concrete-maxkcsp}, is the one in which the threshold-set exponent is derived from instance parameters. In this regime, dependence on \(|H_{\mathrm{min}}|/\Sigma\) is already unavoidable in the upper bound on the threshold set. We therefore use the local-Lipschitz analysis of Section~\ref{subsec:local-lipschitz-analysis}, which yields a result for weighted MAX-\(k\)-CSP. This yields \Cref{th7c}.

Finally, \Cref{subsec:brief-quantum} explains how to derive \Cref{th6q2,th7q2}.

% ==================================================
\subsection[Case 1: given $(\gamma,\eta)$ --- MAX-E$k$-LIN2
]{Case 1: given \texorpdfstring{$\boldsymbol{(\gamma,\eta)}$}{(gamma, eta)} --- MAX-E\texorpdfstring{$k$}{k}-LIN2}
\label{subsec:concrete-eklin2}

We first consider the setting in which a threshold-set exponent is assumed as part of the input hypothesis:
\[
|T_\eta|\le 2^{(1-\gamma)n},
\qquad
T_\eta=\{x:H(x)\le (1-\eta)H_{\mathrm{min}}\}.
\]
In this case, the only remaining task is to construct a large successful subset of \(T_\eta\) and compute the corresponding exponent.

For aggregated weighted MAX-E\(k\)-LIN2 instances, maximizing the satisfied weight is equivalent to minimizing a canonical centered homogeneous degree-\(k\) polynomial \(H\) on \(\{-1,1\}^n\). Thus, the homogeneous degree-\(k\) framework of Section~\ref{subsec:noise-stability-analysis} applies directly to the centered objective canonically associated with the MAX-E\(k\)-LIN2 instance.

By \Cref{th:ns-favorable}, for every fixed \(\eta\in(0,1)\), there exists an explicit set
\[
S_\eta^{\mathrm{ns}}\subseteq T_\eta
\]
such that
\[
|S_\eta^{\mathrm{ns}}|
\ge
2^{\kappa_\eta^{\mathrm{ns}}n-o(n)},
\qquad
\kappa_\eta^{\mathrm{ns}}=h(q_\eta),
\qquad
q_\eta=\frac{1-(1-\eta)^{1/k}}{2},
\]
and every point of \(S_\eta^{\mathrm{ns}}\) comes with a Hamming ball containing an optimum assignment.

Applying the black-box theorem yields the following result. 

\begin{theorem}[Main theorem for MAX-E\(k\)-LIN2]\label{th:concrete-eklin2}
Assume that the optimum value \(H_{\mathrm{min}}\) is given. Fix \(\eta\in(0,1)\) and \(\gamma>0\). Suppose that
\[
|T_\eta|\le 2^{(1-\gamma)n},
\qquad
T_\eta=\{x:H(x)\le (1-\eta)H_{\mathrm{min}}\}.
\]
Then an optimum assignment can be found with high probability in time
$
2^{(1-c)n+o(n)}
$ with
\[
c=
\min\{\gamma,h(q_\eta)\}\,,
\]
where
\[
q_\eta=\frac{1-(1-\eta)^{1/k}}{2}.
\]
\end{theorem}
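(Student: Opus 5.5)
The plan is to apply \Cref{th:black-box} with $\gamma_\eta=\gamma$ and $\kappa_\eta=\kappa_\eta^{\mathrm{ns}}=h(q_\eta)$. The search region is $\mathcal N_\eta(x)=\Ball{x}{r_\eta^{\mathrm{ns}}}$ for every $x\in T_\eta$, and the successful set is $S_\eta^{\mathrm{ns}}$ from \Cref{th:ns-favorable}. Then I convert the expected-time bound into a high-probability bound.

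First I verify the seven assumptions of \Cref{sub:2.2}.
\begin{enumerate}
\item Testability: since $H_{\mathrm{min}}$ is given and $H$ has $\mathrm{poly}(n)$ encoding length, computing $H(x)$ and comparing it with $(1-\eta)H_{\mathrm{min}}$ takes polynomial time.
\item Existence of $S_\eta^{\mathrm{ns}}$ and the inclusion $S_\eta^{\mathrm{ns}}\subseteq T_\eta$: this is item 1 of \Cref{th:ns-favorable}. The set need not be computable; the algorithm never constructs it, since it is only used in the analysis.
\item Search regions: these are the Hamming balls defined above.
\item Every ball around a point of $S_\eta^{\mathrm{ns}}$ contains an optimum: this is item 2.
\item Thinness of $T_\eta$: this is the hypothesis with exponent $\gamma$.
\item Size of $S_\eta^{\mathrm{ns}}$: this is item 3.
\item Search cost: by item 4, each ball has at most $2^{h(q_\eta)n+o(n)}$ points. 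I enumerate them by listing all subsets of at most $r_\eta^{\mathrm{ns}}$ coordinates and evaluating $H$ on each point in $\mathrm{poly}(n)$ time, so the search takes $2^{h(q_\eta)n+o(n)}$ time.
\end{enumerate}
The target predicate is $f(y)=\mathbf 1[H(y)=H_{\mathrm{min}}]$, which is polynomial-time computable because $H_{\mathrm{min}}$ is known. \Cref{th:black-box} then gives expected time $2^{(1-c)n+o(n)}$ with $c=\min\{\gamma,h(q_\eta)\}$.

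To reach ``with high probability'', I apply Markov's inequality. Running for twice the expected-time bound succeeds with probability at least $1/2$. Doing $O(\log n)$, or $n$, independent restarts each truncated at that budget drives the failure probability to $2^{-n}$, and this costs only a factor absorbed in $2^{o(n)}$. Any output $y$ can be checked via $H(y)=H_{\mathrm{min}}$, so the algorithm never returns a wrong answer.

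The main obstacle is hidden in item 3 and item 4 of \Cref{th:ns-favorable}, but those are assumed here. What remains is bookkeeping about ``sufficiently large $n$'', which is handled by brute-forcing small $n$ in constant time. There is also a minor point in the case $H\equiv 0$: there every point is optimal, so the case is trivial. Finally, the hypotheses of \Cref{th:ns-favorable} require the homogeneous degree-$k$ form, and the paper notes this is the canonical centered objective of MAX-E$k$-LIN2.
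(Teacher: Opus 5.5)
Your proposal is correct and follows the paper's own proof. It combines \Cref{th:black-box} with \Cref{th:ns-favorable}, using threshold exponent $\gamma$, successful-set exponent $h(q_\eta)$ and Hamming-ball search regions, and then turns the expected-time bound into a bounded-error one via Markov's inequality. Your explicit check of the seven assumptions and your handling of small $n$ only spell out what the paper's two-line proof leaves implicit.
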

\begin{proof}%[Proof of \Cref{th:concrete-eklin2}]
Combine \Cref{th:black-box} with \Cref{th:ns-favorable}, and then convert the expected time complexity into a bounded-error complexity using Markov inequality. The threshold-set exponent is \(\gamma\), and the successful-set exponent is \(\kappa_\eta^{\mathrm{ns}}=h(q_\eta)\).
\end{proof}
\begin{remark}\label{rem:1}
The assumption that the optimum value \(H_{\mathrm{min}}\) is given is made to simplify the presentation. It is used only to implement the membership test in \(T_\eta\). For unweighted MAX-E\(k\)-LIN2, this assumption can be easily removed, since there are only polynomially many possible values for \(H_{\mathrm{min}}\). As explained in Appendices \ref{sec:standard-variants} and \ref{sec:top-k-variant}, this assumption can also be removed in the weighted case, with at most a $2^{o(n)}$ multiplicative overhead.
\end{remark}

We obtain \Cref{th6c} stated in the introduction by using the inequality
\[
q_\eta\ge \frac{\eta}{2k},
\]
which was established in Section~\ref{subsec:noise-stability-analysis}.
% We obtain \Cref{th6c} stated in the introduction by using the inequality 
% \[
% q_\eta\ge \frac{\eta}{2k}\,.
% \]

\begin{remark}
\Cref{th:concrete-eklin2} is exactly Case~1 of Section~\ref{subsec:threshold-exponent-cases}: the exponent \(\gamma\) is supplied by hypothesis, and the final classical exponent is obtained by comparing \(\gamma\) with the successful-set exponent \(h(q_\eta)\).
The salient feature of this regime is that the successful-set exponent is determined entirely by homogeneous degree-\(k\) noise stability on the Boolean cube. In particular, once \((\gamma,\eta)\) is given, the resulting exponent
\[
\min\{\gamma,h(q_\eta)\}
\]
does not depend on the normalized optimum scale.
\end{remark}
% ==================================================
\subsection[Case 2: instance-determined $(\gamma_\eta,\eta)$ --- weighted MAX-$k$-CSP
]{Case 2: instance-determined \texorpdfstring{$\boldsymbol{(\gamma_\eta,\eta)}$}{(gamma_eta, eta)} --- weighted MAX-\texorpdfstring{$\boldsymbol{k}$}{k}-CSP}
\label{subsec:concrete-maxkcsp}

We now turn to the setting in which the threshold-set exponent is derived from the instance itself. This is the natural regime for the local-Lipschitz analysis of weighted MAX-\(k\)-CSP.

%The point of using the local-Lipschitz framework here is the following. In Case 1, dependence of the successful-set exponent on the normalized optimum scale \(|H_{\mathrm{min}}|/W\) would be undesirable, since the sublevel-set exponent would already be supplied externally. In the present case, however, such dependence is already present in the concentration bound that upper bounds the conditioning set. It is therefore natural to allow the successful-set exponent to depend on the same scale, and in return to obtain a result for weighted MAX-\(k\)-CSPs rather than only for homogeneous degree-\(k\) objectives.

By \Cref{th:lip-favorable}, for every fixed \(\eta\in(0,1)\), there exists an explicit set
\[
S_\eta^{\mathrm{lip}}\subseteq T_\eta
\]
such that
\[
|S_\eta^{\mathrm{lip}}|
\ge
2^{\kappa_\eta^{\mathrm{lip}}n-o(n)},
\qquad
\kappa_\eta^{\mathrm{lip}}=\frac12\,h(\theta_\eta),
\]
where
\[
\theta_\eta
=
\frac{\eta |H_{\mathrm{min}}|}{\Lambda_{\max}\Sigma}.
\]
% where
% \[
% \theta_\eta
% =
% \frac{\eta}{\Lambda_{\max}k}\cdot \frac{|H_{\mathrm{min}}|}{W}.
% % =
% % \frac{\eta(2^k-s_{\min})}{2^k k}\cdot \frac{|H_{\mathrm{min}}|}{W}.
% \]

On the other hand, a McDiarmid bound gives an upper bound on $|T_\eta|$:

\begin{proposition}[McDiarmid threshold-set bound]\label{prop:tech-mcdiarmid}
For every \(\eta\in(0,1)\),
\[
|T_\eta|\le 2^{(1-\gamma_\eta)n},
\]
where
\[
\gamma_\eta
=
\frac{2(1-\eta)^2}{\ln 2}\cdot
\frac{1}{\Lambda_{\max}^2 D}
\left(\frac{|H_{\mathrm{min}}|}{\Sigma}\right)^2.
\]
\end{proposition}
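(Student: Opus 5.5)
Let $X$ be uniform on $\{-1,1\}^n$. Then $|T_\eta| = 2^n\Pr[H(X)\le (1-\eta)H_{\mathrm{min}}]$, so it suffices to show
\[
\Pr\bigl[H(X)\le -(1-\eta)|H_{\mathrm{min}}|\bigr]\le \exp\!\left(-\frac{2(1-\eta)^2 |H_{\mathrm{min}}|^2}{\Lambda_{\max}^2\sum_i d_i^2}\right).
\]
Indeed, using $\sum_i d_i^2 = D\Sigma^2/n$, the exponent equals $-\frac{2(1-\eta)^2}{\Lambda_{\max}^2 D}\bigl(|H_{\mathrm{min}}|/\Sigma\bigr)^2 n$. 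Converting $e^{-a n}$ to $2^{-(a/\ln 2)n}$ then gives exactly $\gamma_\eta$. If $H_{\mathrm{min}}=0$, then $\gamma_\eta=0$ and the bound is trivial, so assume $|H_{\mathrm{min}}|>0$.

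The tail bound is McDiarmid's bounded-differences inequality applied to $H$ as a function of the independent coordinates $X_1,\dots,X_n$. By the mean-zero property, $\mathbb E[H(X)]=0$. The deviation is $t=(1-\eta)|H_{\mathrm{min}}|$, which is at least $0$ as required. The lower-tail form of McDiarmid gives $\Pr[H-\mathbb E H\le -t]\le \exp(-2t^2/\sum_i c_i^2)$, where $c_i$ bounds the change in $H$ when only coordinate $i$ is changed.

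For the $c_i$, we reuse the computation in the proof of \Cref{prop:lip-local}. Flipping coordinate $i$ affects only the constraints $j$ with $i\in\vars(j)$, and each such $C_j$ changes by at most $\Lambda_j w_j\le \Lambda_{\max}w_j$. So $c_i=\Lambda_{\max}d_i$ works for every $i$, not just the light ones, and $\sum_i c_i^2=\Lambda_{\max}^2\sum_i d_i^2$. Substituting this and $t$ into McDiarmid gives the displayed bound.

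No step is a real obstacle. The only points needing care are the conversion between natural and binary exponents (the source of the $1/\ln 2$) and rewriting $\sum_i d_i^2$ through $D$. The bound holds for every $n$, with no $o(n)$ losses.
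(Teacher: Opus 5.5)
Your proposal is correct and follows essentially the same route as the paper: McDiarmid's lower-tail inequality with bounded differences $\Lambda_{\max} d_i$ (coming from the per-constraint jump $\Lambda_j w_j$), the identity $\sum_i d_i^2 = D\Sigma^2/n$, and a conversion to base $2$ before multiplying by $2^n$. Your separate treatment of the degenerate case $H_{\mathrm{min}}=0$ adds a bit of care that the paper leaves implicit.
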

% \begin{proposition}[McDiarmid threshold-set bound]\label{prop:tech-mcdiarmid}
% For every \(\eta\in(0,1)\),
% \[
% |T_\eta|\le 2^{(1-\gamma_\eta)n},
% \]
% where
% \[
% \gamma_\eta
% =
% \frac{2(1-\eta)^2}{\ln 2}\cdot
% \frac{1}{\Lambda_{\max}^2 k^2D}
% \left(\frac{|H_{\mathrm{min}}|}{W}\right)^2.
% % =
% % \frac{2(1-\eta)^2}{\ln 2}\cdot
% % \frac{(2^k-s_{\min})^2}{2^{2k}k^2D}
% % \left(\frac{|H_{\mathrm{min}}|}{W}\right)^2.
% \]
% \end{proposition}

\begin{proof}
Under the uniform distribution on \(\{-1,1\}^n\), one has \(\mathbb E[H]=0\). For each coordinate \(i\), changing the sign of \(x_i\) can affect only constraints containing \(i\). For each such affected constraint~\(j\), the contribution changes by at most
\[
\Lambda_j w_j\le \Lambda_{\max}w_j.
\]
Hence McDiarmid's inequality applies with bounded-difference parameters
\[
\mDelta_i=\sum_{j:\,i\in\vars(j)} \Lambda_j w_j
\le \Lambda_{\max} d_i.
\]
Therefore
\[
\sum_{i=1}^n \mDelta_i^2
\le
\Lambda_{\max}^2\sum_{i=1}^n d_i^2
=
\Lambda_{\max}^2\frac{D\Sigma^2}{n}.
\]

The event \(x\in T_\eta\) is exactly
\[
H(x)\le (1-\eta)H_{\mathrm{min}}=-(1-\eta)|H_{\mathrm{min}}|.
\]
McDiarmid's inequality yields
\[
\Pr[x\in T_\eta]
\le
\exp\!\left(
-\frac{2(1-\eta)^2|H_{\mathrm{min}}|^2}{\sum_{i=1}^n\mDelta_i^2}
\right)
\le
\exp\!\left(
-\frac{2(1-\eta)^2|H_{\mathrm{min}}|^2 n}{\Lambda_{\max}^2 D\Sigma^2}
\right).
\]
Converting to base \(2\) and multiplying by \(2^n\) assignments gives the stated bound.
\end{proof}
% \begin{proof}
% Under the uniform distribution on \(\{-1,1\}^n\), one has \(\mathbb E[H]=0\). For each coordinate \(i\), changing the sign of \(x_i\) can affect only constraints containing \(i\). For each such affected constraint~\(j\), the contribution changes by at most
% \[
% \Lambda_j w_j\le \Lambda_{\max}w_j.
% \]
% Hence McDiarmid's inequality applies with bounded-difference parameters
% \[
% \Lambda_i=\sum_{j:\,i\in\vars(j)} \Lambda_j w_j
% \le \Lambda_{\max} d_i.
% \]
% Therefore
% \[
% \sum_{i=1}^n \Lambda_i^2
% \le
% \Lambda_{\max}^2\sum_{i=1}^n d_i^2
% =
% \Lambda_{\max}^2 k^2W^2D/n.
% \]

% The event \(x\in T_\eta\) is exactly
% \[
% H(x)\le (1-\eta)H_{\mathrm{min}}=-(1-\eta)|H_{\mathrm{min}}|.
% \]
% McDiarmid's inequality yields
% \[
% \Pr[x\in T_\eta]
% \le
% \exp\!\left(
% -\frac{2(1-\eta)^2|H_{\mathrm{min}}|^2}{\sum_{i=1}^n\Lambda_i^2}
% \right)
% \le
% \exp\!\left(
% -\frac{2(1-\eta)^2|H_{\mathrm{min}}|^2n}{\Lambda_{\max}^2k^2W^2D}
% \right).
% \]
% Converting to base \(2\) and multiplying by \(2^n\) assignments gives the stated bound.
% \end{proof}

Combining these two ingredients with \Cref{th:black-box} yields the following.

\begin{theorem}[Main result for weighted MAX-\(k\)-CSP]\label{th:concrete-maxkcsp}
Assume that the optimum value \(H_{\mathrm{min}}\) is given. Fix \(\eta\in(0,1)\). For every weighted MAX-\(k\)-CSP instance, an optimum assignment can be found with high probability in time
\[
2^{(1-c)n+o(n)}
\]
with
\[
c
=
\min\!\left\{
\gamma_\eta,
\;
\frac12\,h(\theta_\eta)
\right\},
\]
where
\begin{align*}
\gamma_\eta
&=
\frac{2(1-\eta)^2}{\ln 2}\cdot
\frac{1}{\Lambda_{\max}^2 D}
\left(\frac{|H_{\mathrm{min}}|}{\Sigma}\right)^2,\\
\theta_\eta
&=
\frac{\eta |H_{\mathrm{min}}|}{\Lambda_{\max}\Sigma}.
\end{align*}
\end{theorem}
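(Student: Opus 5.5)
The plan is to instantiate \Cref{th:black-box} with the threshold set \(T_\eta\), the successful set \(S_\eta^{\mathrm{lip}}\), and the search regions \(\mathcal N_\eta(x)=\LightBall{\av}{x}{r_\eta^{\mathrm{lip}}}\). The threshold exponent is \(\gamma_\eta\) from \Cref{prop:tech-mcdiarmid}, and the successful-set exponent is \(\kappa_\eta=\kappa_\eta^{\mathrm{lip}}=\frac12 h(\theta_\eta)\) from \Cref{th:lip-favorable}. Once the seven hypotheses listed before \Cref{th:black-box} are checked, that theorem gives an algorithm with expected running time \(2^{(1-c)n+o(n)}\), where \(c=\min\{\gamma_\eta,\frac12 h(\theta_\eta)\}\). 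A final application of Markov's inequality converts this into a bounded-error algorithm: run the procedure for a constant multiple of its expected time and abort if it has not finished. Independent repetition then boosts the success probability to high probability at polynomial cost.

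I would verify the seven hypotheses in order.
\begin{enumerate}
\item Polynomial-time testability of \(T_\eta\): since \(H_{\mathrm{min}}\) is given, membership reduces to evaluating \(H(x)\), a sum of \(m\) local terms each computed from \(k_j\le k\) bits, and comparing it with \((1-\eta)H_{\mathrm{min}}\). The weights are rational of polynomial encoding length, so this takes polynomial time.
\item The inclusion \(S_\eta^{\mathrm{lip}}\subseteq T_\eta\) is item~1 of \Cref{th:lip-favorable}.
\item The search regions are defined for every \(x\in T_\eta\) as above.
\item Every region \(\mathcal N_\eta(x)\) with \(x\in S_\eta^{\mathrm{lip}}\) contains an optimum assignment. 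This is item~2 of \Cref{th:lip-favorable}, and it follows because the restricted-ball relation is symmetric: if \(x\) differs from \(x^\ast\) only on at most \(r_\eta^{\mathrm{lip}}\) coordinates in \(L_{\av}\), then \(x^\ast\in\LightBall{\av}{x}{r_\eta^{\mathrm{lip}}}\).
\item The thinness bound \(|T_\eta|\le 2^{(1-\gamma_\eta)n}\) is exactly \Cref{prop:tech-mcdiarmid}.
\item The size bound \(|S_\eta^{\mathrm{lip}}|\ge 2^{\kappa_\eta n-o(n)}\) is item~3 of \Cref{th:lip-favorable}.
\item The search cost is handled by item~4 of \Cref{th:lip-favorable}, which gives \(|\LightBall{\av}{x}{r_\eta^{\mathrm{lip}}}|=|S_\eta^{\mathrm{lip}}|\) for every \(x\). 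The region can be enumerated by listing all subsets of \(L_{\av}\) of size at most \(r_\eta^{\mathrm{lip}}\), with \(\mathrm{poly}(n)\) work per element, for a total of \(O^\ast(|S_\eta^{\mathrm{lip}}|)\).
\end{enumerate}

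The main obstacle is hypothesis~7. \Cref{th:black-box} requires search time at most \(2^{\kappa_\eta n+o(n)}\), whereas item~4 gives only \(O^\ast(|S_\eta^{\mathrm{lip}}|)\), so I need a matching upper bound \(|S_\eta^{\mathrm{lip}}|\le 2^{\frac12 h(\theta_\eta)n+o(n)}\). I would prove it by a standard entropy estimate. We have \(|S_\eta^{\mathrm{lip}}|=\sum_{i\le r}\binom{|L_{\av}|}{i}\) with \(r=\lfloor\theta_\eta n/2\rfloor\). By Markov's inequality applied to the weighted degrees, \(|L_{\av}|\ge n/2\); I also need \(|L_{\av}|\le n\).

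The bound \(\frac12 h(\theta_\eta)\) suggests the intended count is \(\binom{n/2}{\theta_\eta n/2}\approx 2^{\frac{n}{2}h(\theta_\eta)}\), which uses \(|L_{\av}|\approx n/2\). For the upper bound I would therefore need \(|L_{\av}|\) to be treated as exactly \(n/2\), presumably by restricting \(L_{\av}\) to an arbitrary subset of size \(\lceil n/2\rceil\); I would check that the definition in \Cref{sec:technical-details} does this.

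If no such restriction is made, I would instead bound \(|S_\eta^{\mathrm{lip}}|\le 2^{|L_{\av}|h(\min\{r/|L_{\av}|,1/2\})+o(n)}\). This can exceed \(2^{\kappa_\eta n+o(n)}\), so I would fall back on the cruder direct estimate
\[
\frac{|T_\eta|}{|S_\eta^{\mathrm{lip}}|}\,|S_\eta^{\mathrm{lip}}|\;\le\; 2^{(1-\gamma_\eta)n}\,\mathrm{poly}(n),
\]
which bounds the search term in \Cref{cor:abstract-conditioning-search-rejection} without needing any upper bound on \(|S_\eta^{\mathrm{lip}}|\). Together with the rejection-sampling term \(2^n/|S_\eta^{\mathrm{lip}}|\), this yields total time \(2^{(1-\min\{\gamma_\eta,\kappa_\eta\})n+o(n)}\) in either case. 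Going through \Cref{cor:abstract-conditioning-search-rejection} directly avoids the matching-upper-bound issue altogether, so that is the route I would write up.
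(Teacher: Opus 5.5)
Your proof is correct and follows the paper's argument: combine the conditioning-and-search framework with \Cref{th:lip-favorable} and \Cref{prop:tech-mcdiarmid}, then convert expected time to bounded error via Markov. Your worry about Hypothesis~7 is well founded. The paper's one-line proof invokes \Cref{th:black-box} without any upper bound $|S_\eta^{\mathrm{lip}}|\le 2^{\kappa_\eta^{\mathrm{lip}}n+o(n)}$, and none holds in general. The set $L_{\av}$ is never truncated; the half-size subset $L'$ appears only in the lower-bound proof. For a regular instance, for example, $|S_\eta^{\mathrm{lip}}|\approx 2^{h(\theta_\eta/2)n}>2^{\frac12 h(\theta_\eta)n}$. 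Your direct use of \Cref{cor:abstract-conditioning-search-rejection}, where $|S_\eta^{\mathrm{lip}}|$ cancels in the search term, is the right repair. Equivalently, you can apply \Cref{th:black-box} with $\kappa'=\frac1n\log_2|S_\eta^{\mathrm{lip}}|\ge\kappa_\eta^{\mathrm{lip}}-o(1)$.
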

% \begin{theorem}[Main result for weighted MAX-\(k\)-CSP]\label{th:concrete-maxkcsp}
% Assume that the optimum value \(H_{\mathrm{min}}\) is given. Fix \(\eta\in(0,1)\). For every weighted MAX-\(k\)-CSP instance, an optimum assignment can be found with high probability in time
% $
% %2^{(1-c^{\mathrm{cl}}_{7}(\eta))n+o(n)},
% 2^{(1-c)n+o(n)}
% $
% with
% \[
% %c^{\mathrm{cl}}_{7}(\eta)
% c
% =
% \min\!\left\{
% \gamma_\eta
% ,
% \;
% \frac12\,h(\theta_\eta)
% \right\}\,,
% \]
% where
% \begin{align*}
% \gamma_\eta
% &=
% \frac{2(1-\eta)^2}{\ln 2}\cdot
% \frac{1}{\Lambda_{\max}^2 k^2D}
% \left(\frac{|H_{\mathrm{min}}|}{W}\right)^2,\\
% \theta_\eta
% &=
% \frac{\eta}{\Lambda_{\max}k}\cdot \frac{|H_{\mathrm{min}}|}{W}\,.
% \end{align*}
% \end{theorem}

\begin{proof}[Proof of \Cref{th:concrete-maxkcsp}]
Combine \Cref{th:black-box}, \Cref{th:lip-favorable}, and \Cref{prop:tech-mcdiarmid}. The threshold-set exponent is \(\gamma_\eta\), and the successful-set exponent is \(\kappa_\eta^{\mathrm{lip}}=\frac12 h(\theta_\eta)\). Finally, convert the resulting algorithm (which has guaranteed expected time complexity) into an algorithm with bounded-error time complexity using Markov inequality.
\end{proof}
% \begin{proof}[Proof of \Cref{th:concrete-maxkcsp}]
% Combine \Cref{th:black-box}, \Cref{th:lip-favorable}, and \Cref{prop:tech-mcdiarmid}. The threshold-set exponent is \(\gamma_\eta\), and the successful-set exponent is \(\kappa_\eta^{\mathrm{lip}}=\frac12 h(\theta_\eta)\). Finally, convert the resulting algorithm (which has guaranteed expected time complexity) into an algorithm with bounded-error time complexity using Markov inequality.
% \end{proof}
\begin{remark}\label{rem:2}
The assumption that the optimum value \(H_{\mathrm{min}}\) is given is included only to keep the main statement as explicit as possible. 
For unweighted MAX-\(k\)-CSP, this assumption can again be easily removed, since there are only polynomially many possible values for \(H_{\mathrm{min}}\). As explained in Appendices \ref{sec:standard-variants} and \ref{sec:optimistic-first-local-ball}, this assumption can also be removed in the weighted case with at most a $2^{o(n)}$ multiplicative overhead.
\end{remark}

The following particularly simple explicit consequence is obtained by fixing \(\eta=1/2\). 
%%%%%%%%%%%%%%%%%%%%%%%%%%%%%%%%%%%%%%%%%%%%%%%%%%
%%%%%%%%%%%%%%%%%%%%%%%%%%%%%%%%%%%%%%%%%%%%%%%%%%
%%%%%%%%%%%%%%%%%%%%%%%%%%%%%%%%%%%%%%%%%%%%%%%%%%
\begin{corollary}[Explicit consequence for weighted MAX-\(k\)-CSP]\label{cor:concrete-maxkcsp-explicit}
For every weighted MAX-\(k\)-CSP instance, an optimum assignment can be found with high probability in time
\[
2^{(1-c)n+o(n)}
\]
with
\[
c
\ge
\frac{1}{2\ln 2}\cdot
\frac{1}{\Lambda_{\max}^2 k^2D}
\left(\frac{|H_{\mathrm{min}}|}{W}\right)^2
\ge
\frac{1}{2\ln 2}\cdot
\frac{1}{2^{2k}k^2D}
\left(\frac{|H_{\mathrm{min}}|}{W}\right)^2.
\]
\end{corollary}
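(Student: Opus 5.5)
The plan is to specialize \Cref{th:concrete-maxkcsp} to \(\eta=1/2\) and show that both terms of the minimum, \(\gamma_{1/2}\) and \(\tfrac12 h(\theta_{1/2})\), are at least the claimed quantity. Throughout, write
\[
u=\frac{|H_{\mathrm{min}}|}{\Lambda_{\max}\Sigma},
\]
so that \(\theta_{1/2}=u/2\) and \(\gamma_{1/2}=\frac{2\cdot(1/4)}{\ln 2}\cdot\frac{u^2}{D}=\frac{1}{2\ln 2}\cdot\frac{u^2}{D}\). Since \(\Sigma\le kW\), we get \(u\ge |H_{\mathrm{min}}|/(\Lambda_{\max}kW)\). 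Hence
\[
\gamma_{1/2}\ge \frac{1}{2\ln 2}\cdot\frac{1}{\Lambda_{\max}^2k^2D}\left(\frac{|H_{\mathrm{min}}|}{W}\right)^2,
\]
which is the first claimed bound for the \(\gamma\)-term.

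For the entropy term, it suffices to show \(\tfrac12 h(u/2)\ge \frac{u^2}{2\ln 2}\ge \frac{u^2}{2\ln 2\, D}\), using \(D\ge 1\). The key preliminary step is the a priori bound \(u\le 1/2\). To prove it, use \(\Lambda_{\max}\ge\Lambda_j\ge 2^{k_j}/(2^{k_j}-1)\), which gives
\[
\Lambda_{\max}\Sigma\ \ge\ \sum_j \frac{k_j2^{k_j}}{2^{k_j}-1}\,w_j\ \ge\ 2W,
\]
because \(k2^k/(2^k-1)\ge 2\) for every integer \(k\ge1\). Combining this with \(|H_{\mathrm{min}}|\le W\) yields \(u\le 1/2\).

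Next, use the elementary inequality \(h(t)\ge 4t(1-t)\) on \([0,1]\), which gives \(h(u/2)\ge 2u-u^2\). It then remains to check \(2u-u^2\ge u^2/\ln 2\), i.e.\ \(u\,(1+1/\ln 2)\le 2\). This holds since \(u\le 1/2\) and \(1+1/\ln2<4\). Transferring from \(u\) to \(|H_{\mathrm{min}}|/(\Lambda_{\max}kW)\) exactly as for \(\gamma\) shows that the minimum in \Cref{th:concrete-maxkcsp} is at least the first claimed expression.

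The second inequality follows from \(\Lambda_j=2^{k_j}/(2^{k_j}-s_j)\le 2^{k_j}\le 2^k\), since \(s_j\le 2^{k_j}-1\); hence \(\Lambda_{\max}^2\le 2^{2k}\). The assumption that \(H_{\mathrm{min}}\) is known is removed as in \Cref{rem:2}. The main obstacle is the entropy comparison: a naive \(h(t)\ge 2t\) bound fails unless \(u\) is bounded away from \(1\), so the bound \(u\le1/2\), obtained from the interplay between \(\Lambda_{\max}\) and the arities, is the crux.
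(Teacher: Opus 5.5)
Your proof is correct. It uses the same ingredients as the paper ($\eta=1/2$, $\Sigma\le kW$, $D\ge 1$, a linear-type lower bound on $h$, and $\Lambda_{\max}\le 2^k$), but applies them in a different order.

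The paper first passes to the $W$-normalized quantities $\widetilde{\gamma}_{1/2}$ and $\widetilde{\theta}_{1/2}=|H_{\mathrm{min}}|/(2\Lambda_{\max}kW)$, invoking monotonicity of $h$ on $[0,1/2]$. Only then does it compare them, using $h(x)\ge 2x$ and the ratio bound $B/A=1/(\ln 2\,\Lambda_{\max}kD)\le 1$. That ratio bound needs a case split ($k\ge 2$ versus $k=1$) to secure $\Lambda_{\max}k\ge 2$.

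You instead do the comparison on the $\Sigma$-scale. Your a priori bound $u\le 1/2$ comes from the clause-wise inequality $\Lambda_j k_j\ge 2$, so no case split is needed. You transfer to the $W$-scale only at the end, through the monotone quadratic $u^2/(2\ln 2\,D)$.

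This ordering buys you two small things:
\begin{itemize}
\item You never apply monotonicity of $h$ to $\theta_{1/2}$ itself. So you avoid the unstated check $\theta_{1/2}\le 1/2$ that the paper's step $\tfrac12 h(\theta_{1/2})\ge \tfrac12 h(\widetilde{\theta}_{1/2})$ silently relies on. That check does hold, since $\Lambda_{\max}\ge 1$ and $\Sigma\ge W$.
\item Your argument in fact shows $c=\gamma_{1/2}$ at $\eta=1/2$. This is the sharper bound with $|H_{\mathrm{min}}|/\Sigma$ in place of $|H_{\mathrm{min}}|/(kW)$, which matters for instances with mixed arities.
\end{itemize}
In exchange, the paper's ordering directly compares exactly the two expressions that appear in the final statement.

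One minor point: your closing remark overstates the obstacle. Once $u\le 1/2$ is known, the naive bound already works: $h(u/2)\ge u\ge u^2/\ln 2$, because $u\le 1/2<\ln 2$. So the inequality $h(t)\ge 4t(1-t)$ is more than you need.
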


\begin{proof}
Evaluate \Cref{th:concrete-maxkcsp} at \(\eta=1/2\). Then
\[
c
=
\min\!\left\{
\gamma_{1/2},
\frac12 h(\theta_{1/2})
\right\},
\]
where
\[
\gamma_{1/2}
=
\frac{1}{2\ln 2}\cdot
\frac{1}{\Lambda_{\max}^2 D}
\left(\frac{|H_{\mathrm{min}}|}{\Sigma}\right)^2,
\qquad
\theta_{1/2}
=
\frac{|H_{\mathrm{min}}|}{2\Lambda_{\max}\Sigma}.
\]

Since \(\Sigma\le kW\), we have
\[
\gamma_{1/2}
\ge
\widetilde{\gamma}_{1/2}
:=
\frac{1}{2\ln 2}\cdot
\frac{1}{\Lambda_{\max}^2 k^2D}
\left(\frac{|H_{\mathrm{min}}|}{W}\right)^2,
\]
and
\[
\theta_{1/2}
\ge
\widetilde{\theta}_{1/2}
:=
\frac{|H_{\mathrm{min}}|}{2\Lambda_{\max}kW}.
\]
Since \(0\le \widetilde{\theta}_{1/2}\le 1/2\) and the binary entropy function is increasing on \([0,1/2]\), it follows that
\[
\frac12 h(\theta_{1/2})
\ge
\frac12 h(\widetilde{\theta}_{1/2}).
\]
Therefore
\[
c
\ge
\min\!\left\{
\widetilde{\gamma}_{1/2},
\frac12 h(\widetilde{\theta}_{1/2})
\right\}.
\]

It remains to show that
\[
\widetilde{\gamma}_{1/2}\le \frac12 h(\widetilde{\theta}_{1/2}).
\]
Set
\[
\xi=\frac{|H_{\mathrm{min}}|}{W},
\qquad
A=\frac{1}{2\Lambda_{\max}k},
\qquad
B=\frac{1}{2\ln 2}\cdot\frac{1}{\Lambda_{\max}^2k^2D}.
\]
Then
\[
\widetilde{\theta}_{1/2}=A\xi,
\qquad
\widetilde{\gamma}_{1/2}=B\xi^2.
\]
Since \(0<\xi\le 1\), \(D\ge 1\), and \(A\xi\in(0,1/2]\), one has
\[
\frac12 h(A\xi)\ge A\xi
\]
because \(h(x)\ge 2x\) on \([0,1/2]\).

Moreover,
\[
\frac{B}{A}
=
\frac{1}{\ln 2\,\Lambda_{\max}kD}.
\]
If \(k\ge 2\), then \(\Lambda_{\max}\ge 1\), and hence
\[
\frac{B}{A}
\le
\frac{1}{2\ln 2\,D}
\le 1.
\]
If \(k=1\), then every nontrivial clause has arity \(k_j=1\) and therefore
\[
\Lambda_j=\frac{2}{2-1}=2
\qquad\text{for every }j,
\]
so \(\Lambda_{\max}=2\). Hence again
\[
\frac{B}{A}
=
\frac{1}{2\ln 2\,D}
\le 1.
\]
Thus in all cases
\[
\frac{B}{A}\le 1.
\]
Hence
\[
B\xi^2\le A\xi\le \frac12 h(A\xi),
\]
which proves that
\[
\widetilde{\gamma}_{1/2}\le \frac12 h(\widetilde{\theta}_{1/2}).
\]
Therefore
\[
c\ge \widetilde{\gamma}_{1/2}
=
\frac{1}{2\ln 2}\cdot
\frac{1}{\Lambda_{\max}^2 k^2D}
\left(\frac{|H_{\mathrm{min}}|}{W}\right)^2.
\]

Finally, since \(\Lambda_{\max}\le 2^k\), we obtain
\[
c
\ge
\frac{1}{2\ln 2}\cdot
\frac{1}{2^{2k}k^2D}
\left(\frac{|H_{\mathrm{min}}|}{W}\right)^2.
\]
\end{proof}

We obtain \Cref{th7c} stated (for unweighted MAX-\(k\)-CSP) in the introduction by observing that %$\Delta=|H_{\mathrm{min}}|/W$ 
\[
\frac{1}{2\ln 2}\ge 0.72134\,.
\]

\subsection{Quantum algorithms}\label{subsec:brief-quantum}
\Cref{th6q2} and \Cref{th7q2}, stated in the introduction, follow by applying quantum search to obtain a quadratic speedup of the algorithms presented in Sections~\ref{subsec:concrete-eklin2} and~\ref{subsec:concrete-maxkcsp}. We only sketch the argument, since it is a routine combination of amplitude amplification~\cite{BrassardHMT02} and quantum minimum finding~\cite{Durr1999}.

We first derive a quantum version of \Cref{cor:abstract-conditioning-search-rejection}.
Let \(\Omega=\{-1,1\}^n\) be the ambient space. As in Section~\ref{sec:framework}, we denote by
 \(T\subseteq \Omega\) the conditioning set, and by \(S\subseteq T\) the successful set. As in \Cref{cor:abstract-conditioning-search-rejection}, we assume that membership in~\(T\) can be decided in (classical) polynomial time. We now assume that there is a quantum algorithm that, for any \(x\in T\), searches the neighborhood \(\mathcal N(x)\) in time at most \(\tau_{\mathrm{search}}^{\mathrm{q}}(n)\). 
Then the arguments of \Cref{sub:abstract}, combined with amplitude amplification \cite{BrassardHMT02}, yield a quantum algorithm that finds a solution in time (omitting factors polynomial in $n$)
\[
%O\!\left(
\sqrt{\frac{\abs{\Omega}}{\abs{S}}}
+
\sqrt{\frac{\abs{T}}{\abs{S}}}\cdot\tau_{\mathrm{search}}^{\mathrm{q}}(n)\,.
%\right).
\]
Compared to \Cref{cor:abstract-conditioning-search-rejection}, we obtain a quadratic speedup when \(\tau_{\mathrm{search}}^{\mathrm{q}}(n)\) is quadratically faster than its classical counterpart \(\tau_{\mathrm{search}}(n)\).

In the applications of this framework given in Sections~\ref{subsec:concrete-eklin2} and \ref{subsec:concrete-maxkcsp}, \(\tau_{\mathrm{search}}(n)\) is the cost of an exhaustive search for a minimum value in the neighborhood. In the quantum setting, we obtain a quadratic speedup by using the quantum algorithm for minimum finding \cite{Durr1999}. The arguments of Sections~\ref{subsec:concrete-eklin2} and \ref{subsec:concrete-maxkcsp} therefore yield \Cref{th6q2} and \Cref{th7q2}.

% ==================================================
% \subsection*{Summary of Section~\ref{sec:concrete-corollaries}}

% The two corollaries of this section instantiate the same black-box theorem in two different ways:
% \[
% \begin{array}{c|c|c}
% & \text{Case 1} & \text{Case 2}\\
% \hline
% \text{sublevel-set bound} & (\gamma,\eta) \text{ given}& \text{derived from instance parameters}\\[1mm]
% \text{successful-set bound} & \kappa_\eta^{\mathrm{ns}}=h(q_\eta) & \kappa_\eta^{\mathrm{lip}}=\frac12 h(\theta_\eta)\\[1mm]
% \text{setting} & \text{MAX-E}k\text{-LIN2} & \text{weighted MAX-}k\text{-CSP}\\[1mm]
% \text{final exponent} & \min\{\gamma,h(q_\eta)\} & \min\{\gamma_\eta,\frac12 h(\theta_\eta)\}
% \end{array}
% \]

% The point is that, in Case~1, the successful-set exponent can be kept free of any explicit dependence on the normalized optimum scale, whereas in Case~2 such dependence is already unavoidable in the upper bound on the conditioning set. This is what makes the local-Lipschitz framework natural in the second regime.

% ==================================================
% Section 5: Technical details
% ==================================================
\section{Technical 
Details}\label{sec:technical-details}

This section collects the standard estimates and calculations deferred from Section~\ref{sec:favorable-sets}.

% ==================================================
\subsection{Details of the correlated-pair analysis}

The goal of this subsection is to prove \Cref{th:ns-favorable}. We will need the following three lemmas.

\begin{lemma}[Typical-shell estimates]\label{lem:ns-shell-estimates}
There exists a constant \(C_{k,\eta}>0\) such that, for all sufficiently large \(n\), the following hold.
\begin{enumerate}
    \item The size of the typical shell satisfies
    \[
    2^{h(q_\eta)n-C_{k,\eta}n^{2/3}}
    \le
    |A_\eta|
    \le
    2^{h(q_\eta)n+C_{k,\eta}n^{2/3}}.
    \]
    \item For any point \(x\in A_\eta\), 
    \[
    2^{-h(q_\eta)n-C_{k,\eta}n^{2/3}}
    \le
    \Pr[t=x]
    \le
    2^{-h(q_\eta)n+C_{k,\eta}n^{2/3}}
    \]
    holds
    under the product distribution with parameter \(q_{\eta,n}\).
    \item The complement of the typical shell has exponentially small probability mass  under the product distribution with parameter \(q_{\eta,n}\):
    \[
    \Pr[t\notin A_\eta]\le 2e^{-2n^{1/3}}.
    \]
\end{enumerate}
\end{lemma}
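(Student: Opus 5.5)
We prove the three items in the order (2), (3), (1), since (1) follows from the other two.

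\emph{Item 2.} For $t$ with $\wt(t)=w$, the product distribution with parameter $q_{\eta,n}$ gives
\[
-\log_2\Pr[t]=w\log_2\frac{1}{q_{\eta,n}}+(n-w)\log_2\frac{1}{1-q_{\eta,n}}.
\]
This is affine in $w$ with slope $\log_2\frac{1-q_{\eta,n}}{q_{\eta,n}}$, which is $O_{k,\eta}(1)$ because $q_\eta\in(0,1/2)$ is bounded away from $0$ and $1/2$. At $w=q_{\eta,n}n$ the value is exactly $h(q_{\eta,n})n$. For $t\in A_\eta$ we have $|w-q_{\eta,n}n|\le n^{2/3}$, so the deviation from this value is at most $O(n^{2/3})$. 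Next we replace $h(q_{\eta,n})$ by $h(q_\eta)$. Since $q_{\eta,n}=q_\eta+O_{k,\eta}(1/n)$ and $h$ is Lipschitz near $q_\eta$ (its derivative is bounded there), this costs only $O(1)$ in the exponent. Both errors are absorbed into $C_{k,\eta}n^{2/3}$.

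\emph{Item 3.} The weight $\wt(t)$ is a sum of $n$ independent Bernoulli$(q_{\eta,n})$ variables with mean $q_{\eta,n}n$. Hoeffding's inequality gives
\[
\Pr\bigl[|\wt(t)-q_{\eta,n}n|>n^{2/3}\bigr]\le 2\exp\bigl(-2n^{4/3}/n\bigr)=2e^{-2n^{1/3}},
\]
which is exactly the stated bound.

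\emph{Item 1.} For the upper bound, note that $1\ge\Pr[t\in A_\eta]\ge |A_\eta|\min_{x\in A_\eta}\Pr[t=x]$. Combined with the lower bound of item 2, this gives $|A_\eta|\le 2^{h(q_\eta)n+C n^{2/3}}$. For the lower bound, item 3 gives $\Pr[t\in A_\eta]\ge 1-2e^{-2n^{1/3}}\ge 1/2$ for large $n$. Then $1/2\le|A_\eta|\max_{x\in A_\eta}\Pr[t=x]$, and the upper bound of item 2 yields $|A_\eta|\ge 2^{h(q_\eta)n-Cn^{2/3}-1}$. Enlarging $C_{k,\eta}$ absorbs the $-1$.

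The only point needing care is the uniformity of constants in item 2. Specifically, we must check that $q_{\eta,n}$ stays in a compact subinterval of $(0,1/2)$ for large $n$, which holds since $q_{\eta,n}\to q_\eta\in(0,1/2)$. On that interval the slope $\log_2((1-q)/q)$ and $h'$ are bounded. Everything else is routine.
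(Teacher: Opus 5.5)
Your proof is correct. For items 2 and 3 it coincides with the paper's argument. The paper also writes $\Pr[t=x]=2^{-nf_n(r/n)}$ with $f_n$ affine in $r/n$, expands around $r/n=q_{\eta,n}$ where $f_n(q_{\eta,n})=h(q_{\eta,n})$, and replaces $h(q_{\eta,n})$ by $h(q_\eta)$ at $O(1)$ cost. It uses the same Hoeffding bound for item 3.

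You differ on item 1. The paper proves it by direct counting: it applies Stirling's formula $\binom{n}{r}=2^{nh(r/n)+O(\log n)}$ uniformly for $r/n$ in a compact subinterval of $(0,1/2)$, uses smoothness of $h$ to get $\binom{n}{r}=2^{h(q_\eta)n+O(n^{2/3})}$ for each admissible $r$, and then sums over the $O(n^{2/3})$ layers. You instead derive item 1 from items 2 and 3 via the typical-set sandwich $1/2\le\Pr[t\in A_\eta]\le 1$.

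Your route is more economical. It needs no Stirling estimates and no uniformity of their error terms, only the bounded slope $\log_2\bigl((1-q)/q\bigr)$ and a Lipschitz bound on $h$ near $q_\eta$. Enlarging $C_{k,\eta}$ to absorb the extra $-1$ is harmless, since item 2 stays valid with a larger constant.

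The paper's route gives a slightly finer, distribution-free statement: each individual layer $\binom{n}{r}$ in the window has size $2^{h(q_\eta)n+O(n^{2/3})}$. This matches the Stirling computation it reuses for the search-ball volume bound, but the lemma as stated does not need that extra information.
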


\begin{proof}
Let
\[
q_n=q_{\eta,n},
\qquad
R=\wt(t).
\]
Then \(R\sim\mathrm{Bin}(n,q_n)\). Hoeffding's inequality gives
\[
\Pr[\abs{R-q_n n}>n^{2/3}]
\le
2\exp\!\left(-\frac{2n^{4/3}}{n}\right)
=
2e^{-2n^{1/3}},
\]
which proves Item 3.

Since \(q_n\to q_\eta\in(0,1/2)\), there exists a compact interval
\[
I\subset (0,1/2)
\]
such that, for all sufficiently large \(n\), both \(q_n\) and every ratio \(r/n\) with
\[
|r-q_n n|\le n^{2/3}
\]
belong to \(I\).

Fix such an integer \(r\). Stirling's formula yields
\[
\binom{n}{r}=2^{nh(r/n)+O(\log n)},
\]
uniformly for \(r/n\in I\). Also,
\[
\frac{r}{n}=q_n+O(n^{-1/3}),
\]
so by smoothness of \(h\) on \(I\),
\[
h(r/n)=h(q_n)+O_{k,\eta}(n^{-1/3}).
\]
Since
\[
q_n=q_\eta+O_{k,\eta}(1/n),
\]
we further have
\[
h(q_n)=h(q_\eta)+O_{k,\eta}(1/n).
\]
Therefore
\[
\binom{n}{r}=2^{h(q_\eta)n+O_{k,\eta}(n^{2/3})}.
\]
Summing over the \(O(n^{2/3})\) admissible values of \(r\) gives Item 1.

Let \(x\in A_\eta\) and write \(r=\wt(x)\). Then
\[
\Pr[t=x]=q_n^r(1-q_n)^{n-r}=2^{-nf_n(r/n)},
\]
where
\[
f_n(\alpha)=-\alpha\log_2 q_n-(1-\alpha)\log_2(1-q_n).
\]
The functions \(f_n\) are smooth on \(I\), uniformly in \(n\), and \(r/n=q_n+O(n^{-1/3})\), so
\[
f_n(r/n)=f_n(q_n)+O_{k,\eta}(n^{-1/3}).
\]
But \(f_n(q_n)=h(q_n)=h(q_\eta)+O_{k,\eta}(1/n)\), hence
\[
\Pr[t=x]=2^{-h(q_\eta)n+O_{k,\eta}(n^{2/3})},
\]
which proves Item 2.
\end{proof}

\begin{lemma}[Lower bound on the successful set in the correlated-pair regime]\label{lem:ns-successful-lower}
For all sufficiently large \(n\),
\[
|S_\eta^{\mathrm{ns}}|
\ge
2^{h(q_\eta)n-o(n)}.
\]
\end{lemma}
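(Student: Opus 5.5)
The plan is to use the product distribution with flip rate \(q_n=q_{\eta,n}\) as a probabilistic counting device. Let \(t\) have independent coordinates with \(\Pr[t_i=-1]=q_n\), set \(\rho_n=1-2q_n\), and put \(X=x^\ast\odot t\). By the definition of \(q_{\eta,n}\) we have \(\rho_n^k=1-\eta+\eta/n\), so \(1-\rho_n^k=\eta(1-1/n)\). \Cref{prop:ns-tail} then gives
\[
\Pr\bigl[H(X)\le (1-\eta)H_{\mathrm{min}}\bigr]\ \ge\ 1-\frac{\eta(1-1/n)}{\eta}=\frac1n .
\]
This is exactly why the \(\eta/n\) correction was built into \(q_{\eta,n}\): it turns Markov's inequality into an inverse-polynomial lower bound instead of a trivial one.

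Next, intersect this event with the typical shell. By Item~3 of \Cref{lem:ns-shell-estimates}, \(\Pr[t\notin A_\eta]\le 2e^{-2n^{1/3}}\), and this is \(o(1/n)\). Hence for all sufficiently large \(n\),
\[
\Pr\bigl[t\in A_\eta,\ H(x^\ast\odot t)\le (1-\eta)H_{\mathrm{min}}\bigr]\ \ge\ \frac1n-2e^{-2n^{1/3}}\ \ge\ \frac{1}{2n}.
\]
Call the set of such \(t\) \(G\). The map \(t\mapsto x^\ast\odot t\) is a bijection of \(\{-1,1\}^n\), and it sends \(G\) exactly onto \(S_\eta^{\mathrm{ns}}\). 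Therefore \(|S_\eta^{\mathrm{ns}}|=|G|\).

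Finally, convert the probability into a count. By Item~2 of \Cref{lem:ns-shell-estimates}, every point of \(A_\eta\supseteq G\) has probability at most \(2^{-h(q_\eta)n+C_{k,\eta}n^{2/3}}\). So
\[
\frac{1}{2n}\ \le\ \Pr[t\in G]\ \le\ |G|\cdot 2^{-h(q_\eta)n+C_{k,\eta}n^{2/3}},
\]
which gives \(|S_\eta^{\mathrm{ns}}|\ge 2^{h(q_\eta)n-C_{k,\eta}n^{2/3}-\log_2(2n)}=2^{h(q_\eta)n-o(n)}\).

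The only delicate point is that the Markov bound must stay strictly positive and dominate the shell's tail mass, which is guaranteed by the choice of \(q_{\eta,n}\). Everything else is direct bookkeeping with the already-established shell estimates.
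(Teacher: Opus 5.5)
Your proposal is correct and follows essentially the same route as the paper's proof. You use \Cref{prop:ns-tail} with $q=q_{\eta,n}$ to get probability at least $1/n$, and Item~3 of \Cref{lem:ns-shell-estimates} to reduce this to $1/(2n)$ on the typical shell. You then use Item~2 to convert the probability into a count, and finish with the bijection $t\mapsto x^\ast\odot t$. Your explicit verification that $1-\rho_n^k=\eta(1-1/n)$ simply spells out a step the paper leaves implicit.
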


\begin{proof}
Let
\[
B_\eta=\{t\in\{-1,1\}^n:H(x^\ast\odot t)\le (1-\eta)H_{\mathrm{min}}\}.
\]
By \Cref{prop:ns-tail}, for the choice of \(q=q_{\eta,n}\),
%By \Cref{prop:ns-tail}, with \(\rho_{\eta,n}^k=1-\eta+\eta/n\),
\[
\Pr[B_\eta]\ge \frac1n.
\]
By item 3 of \Cref{lem:ns-shell-estimates},
\[
\Pr[A_\eta^c]\le 2e^{-2n^{1/3}},
\]
and thus
\[
\Pr[A_\eta\cap B_\eta]\ge \frac{1}{2n}
\]
for all sufficiently large \(n\).

By item 2 of \Cref{lem:ns-shell-estimates}, every point of \(A_\eta\) has probability at most
\[
2^{-h(q_\eta)n+o(n)}.
\]
Therefore
\[
|A_\eta\cap B_\eta|
\ge
\frac{1}{2n}\cdot 2^{h(q_\eta)n-o(n)}
=
2^{h(q_\eta)n-o(n)}.
\]
Since coordinatewise multiplication by \(x^\ast\) is a bijection on \(\{-1,1\}^n\), this proves the claim.
\end{proof}

\begin{lemma}[Search-ball volume in the correlated-pair regime]\label{lem:ns-ball-volume}
For all sufficiently large \(n\) and all \(x\in\{-1,1\}^n\),
\[
|\Ball{x}{r_\eta^{\mathrm{ns}}}|
\le
2^{h(q_\eta)n+o(n)}.
\]
\end{lemma}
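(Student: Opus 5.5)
The plan is to bound the ball volume by a sum of binomial coefficients and then use the standard entropy estimate. The Hamming ball is translation invariant, so for every \(x\),
\[
|\Ball{x}{r_\eta^{\mathrm{ns}}}|=\sum_{j=0}^{r_\eta^{\mathrm{ns}}}\binom{n}{j}.
\]
It therefore suffices to bound this sum, writing \(r=r_\eta^{\mathrm{ns}}=\lceil q_\eta n+2n^{2/3}\rceil\).

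First I check that the radius lies strictly below \(n/2\). Since \(\eta\in(0,1)\) and \(k\ge 1\), we have \((1-\eta)^{1/k}\in(0,1)\), so \(q_\eta\in(0,1/2)\) is a constant depending only on \(k,\eta\). Hence \(r/n=q_\eta+O(n^{-1/3})\), and for all sufficiently large \(n\) we get \(r/n\le q_\eta+3n^{-1/3}<1/2\).

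With that in hand, I apply the standard tail bound
\[
\sum_{j\le \alpha n}\binom{n}{j}\le 2^{h(\alpha)n}\qquad(\alpha\le 1/2),
\]
with \(\alpha=r/n\). If one prefers not to quote it, the alternative is to bound the sum by \((r+1)\binom{n}{r}\), since the binomial coefficients increase up to \(n/2\), and then use Stirling's formula \(\binom{n}{r}=2^{nh(r/n)+O(\log n)}\) as in the proof of \Cref{lem:ns-shell-estimates}. Either route gives
\[
|\Ball{x}{r}|\le 2^{h(r/n)n+O(\log n)}.
\]

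The last step, which is the only one with any content, is continuity of \(h\) near \(q_\eta\). Because \(q_\eta\) lies in the interior of \((0,1/2)\), \(h\) is Lipschitz on a compact neighborhood of \(q_\eta\). Therefore
\[
h(r/n)=h(q_\eta)+O_{k,\eta}(n^{-1/3}),
\qquad\text{so}\qquad
h(r/n)\,n=h(q_\eta)n+O_{k,\eta}(n^{2/3}).
\]
Combining this with the previous bound gives \(|\Ball{x}{r_\eta^{\mathrm{ns}}}|\le 2^{h(q_\eta)n+O(n^{2/3})}=2^{h(q_\eta)n+o(n)}\), which is the claim. Even without the interior-point argument, monotonicity of \(h\) on \([0,1/2]\) together with the bound \(r/n\le q_\eta+3n^{-1/3}\) already yields the result.
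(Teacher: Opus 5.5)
Your proposal is correct and follows essentially the same route as the paper: reduce to $\sum_{j\le r}\binom{n}{j}$, use $r<n/2$ to bound this by $(r+1)\binom{n}{r}$ (or by the equivalent entropy tail bound), apply Stirling, and pass from $h(r/n)$ to $h(q_\eta)$; you make that last step explicit via the Lipschitz property of $h$ near $q_\eta$, whereas the paper leaves it implicit. The only quibble is your closing aside: monotonicity of $h$ alone does not finish the argument, because you still need continuity of $h$ at $q_\eta$ to compare $h(q_\eta+3n^{-1/3})$ with $h(q_\eta)$.
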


\begin{proof}
The ball volume depends only on the radius, so it suffices to bound
\[
\sum_{j=0}^{r_\eta^{\mathrm{ns}}}\binom{n}{j}.
\]
Since
\[
\frac{r_\eta^{\mathrm{ns}}}{n}=q_\eta+O(n^{-1/3}),
\]
and \(q_\eta<1/2\), the binomial coefficients are increasing on the range \(0\le j\le r_\eta^{\mathrm{ns}}\) for all sufficiently large \(n\). Hence
\[
\sum_{j=0}^{r_\eta^{\mathrm{ns}}}\binom{n}{j}
\le
(r_\eta^{\mathrm{ns}}+1)\binom{n}{r_\eta^{\mathrm{ns}}}.
\]
Stirling's formula gives
\[
\binom{n}{r_\eta^{\mathrm{ns}}}\le 2^{h(q_\eta)n+o(n)},
\]
and the prefactor \(r_\eta^{\mathrm{ns}}+1\le n+1\) contributes only \(2^{o(n)}\).
\end{proof}

We are now ready to collect all these results and prove \Cref{th:ns-favorable}.
\begin{proof}[Proof of \Cref{th:ns-favorable}]
Item 1 (the inclusion \(S_\eta^{\mathrm{ns}}\subseteq T_\eta\)) is immediate from the definition.

If \(x=x^\ast\odot t\in S_\eta^{\mathrm{ns}}\), then
\[
d_H(x,x^\ast)=\wt(t).
\]
Since \(t\in A_\eta\),
\[
\wt(t)\le q_{\eta,n}n+n^{2/3}=q_\eta n+O_{k,\eta}(1)+n^{2/3}\le r_\eta^{\mathrm{ns}}
\]
for all sufficiently large \(n\). This proves Item 2.

Item 3 and Item 4 are \Cref{lem:ns-successful-lower} and \Cref{lem:ns-ball-volume}, respectively.
\end{proof}

% ==================================================
\subsection{Details of the local-Lipschitz analysis}

The goal of this subsection is to prove \Cref{th:lip-favorable}. We will need the following three lemmas.

The first lemma gives a lower bound on the number of coordinates of low degree. The proof follows from a straightforward counting argument.
\begin{lemma}[Many light coordinates]\label{lem:lip-half-light}
$|L_{\av}|\ge \frac n2.$
\end{lemma}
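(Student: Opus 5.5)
The plan is a Markov-type counting argument applied to the weighted degrees. All weights are positive, so every $d_i\ge 0$, and by definition $\sum_{i=1}^n d_i=\Sigma=n\,d_{\mathrm{avg}}$. The complement of $L_{\av}$ is the set of \emph{heavy} coordinates,
\[
[n]\setminus L_{\av}=\{i\in[n]: d_i>2d_{\mathrm{avg}}\}.
\]
The goal is to show that there are fewer than $n/2$ heavy coordinates, i.e.\ at most $n/2$.

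First dispose of the degenerate case. If $d_{\mathrm{avg}}=0$, then every $d_i=0\le 2d_{\mathrm{avg}}$, so $L_{\av}=[n]$ and the claim is trivial. Assume therefore $d_{\mathrm{avg}}>0$.

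Let $h=|[n]\setminus L_{\av}|$. Since every $d_i$ is nonnegative, dropping the light coordinates can only decrease the sum, and each heavy coordinate contributes more than $2d_{\mathrm{avg}}$. This gives
\[
n\,d_{\mathrm{avg}}=\sum_{i=1}^n d_i\ \ge\ \sum_{i\notin L_{\av}} d_i\ \ge\ 2d_{\mathrm{avg}}\,h,
\]
with strict inequality whenever $h>0$. Dividing by $d_{\mathrm{avg}}>0$ yields $h\le n/2$, and hence $|L_{\av}|=n-h\ge n/2$.

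There is no real obstacle here. The only points needing care are nonnegativity of the $d_i$ (which follows from $w_j>0$) and the degenerate case $d_{\mathrm{avg}}=0$, handled above. This lemma is what later guarantees that the restricted ball $\LightBall{\av}{x^\ast}{r}$ ranges over at least $n/2$ free coordinates. That is the source of the factor $\tfrac12$ in $\kappa_\eta^{\mathrm{lip}}=\tfrac12 h(\theta_\eta)$, via $\binom{\lceil n/2\rceil}{\lfloor\theta_\eta n/2\rfloor}=2^{\frac12 h(\theta_\eta)n-o(n)}$.
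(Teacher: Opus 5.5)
Your proof is correct and is essentially the paper's argument: the paper only says the lemma follows from a straightforward counting argument (if more than $n/2$ coordinates had $d_i>2d_{\mathrm{avg}}$, the average weighted degree would exceed $d_{\mathrm{avg}}$), which is your Markov-type bound using $d_i\ge 0$. One cosmetic point: naming the number of heavy coordinates $h$ clashes with the paper's binary entropy $h(\cdot)$.
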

% \begin{lemma}[Many light coordinates]\label{lem:lip-half-light}
% % Let
% % \[
% % L_{\le 2d_{\mathrm{avg}}}=\{i\in[n]: d_i\le 2d_{\mathrm{avg}}\}.
% % \]
% % Then
% $
% |L_{\av}|\ge \frac n2.
% $
% \end{lemma}

% % \begin{proof}
% % If more than \(n/2\) coordinates had weighted degree strictly larger than \(2d_{\mathrm{avg}}\), then the average weighted degree would exceed \(d_{\mathrm{avg}}\), contradicting
% % \[
% % \frac1n\sum_{i=1}^n d_i=d_{\mathrm{avg}}.
% % \]
% % \end{proof}

The second lemma is a standard bound on the binomial coefficients (see, e.g., \cite[Lemma~10.2]{MitzenmacherUpfal2017}).
\begin{lemma}[Entropy lower bound for a layer]\label{lem:lip-layer-entropy}
Let \(N\ge 1\) and \(0\le r\le N\) be integers. Then
\[
\binom{N}{r}\ge \frac{1}{N+1}\,2^{Nh(r/N)}.
\]
\end{lemma}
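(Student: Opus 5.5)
The statement to prove is \Cref{lem:lip-layer-entropy}: \(\binom{N}{r}\ge \frac{1}{N+1}2^{Nh(r/N)}\). The plan is a standard binomial-distribution argument. Set \(p=r/N\) and consider the binomial distribution \(\mathrm{Bin}(N,p)\). The endpoint cases \(r=0\) and \(r=N\) are trivial, since then \(h(r/N)=0\) and \(\binom{N}{r}=1\ge 1/(N+1)\). So I will assume \(0<r<N\), hence \(0<p<1\).

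The first observation is that the weight of the \(r\)-th layer satisfies
\[
p^r(1-p)^{N-r}=2^{N(p\log_2 p+(1-p)\log_2(1-p))}=2^{-Nh(p)}.
\]
Therefore
\[
\binom{N}{r}\,2^{-Nh(p)}=\Pr_{R\sim\mathrm{Bin}(N,p)}[R=r],
\]
and the lemma is equivalent to the claim that \(\Pr[R=r]\ge 1/(N+1)\).

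The key step, which is the only real content, is to show that \(r\) is a mode of \(\mathrm{Bin}(N,p)\) when \(p=r/N\). Write \(b(j)=\binom{N}{j}p^j(1-p)^{N-j}\). The ratio of consecutive terms is
\[
\frac{b(j+1)}{b(j)}=\frac{N-j}{j+1}\cdot\frac{p}{1-p}.
\]
This ratio is at least \(1\) exactly when \(j\le Np-(1-p)\). For \(j\le r-1\) we have \(j\le Np-1<Np-(1-p)\), so \(b(j+1)\ge b(j)\) there. For \(j\ge r\) we have \(j\ge Np>Np-(1-p)\), so \(b(j+1)\le b(j)\) there. Hence \(b(r)=\max_j b(j)\).

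Since the \(N+1\) values \(b(0),\dots,b(N)\) sum to \(1\), their maximum is at least \(1/(N+1)\). This gives \(\Pr[R=r]\ge 1/(N+1)\) and completes the argument.

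The main thing to get right is the unimodality inequality at the boundary indices \(j=r-1\) and \(j=r\). This reduces to the elementary inequalities \((N-r+1)r\ge r(N-r)\) and \((N-r)r\le (r+1)(N-r)\), which I expect to hold trivially, so no step is a serious obstacle.
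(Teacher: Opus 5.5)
Your proof is correct and is essentially the paper's argument; the paper only cites Mitzenmacher--Upfal (Lemma~10.2) for this bound, whose proof writes $\binom{N}{r}2^{-Nh(r/N)}$ as the $\mathrm{Bin}(N,r/N)$ probability of the value $r$, notes that $r$ is a mode, and uses that $N+1$ outcomes have total mass $1$. Your explicit consecutive-ratio check of the mode property and your separate treatment of $r=0$ and $r=N$ fill in details the paper leaves to the reference.
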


% \begin{proof}
% If \(r=0\), the claim is immediate. Otherwise set \(p=r/N\in(0,1/2]\). Under the binomial distribution Binomial\((N,p)\), the probability of any string of Hamming weight exactly \(r\) is
% \[
% p^r(1-p)^{N-r}=2^{-Nh(p)}.
% \]
% Hence
% \[
% \Pr[\text{weight}=r]
% =
% \binom{N}{r}2^{-Nh(p)}.
% \]
% Since \(r=Np\) is a mode of the Binomial\((N,p)\) distribution and there are only \(N+1\) possible weights,
% \[
% \Pr[\text{weight}=r]\ge \frac1{N+1}.
% \]
% Rearranging gives the claim.
% \end{proof}

The next lemma is a variant in which $r$ is expressed as $r=\lfloor Nt\rfloor$ for some parameter $t$.
\begin{lemma}[Rounded binomial lower bound]\label{lem:lip-rounded-binomial}
Let \(N\ge 1\), \(t\in[0,1/2]\), and set
$
r=\lfloor Nt\rfloor.
$
Then
\[
%\sum_{j=0}^{r}\binom{N}{j}
%\ge
\binom{N}{r}
\ge
\frac{1}{eN(N+1)}\,2^{Nh(t)}.
\]
\end{lemma}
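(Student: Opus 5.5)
We apply \Cref{lem:lip-layer-entropy} with the integer $r=\lfloor Nt\rfloor$, which gives $\binom{N}{r}\ge \frac{1}{N+1}2^{Nh(r/N)}$. It then remains to compare $h(r/N)$ with $h(t)$. Since $r/N\le t\le 1/2$ and $h$ is increasing on $[0,1/2]$, the rounding moves us the wrong way. So the plan is to show that the loss $N\bigl(h(t)-h(r/N)\bigr)$ is at most $\log_2(eN)$, i.e.\ that $2^{Nh(r/N)}\ge \frac{1}{eN}2^{Nh(t)}$.

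If $r=0$, we must handle the case directly. Then $Nt<1$, and the claim reads $1\ge \frac{1}{eN(N+1)}2^{Nh(t)}$, so we need $Nh(t)\le \log_2(eN(N+1))$. With $s=Nt<1$ we have $Nh(t)=s\log_2(N/s)+(N-s)\log_2\frac{N}{N-s}$. The second term is at most $s\log_2 e\le \log_2 e$ (from $\ln\frac{1}{1-u}\le \frac{u}{1-u}$ together with $(N-s)\cdot\frac{s/N}{1-s/N}=s$). The first term is $s\log_2 N+s\log_2(1/s)\le \log_2 N+\frac{\log_2 e}{e}$. The total is comfortably below $\log_2(eN(N+1))$.

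For $r\ge 1$, we bound the loss via concavity and the mean value theorem. Write $u=r/N$, so $t-u<1/N$ and $u\ge 1/N$. Since $h$ is concave, $h(t)-h(u)\le h'(u)(t-u)=(t-u)\log_2\frac{1-u}{u}<\frac1N\log_2\frac{1}{u}\le \frac1N\log_2 N$. Hence $N(h(t)-h(u))\le \log_2 N\le\log_2(eN)$. This gives $\binom{N}{r}\ge \frac{1}{N+1}2^{Nh(t)}/N$, which is even stronger than required.

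The only delicate point is the $r=0$ boundary case, where the derivative bound blows up near $0$. This explains the extra factor $e$ in the statement, which absorbs the $(N-s)\log_2\frac{N}{N-s}\le\log_2 e$ term and the $s\log_2(1/s)$ term. I expect that case to be the main thing to verify carefully; the rest is a direct invocation of \Cref{lem:lip-layer-entropy}.
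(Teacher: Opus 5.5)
Your proof is correct and follows the paper's argument. You use the same split into $Nt<1$ (so $r=0$) and $Nt\ge 1$, and you handle the latter exactly as the paper does, via \Cref{lem:lip-layer-entropy} plus the bound $0\le h'\le\log_2 N$ on $[1/N,1/2]$; your concavity tangent line plays the role of the paper's mean value theorem. In the $r=0$ case the paper combines the terms into $h(u)\le u\log_2(e/u)$ and uses monotonicity of $u\mapsto u\log_2(e/u)$ to get $Nh(t)\le\log_2(eN)$ directly. Your separate bound on $s\log_2(1/s)$ instead leaves an extra $\frac{\log_2 e}{e}$, which is absorbed by the spare factor $N+1$, not by $e$ as your closing remark suggests; the inequality chain itself is fine.
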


\begin{proof}
If \(t=0\), then \(r=0\) and the claim is trivial. We therefore assume that \(t>0\).

If \(Nt<1\), then again \(r=0\) and \(\binom{N}{r}=1\). Since
\[
h(u)\le u\log_2\!\frac{e}{u}
\qquad (u\in(0,1])
\]
and 
\(t<1/N\), we have
\[
Nh(t)\le Nt\log_2\!\frac{e}{t}\le \log_2(eN).
\]
Therefore
\[
\frac{1}{eN(N+1)}\,2^{Nh(t)}
\le
\frac{1}{eN}\,2^{Nh(t)}
\le
1
=\binom{N}{r},
\]
which proves the claim.

Now suppose \(Nt\ge 1\). Then
\[
\frac{r}{N}\in\left[t-\frac1N,t\right]\subseteq\left[\frac1N,\frac12\right].
\]
Since
\[
h'(x)=\log_2\!\frac{1-x}{x},
\]
we have \(0\le h'(x)\le \log_2 N\) on \([1/N,1/2]\). By the mean value theorem,
\[
h(r/N)\ge h(t)-\frac{\log_2 N}{N}.
\]
Applying \Cref{lem:lip-layer-entropy} gives
\[
\binom{N}{r}
\ge
\frac{1}{N+1}\,2^{Nh(r/N)}
\ge
\frac{1}{N+1}\,2^{Nh(t)-\log_2 N}
=
\frac{1}{N(N+1)}\,2^{Nh(t)},
\]
which is stronger than claimed.
\end{proof}

We are now ready to prove \Cref{th:lip-favorable}.
%%%%%%%%%%%%%%%%%%%%%%%%%%%%%%%%%%%%%%%%%%%%%%%%%%
%%%%%%%%%%%%%%%%%%%%%%%%%%%%%%%%%%%%%%%%%%%%%%%%%%
%%%%%%%%%%%%%%%%%%%%%%%%%%%%%%%%%%%%%%%%%%%%%%%%%%
\begin{proof}[Proof of \Cref{th:lip-favorable}]
We give the proof of the four items.

\paragraph{Item 1.}
The inclusion \(S_\eta^{\mathrm{lip}}\subseteq T_\eta\) follows immediately from \Cref{prop:lip-local}: for every \(x\in S_\eta^{\mathrm{lip}}\),
\[
H(x)\le H_{\mathrm{min}}+2\Lambda_{\max} d_{\mathrm{avg}}\,r_\eta^{\mathrm{lip}}
\le
H_{\mathrm{min}}+\eta|H_{\mathrm{min}}|
=
(1-\eta)H_{\mathrm{min}}.
\]

\paragraph{Item 2.}
If \(x\in S_\eta^{\mathrm{lip}}\), then by definition
\[
d_H(x,x^\ast)\le r_\eta^{\mathrm{lip}}
\quad\text{and}\quad
\{i:x_i\neq x_i^\ast\}\subseteq L_{\av}.
\]
Therefore,
\[
x^\ast\in \LightBall{\av}{x}{r_\eta^{\mathrm{lip}}},
\]
as claimed.

\paragraph{Item 3.}
We prove that
\[
|S_\eta^{\mathrm{lip}}|
\ge
2^{\frac12 h(\theta_\eta)n-o(n)}.
\]

Choose any subset \(L'\subseteq L_{\av}\) of size
\[
N=\lfloor n/2\rfloor.
\]
This is possible by \Cref{lem:lip-half-light}. Since
\[
r_\eta^{\mathrm{lip}}
=
\left\lfloor \frac{\theta_\eta n}{2}\right\rfloor,
\]
we distinguish two cases.

\smallskip
\noindent\emph{Case 1: \(\theta_\eta\le 1/2\).}
In this case,
\[
r_\eta^{\mathrm{lip}}
=
\left\lfloor \frac{\theta_\eta n}{2}\right\rfloor
\ge
\lfloor \theta_\eta N\rfloor,
\]
because \(N=\lfloor n/2\rfloor\le n/2\). Hence
\[
\LightBall{L'}{x^\ast}{\lfloor \theta_\eta N\rfloor}\subseteq S_\eta^{\mathrm{lip}},
\]
and therefore
\begin{align*}
|S_\eta^{\mathrm{lip}}|
&\ge
\sum_{j=0}^{\lfloor \theta_\eta N\rfloor}\binom{N}{j}\\
&\ge
\binom{N}{\lfloor \theta_\eta N\rfloor}\\
&\ge
2^{Nh(\theta_\eta)-o(n)}\\
&=
2^{\frac12 h(\theta_\eta)n-o(n)},
\end{align*}
where the third line follows from \Cref{lem:lip-rounded-binomial}.

\smallskip
\noindent\emph{Case 2: \(\theta_\eta> 1/2\).}
Then
\[
r_\eta^{\mathrm{lip}}
=
\left\lfloor \frac{\theta_\eta n}{2}\right\rfloor
\ge
\left\lfloor \frac{n}{4}\right\rfloor
\ge
\left\lfloor \frac{N}{2}\right\rfloor.
\]
Hence
\[
\LightBall{L'}{x^\ast}{\lfloor N/2\rfloor}\subseteq S_\eta^{\mathrm{lip}},
\]
and thus
\begin{align*}
|S_\eta^{\mathrm{lip}}|
&\ge
\binom{N}{\lfloor N/2\rfloor}\\
&\ge
2^{N-o(n)}\\
&=
2^{\frac n2-o(n)}.
\end{align*}
Since \(\theta_\eta\in(1/2,1]\) in this case and the binary entropy satisfies
\[
h(\theta_\eta)\le 1,
\]
we obtain
\[
|S_\eta^{\mathrm{lip}}|
\ge
2^{\frac12 h(\theta_\eta)n-o(n)}.
\]

Thus, in all cases,
\[
|S_\eta^{\mathrm{lip}}|
\ge
2^{\frac12 h(\theta_\eta)n-o(n)}.
\]

% \paragraph{Item 3.}
% We first note that
% \[
% 0\le \theta_\eta\le 1.
% \]
% Indeed, \(\eta\in(0,1)\), \(|H_{\mathrm{min}}|\le W\), and
% \[
% \Sigma=\sum_{j=1}^m k_j w_j \ge \sum_{j=1}^m w_j = W,
% \]
% since each constraint is nontrivial and hence has arity at least \(1\). Therefore
% \[
% \theta_\eta
% =
% \frac{\eta |H_{\mathrm{min}}|}{\Lambda_{\max}\Sigma}
% \le
% \frac{\eta W}{\Lambda_{\max}\Sigma}
% \le 1.
% \]

% To obtain a lower bound on \(|S_\eta^{\mathrm{lip}}|\), choose any subset \(L'\subseteq L_{\av}\) of size
% \[
% N=\lfloor n/2\rfloor.
% \]
% This is possible by \Cref{lem:lip-half-light}. Since
% \[
% r_\eta^{\mathrm{lip}}
% =
% \left\lfloor \frac{\theta_\eta n}{2}\right\rfloor
% \ge
% \lfloor \theta_\eta N\rfloor,
% \]
% we have
% \[
% \LightBall{L'}{x^\ast}{\lfloor \theta_\eta N\rfloor}\subseteq S_\eta^{\mathrm{lip}},
% \]
% and therefore
% \begin{align*}
% |S_\eta^{\mathrm{lip}}|
% &\ge
% \sum_{j=0}^{\lfloor \theta_\eta N\rfloor}\binom{N}{j}\\
% &\ge
% \binom{N}{\lfloor \theta_\eta N\rfloor}.
% \end{align*}

% If \(\theta_\eta\le 1/2\), \Cref{lem:lip-rounded-binomial} gives
% \[
% \binom{N}{\lfloor \theta_\eta N\rfloor}
% \ge
% 2^{\frac12 h(\theta_\eta)n-o(n)}.
% \]

% If \(\theta_\eta>1/2\), then by symmetry of the binary entropy function and the monotonicity of Hamming-ball volume up to radius \(N/2\),
% \[
% \sum_{j=0}^{\lfloor \theta_\eta N\rfloor}\binom{N}{j}
% \ge
% \sum_{j=0}^{\lfloor N/2\rfloor}\binom{N}{j}
% \ge
% 2^{N-o(N)}
% \ge
% 2^{\frac12 h(\theta_\eta)n-o(n)},
% \]
% since \(h(\theta_\eta)\le 1\). Thus in all cases
% \[
% |S_\eta^{\mathrm{lip}}|
% \ge
% 2^{\frac12 h(\theta_\eta)n-o(n)}.
% \]

\paragraph{Item 4.}
The search region
\[
\LightBall{\av}{x}{r_\eta^{\mathrm{lip}}}
\]
has the same cardinality for every center \(x\), and by definition that cardinality is exactly \(|S_\eta^{\mathrm{lip}}|\).
\end{proof}
\section{Concluding Remarks}\label{sec:concluding}

\paragraph{Generality of the framework.}
The abstract formulation of \Cref{th:abstract-conditioning-search} was designed not to rely on uniform sampling from the conditioning set, and can therefore also be applied to more structured distributions, such as those produced by Gibbs samplers or Markov chains. Such distributions were used in \cite{Chakrabarti-etal24} to construct quantum algorithms for Max Bisection, Max Independent Set, and for finding the ground states of the Antiferromagnetic Ising Model and the Sherrington-Kirkpatrick Model. Although we do not pursue this systematically in this paper, our approach may therefore be extended to dequantize these quantum algorithms as well. 

At a broader level, we believe that the conditioning-and-search viewpoint may also be useful for the design of classical algorithms beyond the dequantization setting considered here. In an idealized setting, access to an exact proximity (or distance) test to an optimum would allow one to condition directly on a Hamming ball and then search within that region, in a way reminiscent of Sch\"{o}ning-type algorithms~\cite{Schoning02}. Our threshold sets should be viewed as efficiently testable coarse surrogates for such unavailable distance tests. From this perspective, the framework developed here may be relevant not only for comparing classical and quantum algorithms, but also as a possible organizing principle for the design of new exact classical algorithms.

\paragraph{Barriers and hard instances.}
The present results also point to concrete barriers.

% In the instance-determined regime, which in this paper is instantiated by weighted MAX-\(k\)-CSP,
% %of arity at most \(k\), 
% the gain in our comparison with \Cref{th7} deteriorates when the intrinsic normalized optimum scale \(|H_{\mathrm{min}}|/\Sigma\) is small or the irregularity parameter \(D\) is large. Thus, instances with \(|H_{\mathrm{min}}|/\Sigma=o(1)\) or \(D=\omega(1)\) naturally resist the successful-set construction of \Cref{th:lip-favorable} arising from the local-Lipschitz analysis. This identifies a concrete class of hard instances for our approach. In other words, this is a robustness barrier: the local-Lipschitz mechanism fails to produce a sufficiently large explicit successful set.

In the instance-determined regime, which in this paper is instantiated by MAX-\(k\)-CSP,
%of arity at most \(k\), 
the gain in our comparison with \Cref{th7} deteriorates when the normalized optimum scale $\Delta$ is small or the irregularity parameter \(D\) is large. Thus, instances with $\Delta=o(1)$ or \(D=\omega(1)\) naturally resist the successful-set construction of \Cref{th:lip-favorable} arising from the local-Lipschitz analysis. This identifies a concrete class of hard instances for our approach. In other words, this is a robustness barrier: the local-Lipschitz mechanism fails to produce a sufficiently large explicit successful set.

More generally, in the given-\((\gamma,\eta)\) regime, which in this paper is instantiated by MAX-E\(k\)-LIN2, our method converts the lower bound from \Cref{prop:ns-tail} into the explicit shell-counting statement proved in \Cref{lem:ns-successful-lower}. If the thin-threshold condition fails, then random sampling already yields a good approximate solution efficiently; the remaining difficulty is exact optimization. This suggests that progress beyond our improvement on \Cref{th6} may require more global counting-based methods for exact optimization, perhaps closer in spirit to Williams-style exact exponential-time algorithms~\cite{Williams05}.

\paragraph{Dequantization and complexity.}
The paper suggests that an important part of the short-path speedup story can be dequantized, but that the source of the resulting classical gain need not be uniform across settings: in our comparisons, it is entropy-driven for MAX-E\(k\)-LIN2 and robustness-driven for MAX-\(k\)-CSP. It would be interesting to understand how far this approach can be pushed, how it interacts with other forms of dequantization, and how it relates to fine-grained complexity hypotheses such as SETH and QSETH~\cite{AaronsonCLWZ20,BuhrmanPS21,ChiaHLGS26,ImpagliazzoP01,ImpagliazzoPZ01}. In this sense, the present paper should be viewed less as an endpoint than as a starting point for a broader study of conditioning, explicit successful sets, and exact optimization in both classical and quantum settings.

% ==================================================
\section*{AI Disclosure}
We used ChatGPT 5.4 Thinking Extended and Gemini 3.1 Pro to assist with detecting potential logical flaws and inconsistencies in the formal mathematical content throughout the manuscript, as well as brainstorming throughout various stages of the research and writing process. The authors verified the correctness and originality of all content, including the validity of all mathematical steps and the accuracy of all references.
%We used [Tool Name] to assist with [Brief Description of Use]. The tool materially affected [Sections X and Y]. More details can be found in [Section Z]. The authors verified the correctness and originality of all content including references.

\section*{Acknowledgments}%
The authors are grateful to Alexander Dalzel, Matthew Hastings and Ryan Williams for helpful correspondence.
FLG is supported by JSPS KAKENHI grants JP24H00071 and JP25K24674, MEXT Q-LEAP grant JPMXS0120319794, JST ASPIRE grant JPMJAP2302 and JST CREST grant JPMJCR24I4.  
ST is supported by JSPS KAKENHI grants JP20H05961, JP20H05967, and JP22K11909.

%=====================================
\bibliographystyle{alpha}
\bibliography{main}

\appendix
\section{Removing the Known-Optimum Assumption}\label{sec:standard-variants}

The formal algorithmic statements in the main text, namely \Cref{th:concrete-eklin2,th:concrete-maxkcsp}, are stated under the known-optimum assumption, i.e., under the assumption that the optimum value \(H_{\mathrm{min}}\) is given. We keep this formulation in the main text because it makes the role of the successful sets completely explicit and significantly simplifies the presentation. As we will show in this appendix (and in Appendices \ref{sec:top-k-variant} and \ref{sec:optimistic-first-local-ball}), this assumption can be removed. As already mentioned in Remarks~\ref{rem:1} and~\ref{rem:2}, this is trivial in the unweighted case. We explain here how to handle the weighted case.

%First, for the algorithms stated under the known-optimum assumption in the main text, the expected-time formulations immediately yield standard bounded-error variants once one makes the relevant success probabilities explicit. Thus the known-optimum statements may be converted into bounded-error formulations with the same running-time exponent.

Note that in the weighted case, removing the known-optimum assumption is not merely a routine binary-search reduction. The reason is that the threshold
\[
(1-\eta)H_{\mathrm{min}}
\]
does not appear in isolation: it is tied to the success argument itself. In particular, one does not simply have a monotone yes/no decision oracle whose positive instances can be located by binary search. A wrong guess can fail in two different ways: it may define the wrong sublevel set, and it may also invalidate the local-search step built on that threshold information.

The paper treats two regimes. In the given-\((\gamma,\eta)\) regime, realized through the correlated-pair analysis of \Cref{subsec:noise-stability-analysis}, success is certified by a suitable typical Hamming layer. In the parameter-derived \((\gamma,\eta)\) regime, realized through the local-ball analysis of \Cref{subsec:local-lipschitz-analysis}, success is certified by a suitable restricted Hamming ball around an optimum assignment. %The corresponding variants without the known-optimum assumption are possible, but the reductions needed to remove that assumption are logically downstream from the successful-set constructions and would substantially interrupt the flow of the main text if presented there.
The two regimes differ in exactly how the above difficulty appears.

In the given-\((\gamma,\eta)\) regime, realized in the MAX-E\(k\)-LIN2 setting of \Cref{subsec:noise-stability-analysis}, the search radius
\[
r_\eta^{\mathrm{ns}}=\left\lceil q_\eta n+2n^{2/3}\right\rceil
\]
depends only on \(\eta\), \(k\), and \(n\). Here the issue created by the unknown optimum is independent of the search radius: the only missing ingredient is the explicit membership test
\[
H(x)\le (1-\eta)H_{\mathrm{min}}.
\]
Appendix~\ref{sec:top-k-variant} handles this by a ranking-based reduction, which replaces explicit thresholding by retaining the sampled points with the smallest \(H\)-values. The key input is that the successful set is large enough to be hit with high probability, while the near-optimal region is thin enough that successful points are not discarded by the ranking step. This yields a bounded-error algorithm with the same running-time exponent as in \Cref{th:concrete-eklin2}, up to the same \(2^{o(n)}\) overhead already present there.

% In the given-\((\gamma,\eta)\) regime, realized in the MAX-E\(k\)-LIN2 setting of \Cref{subsec:noise-stability-analysis}, the search radius
% \[
% r_\eta^{\mathrm{ns}}=\left\lceil q_\eta n+2n^{2/3}\right\rceil
% \]
% depends only on \(\eta\), \(k\), and \(n\). Here the issue created by the unknown optimum is geometry-independent: the only missing ingredient is the explicit membership test
% \[
% H(x)\le (1-\eta)H_{\mathrm{min}}.
% \]
% Section~\ref{sec:top-k-variant} handles this by a ranking-based reduction, which replaces explicit thresholding by retaining the lowest-energy sampled points. The key input is that the successful set is large enough to be hit with high probability, while the near-optimal region is thin enough that successful points are not discarded by the Top-\(K\) truncation. This yields a bounded-error algorithm with the same exponential rate as in \Cref{th:concrete-eklin2}, up to the same \(2^{o(n)}\) slack already present there.

In the parameter-derived \((\gamma_\eta,\eta)\) regime, realized in the weighted MAX-\(k\)-CSP setting of \Cref{subsec:local-lipschitz-analysis}, the search radius itself depends on the unknown value \(H_{\mathrm{min}}\), since
% In the parameter-derived \((\gamma,\eta)\) regime, realized in the weighted MAX-\(k\)-CSP setting of \Cref{subsec:local-lipschitz-analysis}, the search radius itself depends on the unknown value \(H_{\mathrm{min}}\), since
\[
r_\eta^{\mathrm{lip}}
=
\left\lfloor
\frac{\eta |H_{\mathrm{min}}|}{2\Lambda_{\max} d_{\mathrm{avg}}}
\right\rfloor.
\]
Thus the issue created by the unknown optimum also affects the choice of search radius: one must guess a value large enough to make the local-ball argument go through, while preserving the correct running-time exponent. Appendix~\ref{sec:optimistic-first-local-ball} handles this by a one-sided reduction over candidate upper bounds. The key point is that invalid guesses fail safely, while the running-time exponent is monotone in the guessed bound. This yields, for each fixed rational \(\eta\in(0,1)\), a high-probability success-time guarantee with the same running-time exponent as the corresponding known-optimum statement in \Cref{th:concrete-maxkcsp}: with high probability, the algorithm outputs an optimum assignment within that running time.

\section[Removing the Known-Optimum Assumption in the Given-$(\gamma,\eta)$ Regime]{Removing the Known-Optimum Assumption in the Given-\texorpdfstring{$\boldsymbol{(\gamma,\eta)}$}{(gamma,eta)} Regime}
\label{sec:top-k-variant}

In the given-\((\gamma,\eta)\) regime, the sublevel-set exponent is supplied externally, and the only missing ingredient in the unknown-optimum setting is the explicit threshold based on \(H_{\mathrm{min}}\). In the correlated-pair framework of \Cref{subsec:noise-stability-analysis}, the search radius is already available independently of \(H_{\mathrm{min}}\), since
\[
r_\eta^{\mathrm{ns}}=\left\lceil q_\eta n+2n^{2/3}\right\rceil
\]
depends only on \(\eta\), \(k\), and \(n\). We therefore isolate a ranking-based reduction that removes the need to know the threshold explicitly.
% In the given-\((\gamma,\eta)\) regime, the sublevel-set exponent is supplied externally, and the only missing ingredient in the unknown-optimum setting is the explicit threshold based on \(H_{\mathrm{min}}\). In the correlated-pair framework of \Cref{subsec:noise-stability-analysis}, the recovery geometry is already available independently of \(H_{\mathrm{min}}\), since the search radius
% \[
% r_\eta^{\mathrm{ns}}=\left\lceil q_\eta n+2n^{2/3}\right\rceil
% \]
% depends only on \(\eta\), \(k\), and \(n\). We therefore isolate a ranking-based reduction that removes the need to know the threshold explicitly.

Throughout this appendix, \(\eta\in(0,1)\) and \(\gamma>0\) are fixed explicit parameters independent of \(n\). The only quantity not assumed to be known is the optimum value \(H_{\mathrm{min}}\).

\begin{theorem}[Generic ranking-based reduction]
\label{thm:generic-top-k}
Let \(\Omega\) be a finite search space, and let \(H\colon \Omega\to \mathbb R\) be an objective function to minimize. Suppose there exist subsets \(S\subseteq T\subseteq \Omega\) and parameters \(s,t,M,\tau\) such that:
\begin{enumerate}
    \item \(S \subseteq \{x\in\Omega:H(x)\le \tau\}\subseteq T\);
    \item for every \(x\in S\), there is a region \(\mathcal N(x)\subseteq \Omega\) of size at most \(M\) that contains at least one optimum point of \(H\);
    \item \(|S|\ge s\) and \(|T|\le t\).
\end{enumerate}
Let \(\delta\in(0,1/2)\), and define
\[
N=\left\lceil \frac{|\Omega|}{s}\ln\frac{2}{\delta}\right\rceil
\qquad\text{and}\qquad
K=\min\!\left\{N,\left\lceil \frac{2}{\delta}N\frac{t}{|\Omega|}\right\rceil\right\}.
\]
Then the algorithm that samples \(N\) points uniformly from \(\Omega\), retains the \(K\) sampled points with smallest \(H\)-value, exhaustively searches the region of each retained point, and outputs a point of minimum \(H\)-value among all examined points, outputs an optimum point with probability at least \(1-\delta\). Its running time is
\[
O\!\left(N\cdot \mathrm{poly}(n)+K M\cdot \mathrm{poly}(n)\right).
\]
% Then the algorithm that samples \(N\) points uniformly from \(\Omega\), retains the \(K\) points of lowest \(H\)-value, exhaustively searches the region of each retained point, and outputs a minimum-energy point among all examined points, outputs an optimum point with probability at least \(1-\delta\). Its running time is
% \[
% O\!\left(N\cdot \mathrm{poly}(n)+K M\cdot \mathrm{poly}(n)\right).
% \]
\end{theorem}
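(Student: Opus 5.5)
The plan is to bound two failure events, each by $\delta/2$, and then handle the running time separately. Call the $N$ samples $X_1,\dots,X_N$, drawn i.i.d.\ uniformly from $\Omega$. Let $E_1$ be the event that no sample lands in $S$. Let $E_2$ be the event that more than $K$ samples land in $T$. I will show that outside $E_1\cup E_2$ the algorithm succeeds deterministically.

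\emph{Step 1 (hitting $S$).} Each sample lies in $S$ with probability $|S|/|\Omega|\ge s/|\Omega|$. Therefore
\[
\Pr[E_1]\le (1-s/|\Omega|)^N\le e^{-Ns/|\Omega|}\le \delta/2,
\]
by the choice of $N$.

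\emph{Step 2 (few samples in $T$).} Let $Z$ be the number of samples in $T$. Then
\[
\mathbb E[Z]=N|T|/|\Omega|\le Nt/|\Omega|.
\]
If $K=N$, then $E_2$ is empty. Otherwise $K\ge \tfrac{2}{\delta}Nt/|\Omega|$, and Markov's inequality gives
\[
\Pr[Z>K]\le \mathbb E[Z]/K\le \delta/2.
\]
(The degenerate case $t\cdot N=0$ is fine, since then $Z=0$ almost surely.)

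\emph{Step 3 (deterministic success off $E_1\cup E_2$).} This is the step needing care, namely the interaction between ranking and thresholding. Suppose some sample $X_i\in S$ and $Z\le K$. By assumption 1, $H(X_i)\le \tau$. Every sample $X_j$ with $H(X_j)\le H(X_i)\le\tau$ lies in $\{H\le\tau\}\subseteq T$, so at most $Z\le K$ samples have value at most $H(X_i)$.

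Ties need a little care. The sort order may place equal-valued points arbitrarily, but all points with value $\le H(X_i)$ number at most $K$. Hence, under any tie-breaking, $X_i$ (or a repeated copy of it, which has the same region) is among the $K$ retained points. Its region contains an optimum by assumption 2. So the minimum over all examined points has value $H_{\mathrm{min}}$, and the output is optimal. A union bound then gives success probability at least $1-\delta$.

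\emph{Step 4 (running time).} Sampling and evaluating $N$ points costs $N\cdot\mathrm{poly}(n)$. Selecting the $K$ smallest costs $O(N\log N)$ comparisons, which is absorbed into the first term since $\log N=\mathrm{poly}(n)$ for $|\Omega|\le 2^{\mathrm{poly}(n)}$. Searching $K$ regions of size at most $M$ costs $KM\cdot\mathrm{poly}(n)$. Note that the algorithm does not need $\tau$ or $H_{\mathrm{min}}$, only $s$, $t$, and $\delta$.

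The main obstacle is Step 3, specifically the tie-breaking argument: one must use the inclusion $\{H\le\tau\}\subseteq T$, not merely $S\subseteq T$. Otherwise points outside $T$ could outrank the successful sample.
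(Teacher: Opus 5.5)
Your proof is correct and follows essentially the same route as the paper. Both arguments bound the probability of missing $S$ by $(1-s/|\Omega|)^N\le e^{-Ns/|\Omega|}\le\delta/2$, bound the number of samples in $T$ via Markov's inequality, and then use $\{H\le\tau\}\subseteq T$ to conclude that at most $K$ samples have value at most $H(x_0)$, so $x_0$ survives the ranking and its region supplies an optimum. Your extra remarks on tie-breaking, the degenerate case, and the selection cost are harmless refinements of the same argument.
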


\begin{proof}
Let \(X_S\) and \(X_T\) be the numbers of sampled points in \(S\) and \(T\), respectively. The probability that the sample misses \(S\) entirely is
\[
(1-|S|/|\Omega|)^N \le \exp(-Ns/|\Omega|)\le \delta/2.
\]
Since
\[
\mathbb E[X_T]=N|T|/|\Omega|\le Nt/|\Omega|,
\]
Markov's inequality gives
\[
\Pr[X_T>K]\le \delta/2
\]
(trivially if \(K=N\)). Hence, with probability at least \(1-\delta\), the sample contains some \(x_0\in S\) and at most \(K\) points from \(T\).

On this event, every sampled point \(z\) with \(H(z)\le H(x_0)\) lies in \(T\), since \(H(x_0)\le \tau\) and \(\{x:H(x)\le \tau\}\subseteq T\). Thus at most \(K\) sampled points have \(H\)-value at most \(H(x_0)\), so \(x_0\) is among the retained points. By assumption 2, \(\mathcal N(x_0)\) contains an optimum point. Since the algorithm exhaustively searches \(\mathcal N(x_0)\) and finally outputs a point of minimum \(H\)-value among all examined points, it outputs an optimum point.
% On this event, every sampled point \(z\) with \(H(z)\le H(x_0)\) lies in \(T\), since \(H(x_0)\le \tau\) and \(\{x:H(x)\le \tau\}\subseteq T\). Thus at most \(K\) sampled points have energy at most \(H(x_0)\), so \(x_0\) is among the retained points. By assumption 2, \(\mathcal N(x_0)\) contains an optimum point. Since the algorithm exhaustively searches \(\mathcal N(x_0)\) and finally outputs a minimum-energy point among all examined points, it outputs an optimum point.

The running-time bound follows immediately.
\end{proof}

Here is the main result of this appendix.
\begin{theorem}[Removing the known-optimum assumption in the given-\texorpdfstring{$(\gamma,\eta)$}{(gamma,eta)} regime]
\label{th:top-k-without-hmin}
Fix \(\eta\in(0,1)\) and \(\gamma>0\). Assume
\[
|T_\eta|\le 2^{(1-\gamma)n},
\qquad
T_\eta=\{x:H(x)\le (1-\eta)H_{\mathrm{min}}\}.
\]
For all sufficiently large \(n\), even if \(H_{\mathrm{min}}\) is unknown
there is a randomized algorithm that  outputs an optimum assignment with probability at least \(9/10\), in time
\[
2^{\left(1-\min\{\gamma,h(q_\eta)\}\right)n+o(n)}.
\]
\end{theorem}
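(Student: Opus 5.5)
We will instantiate \Cref{thm:generic-top-k} with $\Omega=\{-1,1\}^n$, $T=T_\eta$, $S=S_\eta^{\mathrm{ns}}$, $\tau=(1-\eta)H_{\mathrm{min}}$, $\mathcal N(x)=\Ball{x}{r_\eta^{\mathrm{ns}}}$, and $\delta=1/10$. Nothing in this choice requires knowing $H_{\mathrm{min}}$ algorithmically. The threshold $\tau$ and the sets $S,T$ enter only the analysis, while the algorithm of \Cref{thm:generic-top-k} uses only $N$, $K$, the ranking by $H$-values, and the search radius $r_\eta^{\mathrm{ns}}$. These depend on $n,k,\eta,\gamma$ and on the numerical lower bound $s$ and upper bound $t$, not on $H_{\mathrm{min}}$.

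\textbf{Step 1: hypotheses of \Cref{thm:generic-top-k}.} Assumption 1 holds because $S_\eta^{\mathrm{ns}}\subseteq\{x:H(x)\le\tau\}=T_\eta$, by the definition of $S_\eta^{\mathrm{ns}}$ and Item 1 of \Cref{th:ns-favorable}. Assumption 2 follows from Item 2 of \Cref{th:ns-favorable}: every $x\in S_\eta^{\mathrm{ns}}$ satisfies $x^\ast\in\Ball{x}{r_\eta^{\mathrm{ns}}}$. By Item 4, we may take $M=2^{h(q_\eta)n+o(n)}$. For Assumption 3 we take $t=2^{(1-\gamma)n}$ from the hypothesis. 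We take $s=2^{h(q_\eta)n-g(n)}$ for an explicit $g(n)=o(n)$, extracted from \Cref{lem:ns-successful-lower} and \Cref{lem:ns-shell-estimates}: for instance $g(n)=C_{k,\eta}n^{2/3}+\log_2(2n)$, with $C_{k,\eta}$ computable from $k$ and $\eta$.

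\textbf{Step 2: running time.} With these values,
\[
N=\bigl\lceil 2^{n-h(q_\eta)n+o(n)}\ln 20\bigr\rceil=2^{(1-h(q_\eta))n+o(n)},
\]
and
\[
K\le \bigl\lceil 20\,N\,2^{-\gamma n}\bigr\rceil\le 2^{(1-h(q_\eta)-\gamma)n+o(n)}+1 .
\]
Hence
\[
KM\le 2^{(1-\gamma)n+o(n)}+2^{h(q_\eta)n+o(n)} .
\]
Since $h(q_\eta)\le 1$, the term $2^{h(q_\eta)n}$ is at most $2^{(1-\gamma)n}$ whenever $\gamma\le 1-h(q_\eta)$. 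Otherwise it is absorbed because $h(q_\eta)\le 1-\min\{\gamma,h(q_\eta)\}$ in the relevant range. I would verify this carefully; alternatively, the ``$+1$'' case is handled by noting that $M\le 2^{(1-h(q_\eta))n}$ when $h(q_\eta)\le 1/2$. The total running time is therefore
\[
N\,\mathrm{poly}(n)+KM\,\mathrm{poly}(n)\le 2^{(1-\min\{\gamma,h(q_\eta)\})n+o(n)} .
\]
Retaining the $K$ smallest values costs only $O(N\log N)$. \Cref{thm:generic-top-k} then gives success probability at least $1-\delta=9/10$.

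\textbf{Main obstacle.} The main obstacle is the rounding term in $K$. When $K=1$ the cost $KM=M=2^{h(q_\eta)n}$ must be shown to be at most $2^{(1-\min\{\gamma,h(q_\eta)\})n}$. This holds iff $h(q_\eta)\le 1-\min\{\gamma,h(q_\eta)\}$, which is true when $h(q_\eta)\le 1/2$ or $\gamma\le 1-h(q_\eta)$. In the remaining corner case, where both $\gamma$ and $h(q_\eta)$ exceed $1/2$, I would first try replacing $\eta$ by a smaller $\eta'\le\eta$. Since $T_{\eta'}\subseteq T_\eta$, the thinness hypothesis is preserved, and shrinking $\eta$ lowers $q_{\eta}$ and hence $h(q_\eta)$. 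I would choose $\eta'$ so that $h(q_{\eta'})=\min\{h(q_\eta),1/2\}$, accepting a possibly weaker exponent there, or check that the paper's intended bound already tolerates this. A secondary point is that $s$ must be an explicit computable quantity rather than an asymptotic $2^{-o(n)}$. This is handled by the explicit constants in \Cref{lem:ns-shell-estimates}.
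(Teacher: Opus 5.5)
Your approach is the paper's: instantiate \Cref{thm:generic-top-k} with $\Omega=\{-1,1\}^n$, $S=S_\eta^{\mathrm{ns}}$, $T=T_\eta$, $\tau=(1-\eta)H_{\mathrm{min}}$ and $\delta=1/10$, then bound $N$, $K$ and $KM$. Your remark that $N$ must be computed from an explicit lower bound $s$ is a fair sharpening of the paper's write-up, since the algorithm cannot know $|S_\eta^{\mathrm{ns}}|$ itself. But your Step~2 is not finished as written, for three reasons:
\begin{itemize}
\item You leave unresolved the case where the ceiling in $K$ rounds up to $1$.
\item You describe that case inaccurately. The problematic region is $h(q_\eta)>1/2$ together with $\gamma>1-h(q_\eta)$, not ``both exceed $1/2$''.
\item Your fallback of shrinking $\eta$ while ``accepting a possibly weaker exponent'' would not prove the stated bound.
\end{itemize}

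The missing observation is that this case cannot occur. Because $S_\eta^{\mathrm{ns}}\subseteq T_\eta$, the numbers you feed in satisfy
\[
s\le |S_\eta^{\mathrm{ns}}|\le |T_\eta|\le t ,
\qquad\text{so}\qquad
h(q_\eta)n-g(n)\le (1-\gamma)n ,
\]
i.e.\ $h(q_\eta)\le 1-\gamma+o(1)$. Two consequences follow:
\begin{itemize}
\item The quantity inside the ceiling defining $K$ is at least $\frac{2}{\delta}\ln\frac{2}{\delta}\cdot\frac{t}{s}\ge 20\ln 20>1$, so the ceiling at most doubles it.
\item Independently, a single ball already costs $M\le 2^{h(q_\eta)n+o(n)}\le 2^{(1-\gamma)n+o(n)}$.
\end{itemize}
Hence $KM\le 2^{(1-\gamma)n+o(n)}$ unconditionally, with no case analysis and no change of $\eta$. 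The paper's proof uses this implicitly when it writes $K\le 2^{(1-h(q_\eta)-\gamma)n+O(n^{2/3})}$. That bound on a ceiling is valid only because this exponent is at least $-O(n^{2/3})$, which is exactly the inclusion argument above.
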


\begin{proof}
We instantiate \Cref{thm:generic-top-k} with
\[
\Omega=\{-1,1\}^n,
\qquad
\tau=(1-\eta)H_{\mathrm{min}},
\qquad
S=S_\eta^{\mathrm{ns}},
\qquad
T=T_\eta.
\]
By construction,
\[
S_\eta^{\mathrm{ns}}\subseteq \{x:H(x)\le \tau\}\subseteq T_\eta.
\]
By \Cref{th:ns-favorable}, every \(x\in S_\eta^{\mathrm{ns}}\) satisfies
\[
x^\ast\in \Ball{x}{r_\eta^{\mathrm{ns}}},
\]
and
\[
\max_x |\Ball{x}{r_\eta^{\mathrm{ns}}}|
\le 2^{h(q_\eta)n+O(n^{2/3})}.
\]
Also,
\[
|S_\eta^{\mathrm{ns}}|\ge 2^{h(q_\eta)n-O(n^{2/3})},
\]
while the assumption gives
\[
|T_\eta|\le 2^{(1-\gamma)n}.
\]
Substituting these bounds into \Cref{thm:generic-top-k} with \(\delta=1/10\) gives the claimed success probability.

The sample size satisfies
\[
N = \left\lceil \frac{2^n}{|S_\eta^{\mathrm{ns}}|}\ln 20\right\rceil
= 2^{(1-h(q_\eta))n+O(n^{2/3})}.
\]
The retention cap is
\[
K\le \left\lceil 20N\frac{|T_\eta|}{2^n}\right\rceil
\le 2^{(1-h(q_\eta)-\gamma)n+O(n^{2/3})}.
\]
Thus, the search cost is bounded by
\[
K\cdot \max_x |\Ball{x}{r_\eta^{\mathrm{ns}}}|
\le
2^{(1-\gamma)n+O(n^{2/3})}.
\]
Since \(O(n^{2/3})=o(n)\), the total running time is
\[
2^{(1-h(q_\eta))n+o(n)} + 2^{(1-\gamma)n+o(n)}
=
2^{(1-\min\{\gamma,h(q_\eta)\})n+o(n)},
\]
as desired.
\end{proof}

\section[{Removing the Known-Optimum Assumption in the Parameter-Derived $(\gamma_\eta,\eta)$ Regime}]{Removing the Known-Optimum Assumption in the Parameter-Derived \texorpdfstring{$\boldsymbol{(\gamma_\eta,\eta)}$}{(gamma_eta,eta)} Regime}
% \section[{Removing the Known-Optimum Assumption in the Parameter-Derived $(\gamma,\eta)$ Regime}]{Removing the Known-Optimum Assumption in the Parameter-Derived \texorpdfstring{$\boldsymbol{(\gamma,\eta)}$}{(gamma,eta)} Regime}
\label{sec:optimistic-first-local-ball}

In the parameter-derived \((\gamma_\eta,\eta)\) regime, the threshold exponent is not supplied externally but is instead obtained from the instance parameters.
% In the parameter-derived \((\gamma,\eta)\) regime, the threshold exponent is not supplied externally but is instead obtained from the instance parameters. 
In the local-Lipschitz framework of \Cref{subsec:local-lipschitz-analysis}, this same dependence also enters the search radius, since
\[
r_\eta^{\mathrm{lip}}
=
\left\lfloor
\frac{\eta |H_{\mathrm{min}}|}{2\Lambda_{\max} d_{\mathrm{avg}}}
\right\rfloor
\]
depends on the unknown value \(H_{\mathrm{min}}\). Thus the issue is not only that the threshold is unknown: the search radius itself depends on \(H_{\mathrm{min}}\).

The basic idea is to try candidate upper bounds \(U\) for \(H_{\mathrm{min}}\), starting from the most optimistic ones. For each such \(U\), we run a one-sided routine: invalid guesses must fail safely, while valid guesses should return an optimum with high probability. The remaining point is to check that the running-time exponent is monotone in the guessed upper bound.

% In the parameter-derived \((\gamma,\eta)\) regime, the threshold exponent is not supplied externally but is instead obtained from the instance parameters. In the local-Lipschitz framework of \Cref{subsec:local-lipschitz-analysis}, this same dependence also enters the recovery geometry, since the search radius
% \[
% r_\eta^{\mathrm{lip}}
% =
% \left\lfloor
% \frac{\eta |H_{\mathrm{min}}|}{2\Lambda_{\max} d_{\mathrm{avg}}}
% \right\rfloor
% \]
% depends on the unknown optimum value \(H_{\mathrm{min}}\). Thus the issue is not only that the threshold is unknown: the recovery step itself depends on the unknown optimum scale.

% The basic idea is to try candidate upper bounds \(U\) for \(H_{\mathrm{min}}\), starting from the most optimistic ones. For each such \(U\), we run a one-sided routine: invalid guesses must fail safely, while valid guesses should recover an optimum with high probability. The remaining point is to check that the running-time exponent behaves monotonically with the guessed scale.

Throughout this appendix, \(\eta\in\mathbb Q\cap(0,1)\) is fixed independently of \(n\). The only quantity not assumed to be known is \(H_{\mathrm{min}}\).

We state the next theorem in an abstract form. In our application to weighted MAX-\(k\)-CSP, \(OPT=H_{\mathrm{min}}\), and a ``solution'' means an optimum assignment. The word “deterministically” emphasizes the one-sided nature of the reduction: if the guessed bound is too optimistic, the routine always returns \(\mathsf{NULL}\), independently of its internal randomness.
% \begin{theorem}[Generic one-sided reduction over candidate bounds]
% \label{thm:generic-optimistic}
% Let \(OPT\) be an unknown target value. Suppose we are given candidate bounds
% \[
% U_R < U_{R-1} < \cdots < U_1
% \]
% and a randomized routine \(\mathrm{SearchBounded}(U,\delta)\) such that:
% \begin{enumerate}
%     \item \textbf{One-sidedness:} If \(U < OPT\), the routine deterministically returns \(\mathsf{NULL}\). If \(U \ge OPT\), it outputs a valid solution to the target search problem with probability at least \(1-\delta\).
%     \item \textbf{Monotone cost:} The worst-case running time is \(T(U_r)\), where \(T(U_r)\) is nonincreasing in \(r\).
%     \item \textbf{Valid regime:} There exists \(r^\ast\in\{1,\dots,R\}\) such that
%     \[
%     U_r<OPT \quad (r>r^\ast),
%     \qquad
%     U_{r^\ast}\ge OPT.
%     \]
% \end{enumerate}
% Consider the algorithm that iterates \(r=R,R-1,\dots,1\), computes
% \[
% y_r\leftarrow \mathrm{SearchBounded}(U_r,\delta),
% \]
% and halts returning \(y_r\) if \(y_r\neq\mathsf{NULL}\). Then, with probability at least \(1-\delta\), the algorithm halts at stage \(r^\ast\), outputs a valid solution, and the total running time up to termination is
% \[
% O(R\cdot T(U_{r^\ast})).
% \]
% \end{theorem}

\begin{theorem}[Generic one-sided reduction over candidate bounds]
\label{thm:generic-optimistic}
Let \(OPT\) be an unknown target value. Suppose we are given candidate bounds
\[
U_R < U_{R-1} < \cdots < U_1
\]
such that there exists \(r^\ast\in\{1,\dots,R\}\) satisfying
\[
U_r<OPT \quad (r>r^\ast),
\qquad
U_{r^\ast}\ge OPT,
\]
and a randomized routine \(\mathrm{SearchBounded}(U,\delta)\) such that:
\begin{enumerate}
    \item \textbf{One-sidedness:} If \(U < OPT\), the routine deterministically returns \(\mathsf{NULL}\). If \(U \ge OPT\), it outputs a solution to the target search problem with probability at least \(1-\delta\).
    \item \textbf{Monotone cost:} The worst-case running time is \(T(U_r)\), where \(T(U_r)\) is nonincreasing in \(r\).
\end{enumerate}
Consider the algorithm that iterates \(r=R,R-1,\dots,1\), computes
\[
y_r\leftarrow \mathrm{SearchBounded}(U_r,\delta),
\]
and halts returning \(y_r\) if \(y_r\neq\mathsf{NULL}\). Then, with probability at least \(1-\delta\), the algorithm halts at stage \(r^\ast\), outputs a solution, and the total running time up to termination is
\[
O(R\cdot T(U_{r^\ast})).
\]
\end{theorem}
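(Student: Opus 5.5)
The plan is to analyze the stages of the iteration one at a time, in the order $r=R,R-1,\dots,r^\ast$. Two kinds of stages need separate treatment. The stages strictly before $r^\ast$ (indices $r>r^\ast$) always fail. The stage $r^\ast$ succeeds with probability at least $1-\delta$.

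\emph{Stages $r>r^\ast$.} By the hypothesis on the candidate bounds, $U_r<OPT$ for every $r>r^\ast$. By one-sidedness, each call $\mathrm{SearchBounded}(U_r,\delta)$ with $r>r^\ast$ returns $\mathsf{NULL}$ deterministically, whatever its internal randomness. Hence, with probability one, the algorithm does not halt at any stage $r>r^\ast$ and reaches stage $r^\ast$. No randomness is consumed in this argument, so no union bound over stages is needed. This is the role of the word ``deterministically'' in the statement.

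\emph{Stage $r^\ast$.} Here $U_{r^\ast}\ge OPT$. By one-sidedness, the call outputs a solution with probability at least $1-\delta$. In particular the output is then not $\mathsf{NULL}$, so the algorithm halts at stage $r^\ast$ and returns that solution. This gives the probability claim.

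\emph{Running time.} On this event, the algorithm performs the stages $r=R,\dots,r^\ast$, which is at most $R$ calls. The call at stage $r$ costs at most $T(U_r)$. By monotone cost, $T(U_r)$ is nonincreasing in $r$, so for every $r\ge r^\ast$ we have $T(U_r)\le T(U_{r^\ast})$. Summing over the executed stages gives total time at most $(R-r^\ast+1)\,T(U_{r^\ast})=O(R\cdot T(U_{r^\ast}))$, plus $O(R)$ overhead for the loop and the $\mathsf{NULL}$ tests.

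The argument is straightforward. The only point needing care is the direction of the indexing: the loop runs from $R$ down to $1$, and the costs at the large indices already visited must be bounded by the cost at $r^\ast$. Monotone cost supplies exactly this. The other possible pitfall is forgetting that the failures at the earlier stages are sure events rather than merely likely ones.
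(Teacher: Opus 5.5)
Your proposal is correct and follows the paper's own proof essentially step for step. The stages $r>r^\ast$ return $\mathsf{NULL}$ with certainty by one-sidedness, stage $r^\ast$ succeeds with probability at least $1-\delta$, and monotone cost bounds each executed stage by $T(U_{r^\ast})$, giving $O(R\cdot T(U_{r^\ast}))$ in total.
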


\begin{proof}
By assumption on the candidate bounds, there exists \(r^\ast\in\{1,\dots,R\}\) such that
\[
U_r<OPT \quad (r>r^\ast),
\qquad
U_{r^\ast}\ge OPT.
\]

For all \(r>r^\ast\), one has \(U_r<OPT\), so by one-sidedness the routine deterministically returns \(\mathsf{NULL}\). Hence the algorithm cannot halt before stage \(r^\ast\).

At stage \(r^\ast\), one has \(U_{r^\ast}\ge OPT\), so the routine outputs a solution with probability at least \(1-\delta\). On this event, the algorithm halts at stage \(r^\ast\).

Conditioned on this event, the executed stages are exactly \(r=R,R-1,\dots,r^\ast\). By monotonicity,
\[
T(U_r)\le T(U_{r^\ast})
\qquad\text{for all }r\ge r^\ast.
\]
Since at most \(R\) stages are executed, the total running time is \(O(R\cdot T(U_{r^\ast}))\).
\end{proof}

We now instantiate \Cref{thm:generic-optimistic} in the weighted MAX-\(k\)-CSP setting of \Cref{subsec:local-lipschitz-analysis}.

For any explicit rational \(U\in[-W,0)\), define
\[
T_\eta(U)=\{x:H(x)\le (1-\eta)U\},
\qquad
r_\eta(U)=
\left\lfloor
\frac{\eta |U|}{2\Lambda_{\max} d_{\mathrm{avg}}}
\right\rfloor,
\qquad
S_\eta(U)=
\LightBall{\av}{x^\ast}{r_\eta(U)}.
\]
Also define
\[
c^{\mathrm{ball}}_\eta(U)
=
\frac12\,h\!\left(
\frac{\eta |U|}{\Lambda_{\max}\Sigma}
\right),
\]
\[
c^{\mathrm{tail}}_\eta(U)
=
\frac{2(1-\eta)^2}{\ln 2}\cdot
\frac{1}{\Lambda_{\max}^2 D}
\left(\frac{|U|}{\Sigma}\right)^2,
\qquad
c_\eta(U)=\min\{c^{\mathrm{ball}}_\eta(U),c^{\mathrm{tail}}_\eta(U)\}.
\]
% \[
% c^{\mathrm{ball}}_\eta(U)
% =
% \frac12\,h\!\left(
% \frac{\eta}{\Lambda_{\max}k}\cdot \frac{|U|}{W}
% \right),
% \]
% \[
% c^{\mathrm{tail}}_\eta(U)
% =
% \frac{2(1-\eta)^2}{\ln 2}\cdot
% \frac{1}{\Lambda_{\max}^2 k^2D}
% \left(\frac{|U|}{W}\right)^2,
% \qquad
% c_\eta(U)=\min\{c^{\mathrm{ball}}_\eta(U),c^{\mathrm{tail}}_\eta(U)\}.
% \]
Finally, let
\[
c_\eta^\star=c_\eta(H_{\mathrm{min}}).
\]

\begin{proposition}[One-sided bounded-search routine]
\label{prop:upper-bound-routines}
There is a randomized routine \(\mathrm{SearchBounded}(U,\delta)\) taking
\[
U\in[-W,0),
\qquad
\delta\in(0,1/2),
\]
such that:
\begin{enumerate}
    \item If \(H_{\mathrm{min}}\le U\), it outputs an optimum assignment with probability at least \(1-\delta\).
    \item If \(H_{\mathrm{min}}>U\), it deterministically outputs \(\mathsf{NULL}\).
    \item Its worst-case running time is at most
    \[
    2^{(1-c_\eta(U))n+o(n)}.
    \]
\end{enumerate}
\end{proposition}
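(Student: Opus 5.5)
We implement $\mathrm{SearchBounded}(U,\delta)$ as the conditioning-and-search algorithm of \Cref{th:black-box}, run with the guessed threshold $(1-\eta)U$ in place of $(1-\eta)H_{\mathrm{min}}$. We then check correctness and the cost bound exactly as in the known-optimum case, using only inequalities that stay valid when $H_{\mathrm{min}}\le U$. Throughout, write
\[
\theta(U)=\frac{\eta|U|}{\Lambda_{\max}\Sigma},\qquad s_U=2^{c^{\mathrm{ball}}_\eta(U)n-o(n)},\qquad t_U=2^{(1-c^{\mathrm{tail}}_\eta(U))n}.
\]
All of these are computable from $U$ and the instance.

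\emph{The routine.} Set $N=\lceil (2^n/s_U)\ln(2/\delta)\rceil$ and $K=\lceil (2/\delta)Nt_U/2^n\rceil$. The routine does the following:
\begin{itemize}
\item Draw $N$ uniform points of $\{-1,1\}^n$.
\item For each drawn point $x$ with $H(x)\le(1-\eta)U$, i.e.\ $x\in T_\eta(U)$, exhaustively search $\LightBall{\av}{x}{r_\eta(U)}$. Stop processing after $K$ such points have been searched; this caps the worst-case time.
\item Let $y$ be the point of minimum $H$ among everything examined. Return $y$ if $H(y)\le U$, and $\mathsf{NULL}$ otherwise.
\end{itemize}

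\emph{Item 2 (one-sidedness).} If $H_{\mathrm{min}}>U$, then no point satisfies $H\le U$. The routine therefore returns $\mathsf{NULL}$ regardless of its randomness.

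\emph{Item 3 (running time).} The worst-case time is
\[
N\cdot\mathrm{poly}(n)+K\cdot\bigl|\LightBall{\av}{x}{r_\eta(U)}\bigr|\cdot\mathrm{poly}(n).
\]
Every restricted ball of radius $r_\eta(U)$ has the same size, so $\bigl|\LightBall{\av}{x}{r_\eta(U)}\bigr|=|S_\eta(U)|$. The bound $|S_\eta(U)|\ge s_U$ is obtained by repeating Item 3 of the proof of \Cref{th:lip-favorable} with $\theta_\eta$ replaced by $\theta(U)$. That argument used only \Cref{lem:lip-half-light} and \Cref{lem:lip-rounded-binomial}, so it carries over verbatim. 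Choosing the $o(n)$ term in $s_U$ to match that bound, we get
\[
K\,|S_\eta(U)|\lesssim \frac{N t_U}{2^n}\,|S_\eta(U)|\approx t_U\cdot\frac{|S_\eta(U)|}{s_U}.
\]
This last step is the delicate one. The restricted ball can have size up to $2^{nh(\theta/2)}$, which may be much larger than $s_U$, and then this product would not be $2^{(1-c_\eta(U))n+o(n)}$. To avoid this, I will define $s_U$ as the exact, computable value $|S_\eta(U)|=\sum_{j\le r_\eta(U)}\binom{|L_{\av}|}{j}$. With that choice,
\[
N=\tilde O\!\left(\frac{2^n}{|S_\eta(U)|}\right)\le 2^{(1-c^{\mathrm{ball}}_\eta(U))n+o(n)},
\qquad
K\,|S_\eta(U)|=\tilde O(t_U),
\]
so the total is $2^{(1-c_\eta(U))n+o(n)}$. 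This is the same cancellation $\frac{|T|}{|S|}\cdot|S|=|T|$ as in \Cref{th:black-box}.

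\emph{Item 1 (success when $H_{\mathrm{min}}\le U$).} Suppose $H_{\mathrm{min}}\le U<0$, so that $|U|\le|H_{\mathrm{min}}|$.
\begin{itemize}
\item \emph{Successful set.} By \Cref{prop:lip-local}, every $x\in S_\eta(U)$ satisfies
\[
H(x)\le H_{\mathrm{min}}+2\Lambda_{\max}d_{\mathrm{avg}}r_\eta(U)\le H_{\mathrm{min}}+\eta|U|\le U+\eta|U|=(1-\eta)U.
\]
Hence $S_\eta(U)\subseteq T_\eta(U)$. As in Item 2 of \Cref{th:lip-favorable}, $x^\ast\in\LightBall{\av}{x}{r_\eta(U)}$ for every $x\in S_\eta(U)$.
\item \emph{Thin threshold set.} The proof of \Cref{prop:tech-mcdiarmid}, with deviation $(1-\eta)|U|$ in place of $(1-\eta)|H_{\mathrm{min}}|$, gives $|T_\eta(U)|\le t_U$.
\item \emph{Failure probability.} The probability that the sample misses $S_\eta(U)$ is at most $\delta/2$. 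By Markov's inequality, the probability that more than $K$ samples fall in $T_\eta(U)$ is at most $\delta/2$; this is the same computation as in \Cref{thm:generic-top-k}.
\item \emph{Conclusion.} Outside these two events, some sample in $S_\eta(U)$ is searched before the cap is reached, so $x^\ast$ is examined. The returned $y$ is then optimal, with $H(y)=H_{\mathrm{min}}\le U$, so it is indeed returned.
\end{itemize}

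The main obstacle is the time bound in Item 3. The naive route of separately bounding the search-region size by $2^{c^{\mathrm{ball}}_\eta(U) n}$ fails, because $L_{\av}$ may contain nearly all $n$ coordinates. The fix is to calibrate $N$ and $K$ with the exact value $|S_\eta(U)|$, so that the search cost collapses to $|T_\eta(U)|$.
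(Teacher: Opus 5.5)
Your proposal is correct and is essentially the paper's proof. The paper's routine likewise samples $N=\lceil (2^n/V_U)\ln(2/\delta)\rceil$ points using the exact ball size $V_U=|S_\eta(U)|$, keeps those with $H(x)\le(1-\eta)U$, caps the number of searched centers at $\lceil \frac{2}{\delta}N\,T_{\max}(U)/2^n\rceil$ via Markov's inequality, and accepts only if $H(y)\le U$; Item 1 comes from \Cref{prop:lip-local} and the McDiarmid bound with $U$ in place of $H_{\mathrm{min}}$. The only difference is cosmetic: the paper returns $\mathsf{NULL}$ when the cap is exceeded, whereas you stop after $K$ searches, and both versions give the same one-sidedness, success probability and worst-case time.
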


\begin{proof}
Assume first that \(H_{\mathrm{min}}\le U\). Then the same argument as in \Cref{th:lip-favorable} shows that
\[
S_\eta(U)\subseteq T_\eta(U).
\]
Moreover, replacing \(H_{\mathrm{min}}\) by \(U\) in the successful-set and threshold-set bounds from the arity-at-most-\(k\) local-Lipschitz framework gives
\[
V_U=|S_\eta(U)|\ge \frac{2^{c^{\mathrm{ball}}_\eta(U)n}}{\mathrm{poly}(n)}
\qquad\text{and}\qquad
|T_\eta(U)|\le T_{\max}(U)=2^{(1-c^{\mathrm{tail}}_\eta(U))n}.
\]
% Moreover, replacing \(H_{\mathrm{min}}\) by \(U\) in the successful-set and sublevel-set bounds gives
% \[
% V_U=|S_\eta(U)|\ge \frac{2^{c^{\mathrm{ball}}_\eta(U)n}}{\mathrm{poly}(n)}
% \qquad\text{and}\qquad
% |T_\eta(U)|\le T_{\max}(U)=2^{(1-c^{\mathrm{tail}}_\eta(U))n}.
% \]

The routine works as follows.
Sample
\[
N=
\left\lceil
\frac{2^n}{V_U}\ln\frac{2}{\delta}
\right\rceil
\]
points uniformly from \(\{-1,1\}^n\), and let \(\mathcal K\) be the set of sampled points satisfying
\[
H(x)\le (1-\eta)U.
\]
If
\[
|\mathcal K|>
\left\lceil
\frac{2}{\delta}N\frac{T_{\max}(U)}{2^n}
\right\rceil,
\]
return \(\mathsf{NULL}\). Otherwise, exhaustively search
\[
\LightBall{\av}{x}{r_\eta(U)}
\]
for each \(x\in\mathcal K\), and return the minimum-energy point \(y\) found if \(H(y)\le U\), and \(\mathsf{NULL}\) otherwise.

If \(H_{\mathrm{min}}>U\), then no assignment \(y\) satisfies \(H(y)\le U\), so the final acceptance test always fails. Hence the routine deterministically returns \(\mathsf{NULL}\). This proves Item 2.

Now assume \(H_{\mathrm{min}}\le U\). The probability that the sample misses \(S_\eta(U)\) entirely is at most
\[
\left(1-\frac{V_U}{2^n}\right)^N
\le
\exp\!\left(-N\frac{V_U}{2^n}\right)
\le
\frac{\delta}{2}.
\]
Also, since every point of \(\mathcal K\) lies in \(T_\eta(U)\),
\[
\mathbb E[|\mathcal K|]\le N\frac{T_{\max}(U)}{2^n}.
\]
Hence Markov's inequality gives
\[
\Pr\!\left[
|\mathcal K|>
\frac{2}{\delta}N\frac{T_{\max}(U)}{2^n}
\right]
\le
\frac{\delta}{2}.
\]
Therefore, with probability at least \(1-\delta\), the sample contains some \(x\in S_\eta(U)\) and the cap is not exceeded. On this event,
\[
x^\ast\in \LightBall{\av}{x}{r_\eta(U)},
\]
so the search examines \(x^\ast\). Since
\[
H(x^\ast)=H_{\mathrm{min}}\le U,
\]
the final acceptance test passes, and the returned point is an optimum assignment. This proves Item 1.
% Therefore, with probability at least \(1-\delta\), the sample contains some \(x\in S_\eta(U)\) and the cap is not exceeded. On this event,
% \[
% x^\ast\in \LightBall{L_{\le 2d_{\mathrm{avg}}}}{x}{r_\eta(U)},
% \]
% so the search encounters \(x^\ast\). Since
% \[
% H(x^\ast)=H_{\mathrm{min}}\le U,
% \]
% the final acceptance test passes, and the returned point is an optimum assignment. This proves item 1.

For the running-time bound, evaluating the \(N\) sampled points costs \(N\cdot\mathrm{poly}(n)\). Conditional on not aborting, the number of retained centers is at most
\[
\left\lceil
\frac{2}{\delta}N\frac{T_{\max}(U)}{2^n}
\right\rceil,
\]
and each restricted-ball search costs at most \(V_U\cdot\mathrm{poly}(n)\). Therefore the total running time is at most
\[
N\cdot \mathrm{poly}(n)
+
\left\lceil
\frac{2}{\delta}N\frac{T_{\max}(U)}{2^n}
\right\rceil
V_U\cdot \mathrm{poly}(n).
\]
Substituting
\[
N=
\left\lceil
\frac{2^n}{V_U}\ln\frac{2}{\delta}
\right\rceil
\]
and using the bounds on \(V_U\) and \(T_{\max}(U)\) yields Item 3.
\end{proof}

We now give the main result of this appendix.
Note that \(c_\eta^\star\) is exactly the same exponent as the quantity \(c\) in \Cref{th:concrete-maxkcsp}.
\begin{theorem}[Removing the known-optimum assumption in the parameter-derived \texorpdfstring{$(\gamma,\eta)$}{(gamma,eta)} regime]
\label{thm:optimistic-first-local-ball}
Fix \(\eta\in\mathbb Q\cap(0,1)\). Even if \(H_{\mathrm{min}}\) is unknown, there is a randomized algorithm that, with probability at least \(9/10\), outputs an optimum assignment within time
\[
2^{(1-c_\eta^\star)n+o(n)}.
\]
\end{theorem}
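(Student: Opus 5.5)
The plan is to instantiate \Cref{thm:generic-optimistic} with $OPT=H_{\mathrm{min}}$ and the routine $\mathrm{SearchBounded}$ of \Cref{prop:upper-bound-routines}, run with $\delta=1/10$. The main work is choosing a candidate grid $U_R<\cdots<U_1$ in $[-W,0)$ with three properties:
\begin{itemize}
\item[(a)] $R=2^{o(n)}$;
\item[(b)] the first valid candidate $U_{r^\ast}\ge H_{\mathrm{min}}$ satisfies $c_\eta(U_{r^\ast})\ge c_\eta^\star-o(1)$;
\item[(c)] the cost $T(U)=2^{(1-c_\eta(U))n+o(n)}$ is nonincreasing in the stage index $r$.
\end{itemize}

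For (c), note that both $c^{\mathrm{ball}}_\eta(U)$ and $c^{\mathrm{tail}}_\eta(U)$ are nondecreasing functions of $|U|$. This holds for $c^{\mathrm{tail}}_\eta$ because it is quadratic in $|U|$. It holds for $c^{\mathrm{ball}}_\eta$ because its argument $\eta|U|/(\Lambda_{\max}\Sigma)$ lies in $[0,1/2]$, where $h$ is increasing; the bound $1/2$ uses $|U|\le W\le\Sigma$ and $\Lambda_{\max}\ge 2$ when $k_j\ge 1$, and in any case one can replace $h$ by its monotone envelope on $[0,1]$. Hence $c_\eta(U)$ is nonincreasing in $U$. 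Ordering the candidates from most negative ($U_R$) to least negative ($U_1$), the exponent $1-c_\eta(U_r)$ is nondecreasing as $r$ decreases, so the cost $T(U_r)$ is nonincreasing in $r$, as required. The $2^{o(n)}$ slack in item~3 of \Cref{prop:upper-bound-routines} must be uniform in $U$. I will check this by noting that the polynomial factors there come only from \Cref{lem:lip-rounded-binomial} and from $\ln(2/\delta)$, neither of which depends on $U$.

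For (a) and (b), I would use a geometric-plus-additive grid. Let $U_r$ range over $-W\cdot j/n^2$ for $j=1,\dots,n^2$, so $R=n^2$. The true $H_{\mathrm{min}}\in[-W,0]$ then has a valid candidate $U_{r^\ast}\ge H_{\mathrm{min}}$ with $|U_{r^\ast}|\ge|H_{\mathrm{min}}|-W/n^2$. Two degenerate cases need separate handling:
\begin{itemize}
\item If $|H_{\mathrm{min}}|<W/n^2$, the target exponent $c_\eta^\star$ is itself $O(1/n^2)$ or $O(h(1/n^2))=o(1)$. Brute force in time $2^{n}\,\mathrm{poly}(n)=2^{(1-c_\eta^\star)n+o(n)}$ suffices, run as a final fallback stage. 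A fallback stage only costs if all earlier stages fail, which happens with probability at most $1/10$, so it does not hurt the high-probability time bound.
\item If $H_{\mathrm{min}}=0$, we likewise fall back to brute force.
\end{itemize}
Otherwise, the perturbation $|U_{r^\ast}|/|H_{\mathrm{min}}|$ differs from $1$ by $O(W/(n^2|H_{\mathrm{min}}|))$. By continuity of $h$ and of $\xi\mapsto\xi^2$ on $[0,1]$, this gives $c_\eta(U_{r^\ast})\ge c_\eta^\star-o(1)$. Indeed, $|c_\eta(U)-c_\eta(U')|\le \omega(|U-U'|/\Sigma)$ with $\omega$ the modulus of continuity of $h$, and $|U-U'|/\Sigma\le 1/n^2$, so the loss in the exponent is $o(1)\cdot n=o(n)$. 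The candidates $U_r$ are explicit rationals since $W$ is rational, and rational $\eta$ makes the tests in \Cref{prop:upper-bound-routines} exactly computable.

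Finally, \Cref{thm:generic-optimistic} gives, with probability at least $9/10$, halting at stage $r^\ast$ with an optimum and total time $O(R\cdot T(U_{r^\ast}))=2^{(1-c_\eta^\star)n+o(n)}$. I expect the main obstacle to be the uniformity issues: the $o(n)$ terms must be uniform over all candidate $U$, and the continuity loss at $U_{r^\ast}$ needs a uniform bound. The first thing I would try is to make the polynomial factors in \Cref{prop:upper-bound-routines} fully explicit, and to use the grid spacing $W/n^2$ so the modulus-of-continuity estimate becomes a clean $h(1/n^2)n=O(\log n/n)=o(1)$ per unit $n$.
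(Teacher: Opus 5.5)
Your proposal is correct and follows essentially the paper's argument: you instantiate \Cref{thm:generic-optimistic} with $\mathrm{SearchBounded}(\cdot,1/10)$, use monotonicity of $c_\eta(U)$ in $|U|$, handle the tiny-$|H_{\mathrm{min}}|$ case by brute force, and absorb the $O(\log n/n)$ exponent loss at $U_{r^\ast}$. Your grid $-Wj/n^2$ is only a finer variant of the paper's radius-aligned grid $U_r=-r/B_\eta$, and your explicit brute-force fallback stage (plus the check that the $o(n)$ slack is uniform in $U$) is more careful than the paper, whose algorithm cannot itself detect when $r^\ast=0$.

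One slip: $\Lambda_{\max}\ge 2$ is false in general (all constraints $3$-ary with $s_j=1$ give $\Lambda_{\max}=8/7$). However, $\Lambda_j k_j\ge 2$ holds for every $j$, so $\Lambda_{\max}\Sigma\ge 2W$, and the entropy argument $\eta|U|/(\Lambda_{\max}\Sigma)\le \eta/2$ does stay in $[0,1/2]$. This makes the ``monotone envelope'' hedge unnecessary.
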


\begin{proof}
Let
\[
B_\eta=\frac{\eta}{2\Lambda_{\max} d_{\mathrm{avg}}},
\qquad
R_\eta=\lfloor B_\eta W\rfloor,
\qquad
U_r=-\frac{r}{B_\eta}\quad (r=1,\dots,R_\eta),
\]
and
\[
r^\ast=\lfloor B_\eta |H_{\mathrm{min}}|\rfloor.
\]

If \(r^\ast=0\), then
\[
|H_{\mathrm{min}}|<\frac{1}{B_\eta}=O(W/n).
\]
In this case \(c_\eta^\star n=O(\log n)\), so exhaustive search in time \(2^n\) is already
\[
2^{(1-c_\eta^\star)n+o(n)}.
\]

Assume from now on that \(r^\ast\ge 1\). We apply \Cref{thm:generic-optimistic} with
\[
OPT=H_{\mathrm{min}}
\]
and stage-\(r\) routine \(\mathrm{SearchBounded}(U_r,1/10)\). By \Cref{prop:upper-bound-routines}, the one-sidedness condition holds.

Since \(|U_r|=r/B_\eta\) increases with \(r\), both
\[
c^{\mathrm{ball}}_\eta(U_r)
\qquad\text{and}\qquad
c^{\mathrm{tail}}_\eta(U_r)
\]
are nondecreasing in \(r\). Hence so is
\[
c_\eta(U_r)=\min\{c^{\mathrm{ball}}_\eta(U_r),c^{\mathrm{tail}}_\eta(U_r)\},
\]
which means that the running-time bound
\[
2^{(1-c_\eta(U_r))n+o(n)}
\]
is nonincreasing in \(r\). This is exactly the monotone-cost condition.
% Since \(|U_r|=r/B_\eta\) increases with \(r\), both
% \[
% c^{\mathrm{ball}}_\eta(U_r)
% \qquad\text{and}\qquad
% c^{\mathrm{tail}}_\eta(U_r)
% \]
% are nondecreasing in \(r\). Hence so is
% \[
% c_\eta(U_r)=\min\{c^{\mathrm{ball}}_\eta(U_r),c^{\mathrm{tail}}_\eta(U_r)\},
% \]
% which means that the running-time bound
% \[
% O^*\!\left(2^{(1-c_\eta(U_r))n}\right)
% \]
% is nonincreasing in \(r\). This is the monotone-cost condition.

By definition of \(r^\ast\),
\[
U_r<H_{\mathrm{min}}
\qquad (r>r^\ast),
\qquad
U_{r^\ast}\ge H_{\mathrm{min}}.
\]
Thus the valid-regime condition also holds.

Applying \Cref{thm:generic-optimistic}, we conclude that, with probability at least \(9/10\), the algorithm halts at stage \(r^\ast\), outputs an optimum assignment, and has total running time
\[
O\!\left(
R_\eta\cdot 2^{(1-c_\eta(U_{r^\ast}))n}\,\mathrm{poly}(n)
\right).
\]
Since \(R_\eta=O(n)\), this is
\[
2^{(1-c_\eta(U_{r^\ast}))n+o(n)}.
\]

It remains to compare \(c_\eta(U_{r^\ast})\) with \(c_\eta^\star=c_\eta(H_{\mathrm{min}})\). Since
\[
r^\ast=\lfloor B_\eta |H_{\mathrm{min}}| \rfloor,
\]
we have
\[
0\le U_{r^\ast}-H_{\mathrm{min}}<\frac{1}{B_\eta}.
\]
Now
\[
B_\eta=\frac{\eta}{2\Lambda_{\max}d_{\mathrm{avg}}}
\qquad\text{and}\qquad
d_{\mathrm{avg}}=\frac{\Sigma}{n},
\]
so
\[
\frac{1}{B_\eta}
=
\frac{2\Lambda_{\max}\Sigma}{\eta n}
=
O(\Sigma/n).
\]
Therefore
\[
\left|
\frac{|U_{r^\ast}|}{\Sigma}
-
\frac{|H_{\mathrm{min}}|}{\Sigma}
\right|
=
O(1/n).
\]
The tail exponent depends quadratically on \(|U|/\Sigma\), so it changes by \(O(1/n)\). The ball exponent has the form
\[
\frac12\,h(a|U|/\Sigma)
\]
for a fixed constant \(a>0\), and since \(h'(x)=O(\log(1/x))\) near \(x=0\), replacing \(|H_{\mathrm{min}}|/\Sigma\) by a quantity within \(O(1/n)\) changes this exponent by at most
\[
O\!\left(\frac{\log n}{n}\right).
\]
Hence
\[
c_\eta(U_{r^\ast})
=
c_\eta^\star+O\!\left(\frac{\log n}{n}\right),
\]
which is absorbed into the \(o(n)\) term in the exponent. This proves the theorem.
% It remains to compare \(c_\eta(U_{r^\ast})\) with \(c_\eta^\star=c_\eta(H_{\mathrm{min}})\). Since
% \[
% r^\ast=\lfloor B_\eta |H_{\mathrm{min}}| \rfloor,
% \]
% we have
% \[
% 0\le U_{r^\ast}-H_{\mathrm{min}}<\frac{1}{B_\eta}=O(W/n),
% \]
% and therefore
% \[
% \left|
% \frac{|U_{r^\ast}|}{W}
% -
% \frac{|H_{\mathrm{min}}|}{W}
% \right|
% =
% O(1/n).
% \]
% The tail exponent depends quadratically on \(|U|/W\), so it changes by \(O(1/n)\). The ball exponent has the form
% \[
% \frac12\,h(a|U|/W)
% \]
% for a fixed constant \(a>0\), and since \(h'(x)=O(\log(1/x))\) near \(x=0\), replacing \(|H_{\mathrm{min}}|/W\) by a quantity within \(O(1/n)\) changes this exponent by at most
% \[
% O\!\left(\frac{\log n}{n}\right).
% \]
% Hence
% \[
% c_\eta(U_{r^\ast})
% =
% c_\eta^\star+O\!\left(\frac{\log n}{n}\right),
% \]
% which is absorbed into the \(o(n)\) term in the exponent. This proves the theorem.
\end{proof}
% ==================================================
% Section D: Notation guide for Sections 2--4
% ==================================================
\section{Notation Guide for Sections~2--4}\label{sec:notation-guide}

This appendix collects the main notation used in Sections~\ref{sec:framework}, \ref{sec:favorable-sets}, and \ref{sec:concrete-corollaries}. It is intended only as a quick reference: definitions, assumptions, and proofs are given in the main text. Some symbols, such as \(T_\eta\), \(S_\eta\), and \(\kappa_\eta\), are reused with section-specific refinements; in each case we indicate the relevant section.

\begingroup
\setlength{\LTleft}{0pt}
\setlength{\LTright}{0pt}
\renewcommand{\arraystretch}{1.18}
\rowcolors{2}{gray!8}{white}

% ==================================================
\subsection*{Section~\ref{sec:framework}: A Two-Parameter Framework for Exact Optimization}

\begin{longtable}{L{0.18\textwidth}L{0.56\textwidth}L{0.18\textwidth}}
\rowcolor{gray!22}
\textbf{Symbol} & \textbf{Meaning} & \textbf{First used} \\
\endfirsthead
\rowcolor{gray!22}
\textbf{Symbol} & \textbf{Meaning} & \textbf{First used} \\
\endhead

\(\Omega_n\) &
Ambient search space in the abstract framework. In Section~\ref{sec:framework}, the black-box theorem later specializes to
\[
\Omega_n=\{-1,1\}^n.
\]
& Section~\ref{sec:framework} \\

\(T\subseteq \Omega_n\) &
Conditioning set in the abstract theorem. &
\Cref{th:abstract-conditioning-search} \\

\(S\subseteq T\) &
Successful subset of the conditioning set: if the sampled point lies in \(S\), the attached search region is guaranteed to contain a target point. &
\Cref{th:abstract-conditioning-search} \\

\(\mu\) &
Sampling distribution supported on \(T\). &
\Cref{th:abstract-conditioning-search} \\

\(f:\Omega_n\to\{0,1\}\) &
Polynomial-time computable target predicate; the goal is to find some \(y\) with \(f(y)=1\). &
\Cref{th:abstract-conditioning-search} \\

\(\mathcal N(x)\) &
Search region attached to a sampled point \(x\). &
\Cref{th:abstract-conditioning-search} \\

\(x^\ast\) &
A target point satisfying \(f(x^\ast)=1\); every \(x\in S\) has \(x^\ast\in \mathcal N(x)\). &
\Cref{th:abstract-conditioning-search} \\

\(\tau_{\mathrm{samp}}(n)\) &
Expected running time of one call to the sampler for \(\mu\). &
\Cref{th:abstract-conditioning-search} \\

\(\tau_{\mathrm{search}}(n)\) &
Worst-case time to exhaustively search \(\mathcal N(x)\). &
\Cref{th:abstract-conditioning-search} \\

\(\beta_n\) &
Lower bound on the successful mass:
\[
\mu(S)\ge \beta_n.
\]
& \Cref{th:abstract-conditioning-search} \\

\(T_\eta\) &
Near-optimal threshold set:
\[
T_\eta=\{x:H(x)\le (1-\eta)H_{\mathrm{min}}\}.
\]
& Section~\ref{sec:framework} \\

\(S_\eta\subseteq T_\eta\) &
Successful subset used in the black-box theorem. &
Section~\ref{sec:framework} \\

\(\mathcal N_\eta(x)\) &
Search region attached to \(x\in T_\eta\) in the black-box theorem. &
Section~\ref{sec:framework} \\

\(\gamma_\eta\) &
Threshold exponent defined through an upper bound of the form
\[
|T_\eta|\le 2^{(1-\gamma_\eta)n}.
\]
& Section~\ref{sec:framework} \\

\(\kappa_\eta\) &
Successful-set exponent defined through a lower bound of the form
\[
|S_\eta|\ge 2^{\kappa_\eta n-o(n)}.
\]
& Section~\ref{sec:framework} \\

\(c_\eta\) &
Resulting exact exponent from the black-box theorem:
\[
c_\eta=\min\{\gamma_\eta,\kappa_\eta\}.
\]
& \Cref{th:black-box} \\
\end{longtable}

% ==================================================
\subsection*{Section~\ref{sec:favorable-sets}: Analysis of Successful Sets}

\begin{longtable}{L{0.18\textwidth}L{0.56\textwidth}L{0.18\textwidth}}
\rowcolor{gray!22}
\textbf{Symbol} & \textbf{Meaning} & \textbf{First used} \\
\endfirsthead
\rowcolor{gray!22}
\textbf{Symbol} & \textbf{Meaning} & \textbf{First used} \\
\endhead

\(h(t)\) &
Binary entropy function:
\[
h(t)=-t\log_2 t-(1-t)\log_2(1-t)
\qquad (0\le t\le 1).
\]
& Section~\ref{sec:favorable-sets} \\

\(\kappa_\eta^{\mathrm{ns}}\) &
Successful-set exponent in the correlated-pair regime:
\[
\kappa_\eta^{\mathrm{ns}}=h(q_\eta).
\]
& Section~\ref{sec:favorable-sets} \\

\(\kappa_\eta^{\mathrm{lip}}\) &
Successful-set exponent in the local-Lipschitz regime:
\[
\kappa_\eta^{\mathrm{lip}}=\frac12\,h(\theta_\eta).
\]
& Section~\ref{sec:favorable-sets} \\

\multicolumn{3}{l}{\textit{Correlated-pair / MAX-E\(k\)-LIN2 regime}} \\

\(\mathcal F\) &
Family of \(k\)-subsets indexing the degree-\(k\) monomials of
\[
H(x)=\sum_{S\in\mathcal F} c_S\prod_{i\in S}x_i.
\]
& \Cref{subsec:noise-stability-analysis} \\

\(c_S\) &
Coefficient of the monomial indexed by \(S\in\mathcal F\). &
\Cref{subsec:noise-stability-analysis} \\

\(t=(t_1,\dots,t_n)\) &
Random sign vector used to generate a correlated copy of \(x^\ast\). &
\Cref{prop:ns-contraction} \\

\(q\) &
Bit-flip probability in the correlated-pair construction:
\[
\Pr[t_i=-1]=q,\qquad \Pr[t_i=1]=1-q.
\]
& \Cref{prop:ns-contraction} \\

\(\rho\) &
Correlation parameter:
\[
\rho=\mathbb E[t_i]=1-2q.
\]
& \Cref{prop:ns-contraction} \\

\(X=x^\ast\odot t\) &
\(\rho\)-correlated partner of the optimum assignment \(x^\ast\). &
\Cref{prop:ns-contraction} \\

\(q_{\eta,n}\) &
Finite-\(n\) flip rate:
\[
q_{\eta,n}=\frac{1-\left(1-\eta+\frac{\eta}{n}\right)^{1/k}}{2}.
\]
& \Cref{subsec:noise-stability-analysis} \\

\(q_\eta\) &
Limiting flip rate:
\[
q_\eta=\frac{1-(1-\eta)^{1/k}}{2}.
\]
& \Cref{subsec:noise-stability-analysis} \\

\(\wt(t)\) &
Hamming weight of the \(-1\) coordinates of \(t\):
\[
\wt(t)=|\{i\in[n]:t_i=-1\}|.
\]
& \Cref{subsec:noise-stability-analysis} \\

\(A_\eta\) &
Typical Hamming shell:
\[
A_\eta=
\Bigl\{t\in\{-1,1\}^n:
|\wt(t)-q_{\eta,n}n|\le n^{2/3}
\Bigr\}.
\]
& \Cref{subsec:noise-stability-analysis} \\

\(S_\eta^{\mathrm{ns}}\) &
Successful set in the correlated-pair regime:
\[
S_\eta^{\mathrm{ns}}
=
\{x^\ast\odot t:\ t\in A_\eta,\ H(x^\ast\odot t)\le (1-\eta)H_{\mathrm{min}}\}.
\]
& \Cref{subsec:noise-stability-analysis} \\

\(r_\eta^{\mathrm{ns}}\) &
Search radius in the correlated-pair regime:
\[
r_\eta^{\mathrm{ns}}=\left\lceil q_\eta n+2n^{2/3}\right\rceil.
\]
& \Cref{subsec:noise-stability-analysis} \\

\multicolumn{3}{l}{\textit{Local-Lipschitz / weighted MAX-\(k\)-CSP regime}} \\

\(\vars(j)\) &
Set of variables involved in constraint \(j\). &
\Cref{subsec:local-lipschitz-analysis} \\

\(k_j\) &
Arity of constraint \(j\):
\[
k_j=|\vars(j)|\le k.
\]
& \Cref{subsec:local-lipschitz-analysis} \\

\(w_j\) &
Positive weight of constraint \(j\). &
\Cref{subsec:local-lipschitz-analysis} \\

\(P_j\) &
Local predicate of constraint \(j\). &
\Cref{subsec:local-lipschitz-analysis} \\

\(s_j\) &
Number of satisfying local assignments for \(P_j\):
\[
s_j=|P_j^{-1}(1)|.
\]
& \Cref{subsec:local-lipschitz-analysis} \\

\(C_j(x)\) &
Centered contribution of constraint \(j\) to the objective \(H(x)\). &
\Cref{subsec:local-lipschitz-analysis} \\

\(W\) &
Total constraint weight:
\[
W=\sum_{j=1}^m w_j.
\]
& \Cref{subsec:local-lipschitz-analysis} \\

\(d_i\) &
Weighted degree of variable \(i\):
\[
d_i=\sum_{j:\,i\in\vars(j)} w_j.
\]
& \Cref{subsec:local-lipschitz-analysis} \\

\(\Sigma\) &
Total incident weight:
\[
\Sigma=\sum_{i=1}^n d_i=\sum_{j=1}^m k_j w_j.
\]
& \Cref{subsec:local-lipschitz-analysis} \\

\(d_{\mathrm{avg}}\) &
Average weighted degree:
\[
d_{\mathrm{avg}}=\frac{\Sigma}{n}.
\]
& \Cref{subsec:local-lipschitz-analysis} \\

\(D\) &
Irregularity parameter:
\[
D=\frac{n\sum_{i=1}^n d_i^2}{\Sigma^2}.
\]
& \Cref{subsec:local-lipschitz-analysis} \\

\(\Lambda_j\) &
Local Lipschitz factor of constraint \(j\):
\[
\Lambda_j=\frac{2^{k_j}}{2^{k_j}-s_j}.
\]
& \Cref{subsec:local-lipschitz-analysis} \\

\(\Lambda_{\max}\) &
Worst-case Lipschitz factor:
\[
\Lambda_{\max}=\max_{j\in[m]}\Lambda_j.
\]
& \Cref{subsec:local-lipschitz-analysis} \\

\(\LightBall{L}{x}{r}\) &
Restricted Hamming ball:
\[
\LightBall{L}{x}{r}
=
\{y:\{i:x_i\neq y_i\}\subseteq L,\ d_H(x,y)\le r\}.
\]
& \Cref{subsec:local-lipschitz-analysis} \\

\(L_{\av}\) &
Set of light variables:
\[
L_{\av}=\{i\in[n]:d_i\le 2d_{\mathrm{avg}}\}.
\]
& \Cref{subsec:local-lipschitz-analysis} \\

\(\theta_\eta\) &
Normalized local-ball parameter:
\[
\theta_\eta
=
\frac{\eta |H_{\mathrm{min}}|}{\Lambda_{\max}\Sigma}.
\]
& \Cref{subsec:local-lipschitz-analysis} \\

\(r_\eta^{\mathrm{lip}}\) &
Search radius in the local-Lipschitz regime:
\[
r_\eta^{\mathrm{lip}}
=
\left\lfloor
\frac{\eta |H_{\mathrm{min}}|}{2\Lambda_{\max} d_{\mathrm{avg}}}
\right\rfloor
=
\left\lfloor \frac{\theta_\eta n}{2}\right\rfloor.
\]
& \Cref{subsec:local-lipschitz-analysis} \\

\(S_\eta^{\mathrm{lip}}\) &
Successful set in the local-Lipschitz regime:
\[
S_\eta^{\mathrm{lip}}
=
\LightBall{\av}{x^\ast}{r_\eta^{\mathrm{lip}}}.
\]
& \Cref{subsec:local-lipschitz-analysis} \\
\end{longtable}

% ==================================================
\subsection*{Section~\ref{sec:concrete-corollaries}: Proof of the Main Results}

\begin{longtable}{L{0.18\textwidth}L{0.56\textwidth}L{0.18\textwidth}}
\rowcolor{gray!22}
\textbf{Symbol} & \textbf{Meaning} & \textbf{First used} \\
\endfirsthead
\rowcolor{gray!22}
\textbf{Symbol} & \textbf{Meaning} & \textbf{First used} \\
\endhead

\(\gamma\) &
Given threshold exponent in Case~1:
\[
|T_\eta|\le 2^{(1-\gamma)n}.
\]
& \Cref{subsec:concrete-eklin2} \\

\(\gamma_\eta\) &
Derived threshold exponent in Case~2. In the weighted MAX-\(k\)-CSP setting of arity at most \(k\),
\[
\gamma_\eta
=
\frac{2(1-\eta)^2}{\ln 2}\cdot
\frac{1}{\Lambda_{\max}^2 D}
\left(\frac{|H_{\mathrm{min}}|}{\Sigma}\right)^2.
\]
& \Cref{prop:tech-mcdiarmid} \\

\(\theta_\eta\) &
Local-ball parameter imported from Section~\ref{sec:favorable-sets}; it determines the successful-set exponent
\[
\kappa_\eta^{\mathrm{lip}}=\frac12 h(\theta_\eta).
\]
& \Cref{th:concrete-maxkcsp} \\

\(q_\eta\) &
Correlated-pair parameter imported from Section~\ref{sec:favorable-sets}; it determines the successful-set exponent
\[
\kappa_\eta^{\mathrm{ns}}=h(q_\eta).
\]
& \Cref{th:concrete-eklin2} \\

\(\Delta\) &
Normalized optimum scale used in the Introduction for comparison with \cite{DalzellPCB23}:
\[
\Delta=\frac{|H_{\mathrm{min}}|}{m}
\qquad\text{(unweighted case)}.
\]
& Introduction \\

\(T_\eta\) &
Near-optimal threshold set used in the concrete corollaries:
\[
T_\eta=\{x:H(x)\le (1-\eta)H_{\mathrm{min}}\}.
\]
& Section~\ref{sec:concrete-corollaries} \\

\(S_\eta^{\mathrm{ns}}\), \(S_\eta^{\mathrm{lip}}\) &
Successful sets supplied by Section~\ref{sec:favorable-sets} and used as inputs to the black-box theorem. &
Section~\ref{sec:concrete-corollaries} \\
\end{longtable}

\endgroup

\end{document}